\theoremstyle{definition}
\newtheorem{definition}{Definition}
\newtheorem{observation}{Observation}
\theoremstyle{plain}
\newtheorem{theorem}{Theorem}[section]
\newtheorem{lemma}[theorem]{Lemma}
\newtheorem{claim}{Claim}
\theoremstyle{remark}
\newtheorem{remark}{Remark}
\newtheorem*{remark*}{Remark}
\Crefname{claim}{Claim}{Claims}
\newcommand{\vallow}{\underline{\alpha}}
\newcommand{\valhigh}{\overline{\alpha}}
\newcommand{\val}{\alpha}
\newcommand{\valstar}{\alpha^*}
\newcommand{\dval}{d\alpha}
\newcommand{\equitwo}{\alpha_2^=}
\newcommand{\equithree}{\alpha_3^=}
\newcommand{\equifour}{\alpha_4^=}
\newcommand{\nquant}{\mathfrak{m}}
\newcommand{\flabel}{\text{first-label}}
\newcommand{\slabel}{\text{second-label}}
\newcommand{\boost}{\text{boost}}
\newcommand{\eps}{\varepsilon}
\DeclareMathOperator{\odeg}{out-deg}
\DeclareMathOperator{\ideg}{in-deg}
\DeclareMathOperator{\outgoing}{out}
\DeclareMathOperator{\incoming}{in}
\def\fp/{\textup{\textsf{FP}}}
\def\p/{\textup{\textsf{P}}}
\def\np/{\textup{\textsf{NP}}}
\def\conp/{\textup{\textsf{co-NP}}}
\def\fnp/{\textup{\textsf{FNP}}}
\def\tfnp/{\textup{\textsf{TFNP}}}
\def\ptfnp/{\textup{\textsf{PTFNP}}}
\def\ppa/{\textup{\textsf{PPA}}}
\def\ppad/{\textup{\textsf{PPAD}}}
\def\ppads/{\textup{\textsf{PPADS}}}
\def\ppp/{\textup{\textsf{PPP}}}
\def\pwpp/{\textup{\textsf{PWPP}}}
\def\pls/{\textup{\textsf{PLS}}}
\def\cls/{\textup{\textsf{CLS}}}
\def\ppadpls/{\textup{$\textsf{PPAD} \cap \textsf{PLS}$}}
\def\ppapls/{\textup{$\textsf{PPA} \cap \textsf{PLS}$}}
\def\eopl/{\textup{\textsf{EOPL}}}
\def\sopl/{\textup{\textsf{SOPL}}}
\def\ueopl/{\textup{\textsf{UEOPL}}}
\def\fixp/{\textup{\textsf{FIXP}}}
\def\bu/{\textup{\textsf{BU}}}
\def\bbu/{\textup{\textsf{BBU}}}
\def\linearfixp/{\textup{\textsf{Linear-FIXP}}}
\def\pspace/{\textup{\textsf{PSPACE}}}
\title{Envy-Free Cake-Cutting for Four Agents}
\author{
\begin{tabular}{cc}
& \\
\textbf{Alexandros Hollender} & \textbf{Aviad Rubinstein}\\
\small{University of Oxford, UK} & \small{Stanford University, USA}\\
\href{mailto:alexandros.hollender@cs.ox.ac.uk}{\small{\texttt{alexandros.hollender@cs.ox.ac.uk}}} & \href{mailto:aviad@cs.stanford.edu}{\small{\texttt{aviad@cs.stanford.edu}}}\\
& \\
\end{tabular}
}
\date{}
\begin{document}

\maketitle

\setcounter{page}{0}
\thispagestyle{empty}

\begin{abstract}
In the envy-free cake-cutting problem we are given a resource, usually called a cake and represented as the $[0,1]$ interval, and a set of $n$ agents with heterogeneous preferences over pieces of the cake. The goal is to divide the cake among the $n$ agents such that no agent is envious of any other agent. Even under a very general preferences model, this fundamental fair division problem is known to always admit an exact solution where each agent obtains a connected piece of the cake; we study the complexity of {\em finding} an approximate solution, i.e., a connected $\varepsilon$-envy-free allocation.

For {\em monotone} valuations of cake pieces, Deng, Qi, and Saberi (2012) gave an efficient ($\poly(\log(1/\varepsilon))$ queries) algorithm for three agents and posed the open problem of four (or more) monotone agents. Even for the special case of {\em additive} valuations, Brânzei and Nisan (2022) conjectured an $\Omega(1/\eps)$ lower bound on the number of queries for four agents. We provide the first efficient algorithm for finding a connected $\varepsilon$-envy-free allocation with four monotone agents.

We also prove that as soon as valuations are allowed to be {\em non-monotone}, the problem becomes hard: it becomes \ppad/-hard, requires $\poly(1/\varepsilon)$ queries in the black-box model, and even $\poly(1/\varepsilon)$ {\em communication complexity}. This constitutes, to the best of our knowledge, the first intractability result for any version of the cake-cutting problem in the communication complexity model.
\end{abstract}

\newpage

\section{Introduction}

The field of fair division studies ways of dividing and allocating a resource among different agents, while ensuring that the division is \emph{fair} in some sense. The perhaps most famous, and certainly most well-studied problem in fair division is \emph{cake-cutting}. In this problem, first introduced by Steinhaus \cite{Steinhaus1948-fair-division}, the resource is a ``cake''. The cake serves as a metaphor for modeling settings where the resource is divisible -- namely it can be divided arbitrarily -- and heterogeneous -- different parts of the resource have different attributes and can thus be valued differently by different agents. Examples where the resource can be modeled as a cake, include, among other things, division of land, time, and other natural resources. The problem has been extensively studied in mathematics and economics \cite{brams1996fair,robertson1998cake} and more recently in computer science \cite{procaccia2013cake}.

More formally, the cake is usually modeled as the interval $[0,1]$, and there are $n$ agents, each with their own preferences over different pieces of the cake. The preferences of each agent $i$ are represented by a valuation function $v_i$ which assigns a value to each piece of the cake. The goal is to divide the cake into $n$ pieces $A_1, \dots, A_n$ and assign one piece to each agent in a \emph{fair} manner. Different notions of fairness have been proposed and studied. Steinhaus \cite{Steinhaus1948-fair-division}, with the help of Banach and Knaster, originally studied a fairness notion known as \emph{proportionality}. An allocation where agent $i$ is assigned piece $A_i$ is proportional, if $v_i(A_i) \geq v_i([0,1])/n$ for all agents $i$. While proportionality ensures that each agent obtains a fair share of the total cake, some agent might be unhappy, because they prefer a piece $A_j$ allocated to some other agent to their own piece $A_i$.

A stronger notion of fairness which tries to address this is \emph{envy-freeness}. An allocation is envy-free, if no agent is envious of another agent's piece. Formally, we require that $v_i(A_i) \geq v_i(A_j)$ for all agents $i,j$. The problem of envy-free cake-cutting was popularized by Gamow and Stern \cite{gamow1958puzzle} and has since been studied extensively. Stromquist \cite{Stromquist80-existence} and Woodall \cite{Woodall80-existence} have shown that an envy-free allocation always exists under some mild continuity assumptions on the valuations. Moreover, this envy-free allocation is also \emph{connected} (or \emph{contiguous}), meaning that every agent is allocated a piece that is a single interval. In other words, the cake is only cut at $n-1$ positions.

The existence proofs of Stromquist and Woodall, as well as the more recent proof of Su \cite{Su99-rental-harmony}, all rely on tools such as Brouwer's fixed point theorem, or Sperner's lemma. These powerful tools prove the existence of envy-free allocations, but they do not yield an efficient algorithm for finding one. Given the practical importance of fair division problems such as cake-cutting, this is a serious drawback to the existence result. Mathematicians and economists have tried to address this by proposing so-called \emph{moving-knife} protocols to solve the problem. Unfortunately, the definition of a moving-knife protocol is mostly informal and thus not well-suited for theoretical investigation (see \cite{brams1997moving} for a discussion). In more recent years, research in theoretical computer science has started studying these questions in formal models of computation. However, despite extensive efforts on the envy-free cake-cutting problem, its complexity is still poorly understood.

In this paper we study the envy-free cake-cutting problem with \emph{connected pieces}. A simple algorithm is known for two agents: the \emph{cut-and-choose} protocol. The first agent cuts the cake in two equal parts, according to its own valuation, and then the second agent picks its favorite piece, and the first agent receives the remaining piece. This simple algorithm can be implemented in the standard Robertson-Webb query model \cite{WoegingerS07-cake}, and yields an envy-free allocation. For three players or more, Stromquist \cite{stromquist2008envy} has shown that no algorithm using Robertson-Webb queries exists for finding envy-free allocations. In order to bypass this impossibility result, Brânzei and Nisan \cite{BranzeiN22-cake-query} propose to study the complexity of finding approximate envy-free allocations instead. An allocation is $\eps$-envy-free, if $v_i(A_i) \geq v_i(A_j) - \eps$ for all agents $i,j$. In this setting, an $\eps$-envy-free allocation can be found by brute force using $\poly(1/\eps)$ queries when the number of agents is constant. An \emph{efficient} algorithm is one that instead only uses $\poly(\log(1/\eps))$ queries.

With this relaxation in place, the problem with three agents can be solved efficiently~\cite{DQS12}. As shown by Brânzei and Nisan~\cite{BranzeiN22-cake-query}, moving-knife algorithms due to Barbanel and Brams \cite{barbanel2004cake} and Stromquist~\cite{Stromquist80-existence} can also be efficiently simulated in the query model. All of these algorithms require that the valuations be \emph{monotone}, a standard assumption in many works on the topic. A valuation function $v_i$ is monotone if $v_i(A) \geq v_i(B)$, whenever $A$ is a superset of $B$.

For four agents, the problem has remained open, even for moving-knife procedures. Barbanel and Brams \cite{barbanel2004cake} provided a moving-knife procedure for four agents but using 5 cuts, improving upon an existing result of Brams et al~\cite{brams1997moving} which used $11$ cuts, and asked whether the minimal number of cuts could be achieved. The problem of finding an algorithm for connected $\eps$-envy-free cake-cutting with four agents was explicitly posed by Deng et al.~\cite{DQS12}. More recently it was conjectured by Brânzei and Nisan~\cite{BranzeiN22-cake-query} that such an algorithm might not exist:

\begin{quote}
``We conjecture that unlike equitability, which remains logarithmic in $1/\eps$ for any number of players, computing a contiguous $\eps$-envy-free allocation for $n = 4$ players [...] will require $\Omega(1/\eps)$ queries.''
\end{quote}

\paragraph*{Our contribution.} Our first contribution is to disprove this conjecture by providing an efficient algorithm that finds an $\eps$-envy-free allocation using $O(\log^3(1/\eps))$ value queries.\footnote{The conjecture of Brânzei and Nisan~\cite{BranzeiN22-cake-query} was stated in a slightly different model (compared to the one in \cite{DQS12}) where one does not assume any upper bound on the Lipschitz-constant of the valuations, but instead allows the added power of cut queries from the Robertson-Webb model (as well as restricting attention to additive valuations). We show that a modification of our algorithm also applies to their model, thus disproving their conjecture; see \cref{sec:RW-algo}.} As for existing approaches for three agents \cite{DQS12,Stromquist80-existence,barbanel2004cake,BranzeiN22-cake-query}, our algorithm also relies on the monotonicity assumption.

In the second part of our work, we investigate whether monotonicity is necessary for obtaining efficient algorithms. We prove that this is indeed the case, in a very strong sense. Namely, we show that the \emph{communication complexity} of finding an $\eps$-envy-free allocation with four \emph{non-monotone} agents is $\Omega(\poly(1/\eps))$. To the best of our knowledge, this is the first intractability result for any version of the cake-cutting problem in the communication model. For the case of agents with identical non-monotone valuations, our reduction also yields an $\Omega(\poly(1/\eps))$ query lower bound, as well as a \ppad/-hardness result in the standard Turing machine model.

Our hardness results improve upon existing lower bounds by Deng et al.~\cite{DQS12} in two ways:
\begin{itemize}
    \item Our lower bounds apply to valuation functions, whereas the lower bounds of Deng et al.\ only apply to a much more general class of preference functions, where the value of an agent for a piece can depend on how the whole cake is divided.
    \item Whereas their lower bounds apply only to the query and Turing machine models, we show hardness in the communication model.
\end{itemize}
We note that the communication model was recently formalized and studied by Brânzei and Nisan~\cite{BranzeiN19-cake-communication}.

\paragraph*{Open problems and future directions.}

In \cref{tab:query,tab:communication,tab:turing} we summarize the current state-of-the-art results for connected $\eps$-envy-free cake-cutting, including our results, in three natural models of computation for this problem. In all of these tables, upper bounds appearing in a certain cell also continue to apply for any cell to the left or above. Similarly, lower bounds also apply to any cell to the right or below. This is due to the fact that ``monotone'' is a special case of ``general'', as well as to the (surprisingly) non-trivial fact that the problem with $n$ agents can be efficiently reduced to the problem with $n+1$ agents; see \cref{app:sec:reduction-to-more-agents}.

\begin{table}[t]
    \centering
    {
    \renewcommand{\arraystretch}{1.4}
    \newcolumntype{C}{>{\centering\arraybackslash}p{2.5cm}}
    \begin{tabular}{|r|C|C|C|C|}
        \hline
        \bf valuations & $\boldsymbol{n=2}$ & $\boldsymbol{n=3}$ & $\boldsymbol{n=4}$ & $\boldsymbol{n \geq 5}$ \\
        \hline
        \bf monotone & \multirow{2}{*}{$\Theta(\log (1/\eps))$} & $O(\log^2 (1/\eps))$ & $O(\log^3 (1/\eps))$ & \textbf{?} \\
        \cline{1-1}\cline{3-5}
        \bf general & & \textbf{?} & $\Theta(\poly (1/\eps))$ & $\Theta(\poly (1/\eps))$ \\
        \hline
    \end{tabular}
    }
    \caption{\textbf{Query complexity bounds for $\boldsymbol{\eps}$-envy-free cake-cutting.} Here ``$\Theta(\poly (1/\eps))$'' denotes that there is a polynomial upper bound and a (possibly different) polynomial lower bound. For $n=2$ the upper bound is obtained by using binary search to simulate the cut-and-choose protocol, while the lower bound is a simple exercise. The upper bound for $n=3$ was shown by Deng et al.~\cite{DQS12}. Alternatively, this bound can also be obtained by simulating the Barbanel-Brams~\cite{barbanel2004cake} or Stromquist~\cite{Stromquist80-existence} moving-knife protocols using value queries. (We note that for additive valuations with the Robertson-Webb query model, Brânzei and Nisan~\cite{BranzeiN22-cake-query} have proved a $\Theta(\log(1(\eps))$ bound for $n=3$.) The remaining results in the table are proved in this paper (\cref{sec:algo,sec:cake-hardness}). All the lower bounds also hold for agents with identical valuations.}
    \label{tab:query}
\end{table}

\begin{table}[t]
    \centering
    {
    \renewcommand{\arraystretch}{1.4}
    \newcolumntype{C}{>{\centering\arraybackslash}p{2.5cm}}
    \begin{tabular}{|r|C|C|C|C|}
        \hline
        \bf valuations & $\boldsymbol{n=2}$ & $\boldsymbol{n=3}$ & $\boldsymbol{n=4}$ & $\boldsymbol{n \geq 5}$ \\
        \hline
        \bf monotone & \multirow{2}{*}{$O(\log (1/\eps))$} & $O(\log (1/\eps))$ & $O(\log^2 (1/\eps))$ & \textbf{?} \\
        \cline{1-1}\cline{3-5}
        \bf general & & \textbf{?} & $\Theta(\poly (1/\eps))$ & $\Theta(\poly (1/\eps))$ \\
        \hline
    \end{tabular}
    }
    \caption{\textbf{Communication complexity bounds for $\boldsymbol{\eps}$-envy-free cake-cutting.} For $n=2$ the bound is obtained by the cut-and-choose protocol. For $n=3$ the bound follows by a simple generalization of the same bound shown for additive valuations by Brânzei and Nisan~\cite{BranzeiN22-cake-query}. For $n=4$ the upper bound for monotone valuations follows from our algorithm together with some tools introduced in \cite{BranzeiN22-cake-query}. We provide proof sketches for these two bounds in \cref{app:communication}. The lower bounds are proved in this paper (\cref{sec:cake-hardness}).}
    \label{tab:communication}
\end{table}

\begin{table}[t]
    \centering
    {
    \renewcommand{\arraystretch}{1.4}
    \newcolumntype{C}{>{\centering\arraybackslash}p{2.5cm}}
    \begin{tabular}{|r|C|C|C|C|}
        \hline
        \bf valuations & $\boldsymbol{n=2}$ & $\boldsymbol{n=3}$ & $\boldsymbol{n=4}$ & $\boldsymbol{n \geq 5}$ \\
        \hline
        \bf monotone & \multirow{2}{*}{\textsf{P}} & \textsf{P} & \textsf{P} & \textbf{?} \\
        \cline{1-1}\cline{3-5}
        \bf general & & \textbf{?} & \ppad/ & \ppad/ \\
        \hline
    \end{tabular}
    }
    \caption{\textbf{Computational complexity of $\boldsymbol{\eps}$-envy-free cake-cutting.} Here ``\ppad/'' denotes that the problem is \ppad/-complete. The problem lies in \ppad/ for all $n$. As before, the results for $n=2$ and $n=3$ follow from cut-and-choose and the work of Deng et al.~\cite{DQS12}, respectively. The remaining results in the table are proved in this paper (\cref{sec:algo,sec:cake-hardness}), and the \ppad/-hardness results also hold for agents with identical valuations.}
    \label{tab:turing}
\end{table}

The following questions are particularly interesting:
\begin{itemize}
    \item What is the complexity of the problem for five agents with monotone valuations? Can the problem be solved efficiently, or can we show a lower bound? Can we at least show a lower bound for some larger number of players?
    \item What is the complexity of the problem for three agents with general valuations?
    \item Can any of the insights used in our algorithm for four players be used to tackle the problem of proving existence of EFX allocations for four agents in the indivisible goods setting?
\end{itemize}

\paragraph*{Further related work.}

The envy-free cake-cutting model has also been extensively studied without requiring that the pieces allocated to the agents be connected. The celebrated works of Aziz and Mackenzie~\cite{AzizM20} have shown that an exact envy-free allocation can be found in the Robertson-Webb query model for any number of players, albeit with a prohibitive number of cuts. It is an open question whether this can be improved, as only a lower bound of $\Omega(n^2)$ due to Procaccia~\cite{procaccia2013cake} is currently known.

Efficient algorithms for envy-free cake cutting have been obtained for some relatively large constant approximation factors~\cite{Arunachaleswaran19,GoldbergHS20-cake,BarmanK22-approx-EF-cake}, or under stronger assumptions, e.g.~\cite{Branzei15,BarmanR22}.
Our focus in this work is on (approximately) {\em envy-free} cake cutting, but other objectives have been considered such as equitable~\cite{ProcacciaW17}, (Nash) Social Welfare~\cite{AumannDH13,Arunachaleswaran19}, and extensions to group fairness~\cite{Segal-HaleviN19,Segal-HaleviS21,SegalHaleviS23}.
In this work we are only concerned with the algorithmic question of finding an envy free allocation, but several works have also considered the important question of incentives, e.g.~\cite{MosselT10,MayaN12,ChenLPP13,AumannDH14,BeiCHTW17,OrtegaS22,Tao22}.

\section{Technical Overview}

We begin with a technical overview of the algorithm in \cref{sub:overview-alg}.
The most novel technical contribution of our work is in proving a communication complexity lower bound for non-monotone agents. In \cref{sub:overview-hardness} we contrast with previous work and explain the first step in the reduction, a novel communication variant of the {\sc End-of-Line} problem. In \cref{sub:overview-embed} we give an overview of our embedding of the discrete {\sc Intersection End-of-Line} into the inherently continuous {\sc EF-Cake-Cutting} problem. 

\subsection{Overview of the algorithm}\label{sub:overview-alg}
The core idea for our algorithm is to define a special invariant, parameterized by $\val \in [0,1]$ that satisfies the following useful desiderata:
\begin{itemize}
    \item For any $\val \in [0,1]$ we can efficiently find a partition satisfying the invariant, if one exists.
    \item Starting from any partition satisfying the invariant, there is a continuous path in the space of partitions where the invariant holds and $\val$ increases monotonically. The partition at the end of this path yields an EF allocation.
    \item The invariant is guaranteed to hold at Agent 1's equipartition\footnote{Unless the equipartition is already envy-free, in which case we are done.} (i.e., a partition where she values all pieces of the cake equally --- this equipartition can also be found efficiently). It is guaranteed {\em not} to hold for $\val = 1$.
\end{itemize}

Together, these suggest a simple algorithm for finding $\varepsilon$-EF allocations: Set $\vallow$ to be the $\val$ corresponding to Agent 1's equipartition, and set $\valhigh = 1$. Then we can use binary search to find a (locally) maximal $\val$ where the invariant holds, and output the corresponding partition.

The key is of course in identifying such a nice invariant. In \cref{sec:proof-existence} we identify such an invariant and use it to present a new proof of the existence of envy-free allocations for four monotone agents. The invariant is defined as the OR of the following two conditions:

\begin{description}
\item[Condition A:] Agent 1 is indifferent between its three favorite pieces, and the remaining piece is (weakly) preferred by (at least) two of the three other agents.
\item[Condition B:] Agent 1 is indifferent between its two favorite pieces, and the two remaining pieces are each (weakly) preferred by (at least) two of the three other agents.
\end{description}
The value $\val$ is the value that Agent 1 has for its favorite piece(s) in the division.

We then use these insights to obtain an efficient algorithm in \cref{sec:algo}. In \cref{sec:RW-algo} we also present a variant of the algorithm that works in the Robertson-Webb model (i.e., for additive valuations without the bounded Lipschitzness assumption).

\subsection{Technical highlight: communication of \textsc{End-of-Line}}\label{sub:overview-hardness}

\subsubsection*{Background: totality, \textsc{End-of-Line}, and lifting gadgets}

The first major obstacle for proving hardness of cake cutting is that the problem is total, i.e., there always exists an (exactly) envy-free allocation. 
Since the proof of existence uses Sperner's Lemma / Brouwer's fixed point theorem, the natural candidate for understanding the complexity of actually finding the solution is the {\sc End-of-Line} problem.

\begin{definition}[{\sc End-of-Line}~\cite{Papadimitriou94}] \hfill

\begin{description}
\item[Input] 
A directed graph $G = (V,E)$, described as functions (or circuits in the computational model) $S,P$ that, for each vertex, return its list of outgoing, incoming edges (respectively); a special vertex $0$ with a single outgoing edge and no incoming edges.

\item[Output] One of the following:
\begin{itemize}
\item A vertex (different from $0$) with in-degree $\neq$ out-degree.
\item An inconsistency:  pair of vertices $u,v$ such that $S(u) = v$ but $P(v) \ne u$ (or vice versa). 
\item A vertex with more than one outgoing or more than one incoming edges.\footnote{Usually $S,P$ are hardcoded to have exactly one neighbor; this representation will be more convenient for our purposes.}
\end{itemize}
\end{description}
\end{definition}

It is known that {\sc End-of-Line} is hard in the query complexity model. Furthermore, while {\sc End-of-Line}  is inherently a discrete graph problem, using known techniques, e.g., due to~\cite{DQS12}, we can embed it as a continuous cake-cutting instance (albeit for general {\em preferences}; our extension to non-monotone {\em valuations} is non-trivial).
The instances resulting from this reduction require high query complexity even when all the agents' preferences are identical --- i.e., they do not at all capture the difficulty from communication between agents with different utilities. 

Fortunately, there is a very powerful machinery for {\em lifting} query complexity lower bounds to communication complexity. These lifting theorems (sometimes also called ``simulation theorems'') replace each bit in the query problem with a small {\em lifting gadget} whose input is distributed between the parties of the communication problem. This must be done in a special way to ensure that the problem remains hard for communication complexity. (E.g., naively partitioning the edges of the {\sc End-of-Line} instance results in a communication-easy problem!)

\paragraph{The main dilemma.} On one hand, those special lifting gadgets are inherently discrete, so even if we have a query-hard cake cutting instance, if we try to directly lift it to communication complexity, the resulting information structure looks nothing like agents' valuations over intervals of cake. On the other hand, if we first lift the {\sc End-of-Line} problem, the new communication problem loses the special structure of  {\sc End-of-Line}, and it is not clear how to embed it as a continuous, total problem.

\subsubsection*{Previous approaches (and why they fail for cake cutting)}

The same conundrum was previously encountered in works on communication complexity of Nash equilibria and Brouwer fixed points~\cite{RW16, GR18, BR20,GSP21}. These works used very different approaches for this issue, but ultimately they all consider Brouwer's fixed point on a hypercube of some dimension, and then go to higher dimensions%
\footnote{\cite{GR18} obtain tight bounds by reducing the ``dimension'' of their instance by encoding a single point in the Brouwer hypercube across the different actions in the support of a player's mixed strategy at equilibrium. It would be very interesting if there is an analogue of this approach for cake cutting.} to embed the lifting gadgets.
In fact, while the query variant of Brouwer is already hard in 2-D~\cite{HPV89}, none of these papers give non-trivial lower bounds on communication complexity for any constant dimension. 

For cake cutting, we would really like to keep the dimension as small as possible:
First, in the valuations model, the input is a function from the cake intervals; each interval is defined by only two cuts, i.e., it is inherently a 2-D object.  Furthermore, even if we try to get away with higher dimension by simultaneously considering more cuts in the partition in the preferences model, the number of cuts (and hence the number of agents) should scale with the dimension. But our goal is to show hardness for as few agents as possible.

\subsubsection*{Step I: A new total communication problem: \textsc{Intersection End-of-Line}}

Instead of working with the {\sc End-of-Line} instance and the lifting gadgets separately, we present a new communication variant of  {\sc End-of-Line} that naturally combines {\sc End-of-Line} with the flagship hard problem of communication complexity: {\sc SetDisjointness} (equivalently, ``{\sc SetIntersection}''):

\begin{definition}[{\sc Intersection End-of-Line}] \hfill

\begin{description}
\item[Input] 
Each party receives a superset of the edges; we say that an edge is {\em active} in the {\sc End-of-Line} instance $G = (V,E)$ if it is in the intersection of all the supersets. Edges that only appear in the supersets of some parties are called {\em inactive}.

\item[Output] A solution to the {\sc End-of-Line} instance defined by the active edges.
\end{description}
\end{definition}

We prove that {\sc Intersection End-of-Line} requires communication polynomial in the input size, even under strong assumptions on the inputs. In particular, by starting from a number-on-forehead lifting theorem, we can get instances where every inactive edge only appears in one party's inputs. This is helpful for embedding as a cake cutting problem, because for any cake-partition most of the parties will correctly evaluate it and can guarantee a conflict in case it does not correspond to a solution to the original {\sc End-of-Line} instance.%
\footnote{While capturing the intuition, this is not exactly true. In some points in our reduction 2 out of 4 parties do not have sufficient information. But in those special points there is additional structure that allows us to guarantee a conflict between the 2 remaining parties.}

\begin{lemma}

{\sc Intersection End-of-Line} with $k \ge 3$ parties requires $\poly(|G|)$ communication complexity,
even in the special case where the instances satisfy the following promises: 
\begin{enumerate}
\item[0.] Every node has at most one active incoming edge, and at most one active outgoing edge.
\item Every inactive edge is only included in at most one party's set of edges.
\item No vertex has both an $i$-inactive and a $j$-inactive incident edge, for some $i \neq j$. (An edge is said to be {\em $i$-inactive}, if it is inactive, but included in Party $i$'s superset.)
\item Every vertex is assigned to one party (the vertex-party assignment is publicly known): Only that party's superset  may have more than one edge coming into that vertex. Similarly, only that party's superset may have more than one edge going out of that vertex.
\end{enumerate}
\end{lemma}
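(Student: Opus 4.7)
The plan is to obtain the lower bound by a two-stage reduction. First, lift the standard query-complexity hardness of {\sc End-of-Line} to an NOF (number-on-forehead) communication lower bound using an existing NOF lifting theorem; second, conclude an NIH (number-in-hand) lower bound, which is what the cake-cutting application ultimately needs, via the folklore observation that any NIH protocol can be simulated in the NOF model (in NOF, party $j \ne i$ sees $x_i$ and can play party $i$'s role), so the NIH communication complexity is at least the NOF communication complexity. The starting point is the classical polynomial query lower bound for {\sc End-of-Line} on $N$-vertex graphs via the path-planting hard distribution.

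Concretely, I would set up the gadget-composition so that it is syntactically an {\sc Intersection End-of-Line} instance: for each potential edge $e$ of the lifted graph, introduce a $k$-bit gadget input distributed across the parties, and declare $e \in E_i$ iff party $i$'s bit for $e$ is on, so the active edges are exactly $\bigcap_i E_i$. Then the NOF lifting theorem gives a polynomial lower bound on any protocol that finds an {\sc End-of-Line} solution in the graph of active edges. To meet the four promises I would shape the hard distribution as follows. Promise~0 is inherited from the canonical path distribution on {\sc End-of-Line}, in which the graph is a disjoint union of simple paths. Promise~1 is imposed by restricting to inputs where each inactive edge has exactly one party turning it on; this is a sparse sub-distribution, but one that standard lifting theorems extend to at only polynomial cost in the bound. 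For Promises~2 and~3 I would publicly fix an assignment $\pi: V \to [k]$ of vertices to parties and route the gadget inputs so that only party $\pi(v)$ ever holds a bit that can create an inactive edge incident to $v$; this gives both promises simultaneously, since every inactive edge incident to $v$ is then $\pi(v)$-inactive and only $\pi(v)$ can produce duplicate incoming/outgoing edges at $v$.

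The main obstacle is reconciling Promise~1 with the lifting step: off-the-shelf NOF lifting results are usually stated for uniform or product inputs, whereas Promise~1 demands a sparse and structured input distribution. I would address this by applying the lifting theorem locally, per vertex: replace each vertex's incidence structure by a small vertex-gadget whose inputs are partitioned among the parties according to $\pi$, and then glue the per-vertex hardness along the disjoint paths of the underlying {\sc End-of-Line} graph, arguing that any global protocol must solve a constant fraction of the local gadgets and hence inherits the polynomial lower bound from the lifted single-vertex instance. An alternative route, which sidesteps the lifting machinery, is to give a direct communication lower bound on {\sc Intersection End-of-Line} via a discrepancy or corner-avoiding argument over the hard distribution, exploiting the special structure given by $\pi$; but I expect the modular lifting approach to be cleaner and to more naturally enforce the four promises.
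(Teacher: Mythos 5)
There is a genuine gap. You correctly identify the starting point — the NOF communication lower bound for lifted \textsc{End-of-Line} obtained via critical block sensitivity and multiparty set disjointness (this is exactly where the paper starts). You also correctly note that an NOF lower bound carries over to the NIH model. But the key technical step of the reduction is missing, and the replacement you propose would not work.

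Your ``gadget-composition'' step declares $e \in E_i$ iff party $i$'s bit for $e$ is on, so that the active set is $\bigcap_i E_i$, i.e. an AND gadget per edge. But there is no lifting theorem that turns \textsc{End-of-Line} query hardness into NIH communication hardness against such a simple per-edge AND gadget, and this composition does not meaningfully interact with the NOF lower bound you invoked (the NOF gadgets are of a different shape and are placed on $g(\cdot)$-inputs, not on membership bits). More importantly, this composition cannot give the promises: if $e$ is inactive only because party $i$ turned it off, then the \emph{other} parties include $e$ in their supersets, so $e$ is $j$-inactive for every $j \ne i$, violating Promise~1 immediately. The promises require the opposite asymmetry: an inactive edge should appear in \emph{only one} party's superset, which under a naive AND composition never happens. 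The paper's reduction achieves this by inverting the information flow: party $i$'s superset $E_i$ contains precisely the edges consistent with what party $i$ \emph{can see}, i.e.\ with the inputs on the other parties' foreheads; $E_i$ is therefore a function of $(x_j)_{j\ne i}$, and an edge is inactive only when party $i$'s own forehead (which $i$ cannot see) rules it out, so exactly one party keeps it.

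The construction that makes this work, and which your proposal does not contain, is the layered vertex-chain: for each host-graph vertex $v$ and each prefix $(\sigma^{1},\dots,\sigma^{i})$ of the $k$-tuple of gadget inputs around $v$, there is an auxiliary vertex $v^{i+1}_{\sigma^1,\dots,\sigma^i}$ assigned to party $i{+}1$; the edge out of a layer-$i$ vertex is determined by the bit on party $i$'s forehead, so party $i$ includes all candidate outgoing edges while every other party includes only the correct one. This simultaneously yields Promises~1, 2 and~3 and preserves \textsc{End-of-Line} solutions in the intersection. Your alternative ``per-vertex gadget glued along paths, show a protocol must solve a constant fraction of local instances'' gestures at this but is a direct-product style argument that is not how the paper proceeds and would need substantial new machinery to carry through. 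As stated the proposal does not constitute a correct proof.
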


We expect that {\sc Intersection End-of-Line} will find other applications. For example, it can be used to show that various (total) communication variants of Brouwer's fixed point problem are hard even in 2-D.

\subsubsection*{A few words about the hardness of \textsc{Intersection End-of-Line}}

We give a very brief overview of the proof of the lemma (see \cref{sub:hard-intersect-EoL} for details).
Our starting point is a number-on-forehead lifting of {\sc End-of-Line}, i.e., the parties need to solve a small communication problem (the ``lifting gadget'') to compute $S(v)$ and $P(v)$ for each $v$. 

For every possible edge $(u \rightarrow v)$, and every possible vector of inputs to $v$'s and $u$'s lifting gadgets corresponding to edge $(u \rightarrow v)$, we construct a sequence of vertices that incrementally build this vector. At each vertex in the sequence, the edge to the next vertex is determined by an input on a single party's forehead. That party's superset includes all the edges that could correspond to this input, while all the other parties only keep the edge that corresponds to the true input. When we take the intersection of the parties supersets of edges, we're left with just the path from $u$ to $v$.

\subsection{Overview of reduction to Communication Complexity of Cake Cutting}\label{sub:overview-embed}
We now give an overview of Steps II and III of our hardness result for Cake Cutting in the communication model. 
This is probably the most technically involved part of our paper.

\subsubsection*{Step II: Embedding \textsc{Intersection End-of-Line} in the plane}
Our next step is to embed {\sc Intersection End-of-Line} à-la Sperner-coloring in 2D. 
Our approach is inspired by the work of~\cite{CD09} that embed {\sc End-of-Line} as a 2D Sperner coloring problem, but the final embedding is far more delicate in order to accommodate the double information restriction imposed by our setting: (i) first, because the input to {\sc Intersection End-of-Line} is distributed, parties disagree about which path goes through any point in the plane; (ii) in the next step, we want to turn our embedding into {\em valuations} of cake pieces, but each piece only ``sees'' one or two of the three cuts defining the partition. 

Our embedding assigns a color for each grid point in the discretized unit square; colors of other points are defined as weighted averages of the corner of their cells. Each color consists of two binary labels $(\pm 1, \pm 1)$, and a solution to the Sperner problem consists  of a grid cell where two of the four corners have opposite labels in both coordinates.  
In the communication problem, the embedding for each player is different because they have a different view of the graph, but we guarantee, roughly, that in at least one coordinate a majority of the agents see the same sign.

Modulo using the somewhat non-standard four colors (rather than three) for a 2D Sperner, at a high level our embedding is similar to previous work (e.g.~\cite{HPV89,CD09}): Embedded {\sc End-of-Line} paths are colored by three stripes: $(+1,+1),(-1,+1),(-1-1)$, with the background (where embedded paths do not pass) colored by the default $(+1,-1)$; intuitively this ensures that Sperner solutions (i.e., $(+1,+1),(-1,-1)$ or $(+1,-1),(-1,+1)$ in the same grid cell) can only happen at endpoints of embedded paths.

\paragraph{A special intersection gadget.}
Since we're working in 2D, we also have to worry about intersection of embedded edges/paths; at a high level, we use the intersection gadget of~\cite{CD09}.
However, the details are extremely subtle, especially where 1 party sees a crossing of embedded edges, while for the other 3 parties only one of those edges exist. We design custom variants of both the intersection gadget and simple horizontal paths (we don't need to modify vertical paths) that guarantee a very specific desideratum. See \cref{sub:special-embedding} for the full details.

\subsubsection*{Step III: How to evaluate a piece of cake}
The final step in our construction is to define the valuations of cake pieces. 
The value of every piece is always fairly close to its length, so we can assume wlog that in any $\varepsilon$-EF partition, the lengths are not too far from equal; in particular it is easy to tell whether a piece is the second or third piece based on its endpoints.

Given cuts $(\ell,m,r)$ we want to ensure that $\ell \approx 1-r$; then we use $\ell\approx 1-r$ to encode the $x$ coordinate  of the unit square, and $m$ to encode the $y$ coordinate. The challenges are that (i) the values of the external pieces cannot depend on the $y$ coordinate of the embedding (this is why our embedding treats horizontal and vertical paths differently); and (ii) it is difficult to enforce  $\ell \approx 1-r$ without creating spurious envy-free solutions. 

Since the first and fourth piece do not ``see'' the cut $m$ that encodes the $y$ coordinate, their values are simpler.
The value of the fourth piece is always exactly its length.  The value of the first piece is usually exactly its length (this helps ensure that $\ell \approx 1-r$), except when it receives a special boost (discussed below) on carefully chosen vertical strips.

The second and third piece ``see'' both coordinates of the embedding, so we can slightly adjust their values based on the labels of the corresponding grid cells; specifically, for a sufficiently small $\gamma>0$, the value of the second piece is equal to the value of the first piece plus $\gamma\cdot \flabel$ and the value of the third piece is equal to the value of the fourth piece plus $\gamma \cdot \slabel$. This guarantees, for example, that in any grid cell where all parties agree that the first label is positive (resp.~negative), the second piece is over-demanded (resp.~under-demanded) so the corresponding partition cannot be $\varepsilon$-EF. 

When only three of the parties agree, a positive first label will still cause the second piece to be over-demanded, but with a negative first label the fourth party may find the second piece acceptable, so it would not be under-demanded. So the riskiest regions are ones where three parties share a negative first label and neutral second label (resp.~neutral first and negative second). In this case we want to ensure that the fourth party does not want the second piece (resp.~the third). 

In some carefully chosen vertical strips, the first (and thus also the second piece) receive a {\em boost} of $\pm \beta$ (we set $\beta = 8\gamma$). Specifically, in strips where agent $i$ thinks that the remaining three agents may have a neutral second label, she gives the first and second pieces a negative boost in case the remaining three agents have a negative first label. Similarly, in strips where agent $i$ thinks that the remaining three agents may have a neutral first label, she gives the first and second pieces a positive boost in case the remaining three agents have a negative second label. Note that because of the careful construction of the crossing gadgets, agent $i$ knows based only on the $x$ coordinate where the other 3 parties may see a path she does not, and in what direction it goes. This allows her to correctly implement the boost on the first piece which doesn't ``see'' the $y$ coordinate.

The full details of the construction can be found in \cref{sec:cake-hardness}.

\section{Preliminaries}

We consider a resource, called ``the cake'', which is modeled as the interval $[0,1]$. There are $n$ agents, and each agent has a valuation function defined over intervals of the cake. Formally, each agent $i \in [n]$ has a valuation function $v_i: [0,1]^2 \to [0,1]$, where $v_i(a,b)$ represents the value that agent $i$ has for interval $[a,b]$. For this to be well-defined, we require that $v_i(a,b) = 0$ whenever $b \leq a$. Furthermore, we always assume that valuations $v_i$ are continuous.

Next, we define some additional properties of valuation functions that we will sometimes assume, depending on the setting. A valuation $v$ is:
\begin{itemize}
    \item \emph{monotone}: $v(a,b) \leq v(a',b')$, whenever $[a,b] \subseteq [a',b']$.
    \item \emph{hungry}: $v(a,b) < v(a',b')$, whenever $[a,b] \subsetneq [a',b']$.
    \item \emph{$\delta$-strongly-hungry}, for some $\delta > 0$: $v(a',b') \geq v(a,b) + \delta(b'-b) + \delta(a-a')$, whenever $[a,b] \subseteq [a',b']$.
\end{itemize}
Note that hungriness corresponds to strict monotonicity. Furthermore, a valuation that is $\delta$-strongly-hungry for some $\delta > 0$ is in particular hungry.

For our computational results (except in \cref{sec:RW-algo}), we will assume Lipschitz-continuity of the valuations. A valuation $v$ is Lipschitz-continuous with Lipschitz constant $L$, if, for all $a,b,a',b' \in [0,1]$:
$$|v(a,b) - v(a',b')| \leq L (|a-a'| + |b-b'|).$$

A (connected) allocation of the cake to $n$ agents is a division of the cake $[0,1]$ into $n$ intervals $A_1, \dots, A_n$ such that $A_i$ is assigned to agent $i$, the $A_i$'s are all pairwise disjoint,\footnote{To be more precise, here we assume that all $A_i$ are closed intervals and we say that $A_i$ and $A_j$ are disjoint if their intersection has zero measure.} and $\bigcup_i A_i = [0,1]$. In the setting with four agents, we will denote a division of the cake by its three cuts $(\ell,m,r)$, where $0 \leq \ell \leq m \leq r \leq 1$.

\paragraph*{\textbf{Envy-freeness.}}
An allocation $(A_1, \dots, A_n)$ is said to be \emph{envy-free} if, for all agents $i$, we have $v_i(A_i) \geq v_i(A_j)$ for all $j$. For $\eps \in [0,1]$, we say that the allocation is \emph{$\eps$-envy-free} if, for all agents $i$, we have $v_i(A_i) \geq v_i(A_j) - \eps$ for all $j$. An envy-free allocation is guaranteed to exist in the setting we consider~\cite{Stromquist80-existence,Woodall80-existence,Su99-rental-harmony}.

\paragraph*{\textbf{Normalization.}} For our computational results (except in \cref{sec:RW-algo}), without loss of generality, we will assume that the valuation functions are $1$-Lipschitz-continuous, i.e., $L=1$. If $L > 1$, we can replace $v_i$ by $v_i/L$, and $\eps$ by $\eps/L$.

\paragraph*{\textbf{Query complexity.}}
In this black-box model, we can query the valuation functions of the agents. A (value) query consists of the endpoints of an interval $[x,y]$, and the agent responds with its value for that interval, i.e., $v_i(x,y)$. The running time of an algorithm consists of the number of queries. The problem of computing an $\eps$-envy-free allocation can be solved using $\poly(1/\eps)$ queries by brute force~\cite{BranzeiN22-cake-query}. We say that an algorithm is efficient if it uses $\poly(\log(1/\eps))$ queries.

The problem has traditionally been studied in an extended query model called the Robertson-Webb model~\cite{WoegingerS07-cake}. In this model, in addition to the value queries, there is a second type of query called a \emph{cut} query. On cut query $(x,\alpha)$, where $x$ is a position on the cake and $\alpha$ is a value, the agent responds by returning a position $y$ on the cake such that $v_i(x,y) = \alpha$ (or responds that there is no such $y$). Although this model is usually studied with additive valuations, it can easily be extended to the setting of monotone valuations.\footnote{The natural extension of the Robertson-Webb model to monotone valuations should also allow for a cut query in the other direction, i.e., on cut query $(\alpha,y)$, the agent responds by returning a position $x$ on the cake such that $v_i(x,y) = \alpha$ (or responds that there is no such $x$). For additive valuations such queries can be easily simulated using standard cut queries.}

In the monotone setting, with our assumption of Lipschitz-continuity of the valuations and in the context of looking for approximate fairness (such as $\eps$-envy-freeness), the two models are equivalent up to $\poly(\log(1/\eps))$ factors, since a cut query can be simulated by $O(\log(1/\eps))$ value queries~\cite{BranzeiN22-cake-query}. In particular, a lower bound for the (value) query complexity also implies a (qualitatively) similar lower bound for the Robertson-Webb query model.

Since the Robertson-Webb model is usually studied for additive valuations and \emph{without} the bounded Lipschitz-continuity assumption, we also present a version of our algorithm that applies to this setting, namely additive valuations that are non necessarily $L$-Lipschitz-continuous. See \cref{sec:RW-algo} for the details and the definition of the Robertson-Webb model.

\paragraph*{\textbf{Communication complexity.}}
In this model, each agent corresponds to one party in a communication setting. Each agent/party only knows its own valuation function. The parties can communicate by sending messages and the complexity of such a communication protocol is measured in terms of the number of bits sent between parties. A protocol is efficient if it uses at most $\poly(\log(1/\eps))$ communication.

A query algorithm yields a communication protocol of (qualitatively) equivalent complexity \cite{BranzeiN19-cake-communication}. As a result, a lower bound in the communication complexity setting yields a lower bound in the query complexity setting.

\paragraph*{\textbf{Computational complexity.}}
In the ``white box'' model, we assume that the valuations are given to us in the input, and an efficient algorithm is one that runs in polynomial time in the size of the representation of the valuations and in $\poly(\log(1/\eps))$. For example, the valuations can be given as well-behaved arithmetic circuits \cite{FearnleyGHS22-gradient} or as Turing machines together with a polynomial upper bound on their running time. In this model, the problem is a total \np/ search problem, i.e., it lies in the class \tfnp/. Furthermore, it is known to lie in the subclass \ppad/ of \tfnp/ \cite{DQS12}.

\section{A New Proof of Existence of Connected Envy-free Allocations for Four Monotone Agents}\label{sec:proof-existence}

In this section we present a new proof for the existence of connected envy-free allocations for four agents with monotone valuations. This new proof is the main insight that allows us to obtain efficient algorithms for the problem in subsequent sections.

\begin{theorem}\label{thm:existence}
For four agents with monotone valuations, there always exists a connected envy-free allocation.
\end{theorem}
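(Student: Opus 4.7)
The plan is to establish existence by a continuous path-following argument along the manifold of partitions satisfying the invariant (Condition A or Condition B described in \cref{sub:overview-alg}), parameterized by $\val$, the common value Agent 1 assigns to its (two or three) favorite pieces. The argument has three components: (i) identifying an initial partition satisfying the invariant, (ii) continuously extending the path while $\val$ weakly increases, and (iii) showing that the terminal partition of the path is necessarily envy-free.

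For the base case, I would start at Agent 1's equipartition, where each of the four pieces has value $v_1(0,1)/4$ for Agent 1. If the three other agents all have distinct strict favorite pieces, then by assigning each of them their unique favorite and giving Agent 1 the remaining piece we obtain an envy-free allocation directly (since Agent 1 is indifferent), and we are done. Otherwise, two of Agents 2, 3, 4 share a weakly favorite piece $p^{*}$; designating $p^{*}$ as the ``remaining'' piece and the other three as Agent 1's ``favorites'' (permissible because Agent 1 is indifferent among all four) immediately gives Condition A with $\val = v_1(0,1)/4$.

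For the continuation step, starting from a partition satisfying the invariant at parameter $\val$, I would argue (after a small perturbation to enforce hungriness, which can be removed by a limiting argument using continuity of the valuations) that the set of cut-triples satisfying Condition A or Condition B forms a one-dimensional manifold locally: each condition imposes two or three equality constraints on the three-dimensional cut-space $\{(\ell, m, r) : 0 \le \ell \le m \le r \le 1\}$. Along this manifold we can push $\val$ upward, with monotonicity ensuring that enlarging Agent 1's favorite pieces raises their value. Transitions between the two conditions occur continuously: when Condition A holds and one of Agent 1's three favorites would drop below $\val$, we enter Condition B (two favorites); conversely, we re-enter Condition A when a fourth piece of Agent 1 ties for $\val$.

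The main obstacle, and the technical heart of the proof, is the termination step: showing that the path terminates at a partition which must be envy-free. Compactness of the cut-space together with continuity of the valuations and the monotonicity of $\val$ along the path guarantees a well-defined terminal partition with some maximal value $\valstar$. I would then argue by contradiction: suppose the terminal partition is not envy-free, so that some agent $i \in \{2,3,4\}$ strictly envies another agent's piece in the natural assignment suggested by the invariant. The invariant guarantees a ``local pivot'' --- at least two agents willing to take the remaining piece(s) --- and a careful case analysis over which condition is active and which agent envies which piece shows that one of the cuts can be shifted so as to strictly increase $\val$ while maintaining the invariant, contradicting the maximality of $\valstar$. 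The delicate part is covering all combinations of (i) which of the two conditions holds, (ii) which agent envies which piece, and (iii) boundary phenomena where two cuts coincide or where an agent's set of preferred pieces changes; monotonicity of the valuations is essential throughout to translate changes in piece sizes into controlled changes in values.
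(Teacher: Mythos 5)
Your proposal is the same approach the paper takes: start at Agent~1's equipartition, note that either it is already envy-free or Condition~A holds, follow a one-dimensional path of partitions satisfying Condition~A or~B with Agent~1's value for its favorite pieces monotonically nondecreasing, and argue by compactness and closedness of the invariant set that the terminal partition exists and is envy-free. The hungriness perturbation (and, implicitly, a piecewise-linear reduction to control local behaviour) that you gesture at are also exactly what the paper does.

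However, you have explicitly deferred what you yourself call ``the technical heart of the proof'' to an unspecified careful case analysis, and that is precisely where the real work lies; as written, the proposal is a plan rather than a proof. At least one genuinely nonobvious idea is required and is not identified in your sketch. When a B-trail (Agent~$i$ kept indifferent between pieces $k,k'$, Agent~1 kept at value $\alpha$ on the other two) can no longer be extended because both remaining agents in $\{2,3,4\}$ come to strictly prefer, say, piece $k$, it is not a priori clear that the invariant can be re-entered. The paper establishes a structural symmetry lemma: if Agent~$j'$ weakly prefers piece $k$ to $k'$ on the B-trail where Agent~$j$ is kept indifferent, then Agent~$j$ weakly prefers $k'$ to $k$ on the B-trail where Agent~$j'$ is kept indifferent. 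Its proof is itself a by-hand case analysis over the relative positions of pieces $k$ and $k'$ and uses both monotonicity and hungriness; it is what tells you which agent to swap into the ``indifferent'' role so that parts (iii) and (iv) of Condition~B hold again. Likewise, when a B-trail fails only because Agent~1 comes to strictly prefer one of $k,k'$, showing that Condition~A holds along the A-trail requires a set-inclusion observation (piece $k$ on the A-trail is a superset of, and piece $k'$ a subset of, the corresponding pieces on the B-trail at the same $\alpha$), again argued over cases by piece position. Without these concrete steps, the termination argument does not follow merely from ``monotonicity'' and ``a local pivot.''
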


Existing proofs \cite{Stromquist80-existence,Woodall80-existence,Su99-rental-harmony} apply to more general valuations, but rely on strong topological tools such as Brouwer's fixed point theorem, or equivalent formulations. Here we show that monotonicity allows for a simpler proof that bypasses these tools.

\paragraph{\textbf{Hungry valuations.}}
The first step of the proof is to restrict ourselves to hungry valuations. This is without loss of generality, since the existence of envy-free allocations for hungry agents is sufficient to obtain existence for the non-hungry case too. Indeed, given (possibly non-hungry) valuations $v_i$, for any $\eps \in (0,1)$ we can replace them by hungry valuations $v_i'$ defined as $v_i'(a,b) := (1-\eps)v_i(a,b) + \eps(b-a)$. An envy-free allocation for the $v_i'$ will then be approximately envy-free for the $v_i$, where the approximation can be made arbitrarily small by picking $\eps > 0$ sufficiently small. The following standard compactness argument then shows that a connected envy-free allocation must also exist for the original valuations $v_i$. Thus, in the remainder of this section we assume that the valuations are hungry.

\begin{lemma}
If for some valuations $v_1, \dots, v_n$ there exist $\eps$-envy-free allocations for all $\eps > 0$, then there exists an (exact) envy-free allocation.
\end{lemma}

\begin{proof}
Consider a sequence $(A^k)_k$ of $1/k$-envy-free divisions. Since the domain of all divisions is compact, there exists a converging subsequence of $(A^k)_k$. Furthermore, there exists such a converging subsequence such that the assignment of pieces to agents (i.e., Agent 1 gets piece $3$, etc) that makes the division $1/k$-envy-free remains the same throughout the subsequence. Since the valuations are continuous, the division that is the limit of the constructed subsequence together with the fixed assignment of pieces to agents must be envy-free.
\end{proof}

\paragraph{\textbf{Equipartitions.}}
The next step is to prove that an envy-free allocation always exists when all four agents have the same valuation function $v_i = v$. In this case an envy-free allocation corresponds to an \emph{equipartition} according to $v$, i.e., a division into four pieces which all have the same value according to $v$.  The monotonicity of the valuation function allows for a simple proof of existence of equipartitions.

\begin{lemma}\label{lem:equipartition}
For any monotone hungry valuation function $v$, and any $n \geq 1$, there exists a (unique) equipartition of the cake into $n$ equal parts according to $v$.
\end{lemma}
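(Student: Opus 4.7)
The plan is to explicitly parameterize partitions whose pieces all share a common value $t$, and then tune $t$ so that the rightmost cut lands at $1$. For each $t \geq 0$, I would iteratively define $y_0(t) := 0$, and $y_i(t)$ to be the unique point in $[y_{i-1}(t),1]$ with $v(y_{i-1}(t), y_i(t)) = t$, provided $v(y_{i-1}(t),1) \geq t$. Well-definedness at each step uses that hungriness makes $b \mapsto v(y_{i-1}(t), b)$ a strictly increasing continuous bijection from $[y_{i-1}(t),1]$ onto $[0, v(y_{i-1}(t), 1)]$.

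A straightforward induction on $i$ then shows that $y_i$ is continuous and strictly increasing in $t$ on its domain. Indeed, if $t < t'$ gave $y_i(t') \leq y_i(t)$, the inductive hypothesis $y_{i-1}(t) < y_{i-1}(t')$ combined with $y_i(t') \leq y_i(t)$ would yield $[y_{i-1}(t'), y_i(t')] \subsetneq [y_{i-1}(t), y_i(t)]$, contradicting hungriness since $v$ of the former equals $t'$ while $v$ of the latter equals $t < t'$. Setting $\phi(t) := y_n(t)$, the function $\phi$ is continuous and strictly increasing on its domain $[0, T^*]$, where $T^*$ is the supremum of $t$'s for which all of $y_1,\dots,y_n$ are defined. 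Since $\phi(0) = 0$ and $y_n(t) \leq 1$ always, it suffices to prove $\phi(T^*) = 1$: then the cuts $y_1(T^*),\dots,y_{n-1}(T^*)$ give an equipartition with common value $T^*$, and the strict monotonicity of $\phi$ immediately gives uniqueness.

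The key step is establishing $\phi(T^*) = 1$, which I would prove by contradiction via a compactness argument. As $t \to T^{*-}$, each bounded monotone $y_i(t)$ converges to some $y_i^* \in [0,1]$, and continuity of $v$ forces $v(y_{i-1}^*, y_i^*) = T^*$ for every $i$. For $T^* > 0$, hungriness gives the strict chain $0 < y_1^* < \dots < y_n^*$. If $y_n^* < 1$ then all $y_i^* < 1$, and another application of hungriness yields $v(y_i^*, 1) > T^*$ for every $i$; a continuity argument then extends every $y_i$ to some $t$ slightly larger than $T^*$, contradicting the maximality of $T^*$. Hence $y_n^* = 1$, as required. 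The main technical obstacle I anticipate is precisely this extension step: one must rule out that some intermediate $y_i$ (with $i < n$) reaches $1$ before $y_n$ does, but the strict ordering $y_i(t) < y_n(t)$ for all $t > 0$ prevents exactly this pathology.
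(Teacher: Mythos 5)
Your proof is correct and takes essentially the paper's route: both parameterize the candidate partition by the common per-piece value, define the cuts iteratively from the left, and then tune that value until the partition closes up at $1$. The paper sidesteps your domain-of-definition and compactness bookkeeping by clamping each iterated cut at $1$ whenever it would overrun, which makes the map total on $[0,1]$ and reduces the endgame to a single intermediate-value argument on $\alpha \mapsto v(r(\alpha),1) - \alpha$; your version replaces that clamping with a supremum-attainment and extension argument, but the underlying idea is the same.
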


\begin{proof}
We only present the proof here for $n=4$, but it straightforwardly generalizes to any $n \geq 1$. For any $\alpha \in [0,1]$, we let $\ell(\alpha)$ denote the unique cut position satisfying $v(0,\ell(\alpha)) = \alpha$ if it exists, and set $\ell(\alpha) = 1$ otherwise (i.e., when $v(0,1) < \alpha$). Since $v$ is hungry and continuous, it is easy to see that $\ell(\cdot)$ is well-defined and continuous. Similarly, we define $m(\cdot)$ such that $v(\ell(\alpha),m(\alpha)) = \alpha$ (or $m(\alpha) = 1$ if this is not possible), and $r(\cdot)$ such that $v(m(\alpha),r(\alpha)) = \alpha$ (or $r(\alpha) = 1$ if this is not possible). Finally, since $v(r(0),1) - 0 = v(0,1) > 0$ and $v(r(1),1) - 1 = -1 < 0$, by continuity there exists $\alpha^* \in (0,1)$ such that $v(r(\alpha^*),1) = \alpha^*$. Then the division $(\ell(\alpha^*),m(\alpha^*),r(\alpha^*))$ is an equipartition into four parts. Furthermore, it is easy to see that the hungriness of $v$ also implies that the equipartition is unique.
\end{proof}

\paragraph{\textbf{Continuous path: Intuition.}}
We are now ready to prove the existence of envy-free allocations. At a high level, the proof proceeds as follows. Starting from the equipartition into four parts according to Agent 1's valuation, we show that we can move the cuts in a continuous manner such that the following two properties hold:
\begin{enumerate}
\item As long as we have not reached an envy-free allocation, there is a way to continue moving the cuts.
\item When we move the cuts, we make sure that the value of Agent 1 for its favorite piece always strictly increases.
\end{enumerate}
It then follows that this continuous path in the space of divisions has to terminate, and thus an envy-free division must exist.

In more detail, we move the cuts continuously while maintaining the following invariant: the division is \emph{critical}. We say that a given division into four pieces is \emph{critical} if it satisfies Condition A or Condition B (or both):
\begin{description}
\item[Condition A:] Agent 1 is indifferent between its three favorite pieces, and the remaining piece is (weakly) preferred by (at least) two of the three other agents.
\item[Condition B:] Agent 1 is indifferent between its two favorite pieces, and the two remaining pieces are each (weakly) preferred by (at least) two of the three other agents.
\end{description}
It is easy to see that Condition A necessarily holds at the starting point, i.e., for the equipartition into four parts according to Agent 1, unless this division is already envy-free. As long as Condition A holds, we shrink the remaining piece, while also making sure that Agent 1 remains indifferent between the other three pieces. When we reach a point where Condition A would no longer hold if we continued shrinking the piece, we can show that either the division is envy-free, or Condition B is satisfied. As long as Condition B holds, we increase the two pieces preferred by Agent 1, while making sure that Agent 1 remains indifferent between them, and also that one of the other agents remains indifferent between the remaining two pieces. When we reach a point where Condition B would no longer hold if we continued, we can show that either the division is envy-free, or Condition A is satisfied. This continuous movement of cuts thus satisfies point 1 above, namely that the path can always be extended as long we have not found an envy-free division. Furthermore, the value that Agent 1 has for its favorite pieces also always strictly increases, so point 2 is also satisfied.

\paragraph{\textbf{Continuous path: Formal argument.}}
We begin with the formal definitions of the two conditions. A division into four pieces $(P_1,P_2,P_3,P_4)$ is \emph{critical} if at least one of the following two conditions holds:
\begin{description}
\item[Condition A:] There exists a piece $k \in \{1,2,3,4\}$ such that (i) for all other pieces $t,t' \in \{1,2,3,4\} \setminus \{k\}$, $v_1(P_t) = v_1(P_{t'}) \geq v_1(P_k)$, and (ii) there exist two (distinct) agents $i,i' \in \{2,3,4\}$ such that $v_i(P_k) \geq \max_t v_i(P_t)$ and $v_{i'}(P_k) \geq \max_t v_{i'}(P_t)$.
\item[Condition B:] There exist two (distinct) pieces $k,k' \in \{1,2,3,4\}$ such that (i) for the other two pieces $t,t' \in \{1,2,3,4\} \setminus \{k,k'\}$, $v_1(P_t) = v_1(P_{t'}) \geq \max\{v_1(P_k),v_1(P_{k'})\}$, and (ii) for each piece $t \in \{k,k'\}$, there exist two agents $i,i' \in \{2,3,4\}$ such that $v_i(P_t) \geq \max_{t'} v_i(P_{t'})$ and $v_{i'}(P_t) \geq \max_{t'} v_{i'}(P_{t'})$.
\end{description}
Note that Condition A and Condition B can possibly hold at the same time.

By \cref{lem:equipartition} there exists an equipartition of the cake into four parts according to Agent 1. If this division is not envy-free, then it satisfies Condition A. The following lemma then allows us to conclude that an envy-free division exists and thus provides a new proof of \cref{thm:existence}. The lemma will also be used in the next sections when we develop algorithms for the problem.

\begin{lemma}\label{lem:continuous-path}
Let $(\ell_0,m_0,r_0)$ be a critical division and let $\alpha_0$ denote the value that Agent 1 has for its favorite pieces in that division. Then there exists $\valstar \in [\alpha_0,1)$ and a continuous path $T: [\alpha_0,\valstar]\to \{(\ell,m,r): 0 \leq \ell \leq m \leq r \leq 1\}$ such that:
\begin{enumerate}
    \item The path begins at $T(\alpha_0) = (\ell_0,m_0,r_0)$.
    \item For all $\val \in [\alpha_0, \valstar]$, the division $T(\val)$ is critical and is such that Agent 1 has value $\val$ for its favorite pieces.
    \item The division $T(\valstar)$ at the end of the path is envy-free.
\end{enumerate}
\end{lemma}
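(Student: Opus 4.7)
The plan is to construct the continuous path $T$ by concatenating finitely many piecewise-linear \emph{A-arcs} and \emph{B-arcs}. Each A-arc is a one-parameter family of partitions, parameterized by $\alpha$, along which Condition A holds with a fixed small piece $k \in \{1,2,3,4\}$ and fixed endorsing agents $i,i' \in \{2,3,4\}$; each B-arc is defined analogously with two small pieces $k, k'$. The overall strategy is to evolve along the current arc while $\alpha$ strictly increases, and, when the active condition is about to fail at some critical $\alpha^*$, to switch to a new arc on which the invariant still holds.

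First I would make the parameterization explicit. On an A-arc with small piece $P_k$, the three equations $v_1(P_j) = \alpha$ for $j \neq k$ pin down all three cuts: by applying the hungry-implicit-function argument of \cref{lem:equipartition} sequentially, given $\alpha$ we solve for the cuts one at a time, yielding a piecewise-linear curve $\alpha \mapsto (\ell(\alpha),m(\alpha),r(\alpha))$. By piecewise-linearity of $v_1$ and the fact that as $\alpha$ grows each favorite interval must grow, monotonicity forces $P_k$ to shrink and $v_i(P_k)$ to decrease (weakly) for every agent $i$. On a B-arc with small pieces $P_k, P_{k'}$, only two such equations are imposed, so the locus is a one-dimensional piecewise-linear curve in cut-space, and we again parameterize by $\alpha$, along which the two non-favorite pieces shrink and Agent~$1$'s two favorite pieces grow.

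Given these parameterizations, the path is built as follows. Starting from $(\ell_0,m_0,r_0)$, evolve along the A-arc (respectively B-arc) corresponding to the active condition, increasing $\alpha$, until either (i) the current division is envy-free, in which case we stop, or (ii) one of the weak-preference constraints of an endorsing agent becomes an equality with some other piece $P_{\tilde k}$ at some critical $\alpha^*$. At such a critical point I would prove a \emph{transition lemma}: either a different pair of endorsing agents still satisfies Condition A (continue along a new A-arc), or Condition B holds with the pieces $(P_k, P_{\tilde k})$ (switch to a B-arc), and symmetrically for the B-to-A case. Termination is then immediate: $\alpha \le v_1(0,1) \le 1$ is bounded above, $\alpha$ strictly increases across arcs, and piecewise-linearity makes the number of critical events finite.

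The main obstacle is the transition lemma. Consider an A-arc terminating at $\alpha^*$ because agent $i$ becomes indifferent between $P_k$ and $P_{\tilde k}$, and suppose no third agent endorses $P_k$ (otherwise we continue with a new endorsing pair). Piece $P_k$ still has two endorsers, namely $i'$ strictly and $i$ via the new indifference; piece $P_{\tilde k}$ has $i$ as one endorser via the same indifference, so the delicate question is whether the remaining agent $i'' \in \{2,3,4\}\setminus\{i,i'\}$ supplies the second. I would perform a case analysis on $i''$'s favorite piece among Agent~$1$'s three favorites, leveraging the minimality of $\alpha^*$ (the first time Condition A breaks along the current arc) together with piecewise-linearity: any configuration in which $i''$ does not weakly prefer $P_{\tilde k}$ either yields an envy-free partition on the spot, or would have already caused Condition A to break at an earlier $\alpha$, contradicting the choice of $\alpha^*$. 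The symmetric B-to-A transition is handled in the same style, and the arcs then assemble into the desired continuous path $T$.
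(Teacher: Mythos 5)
Your overall framework—decomposing the path into piecewise-linear arcs corresponding to Condition~A and Condition~B, following an arc until the active condition is about to fail, and then switching—matches the paper, and your A-to-B transition (the case analysis on the third agent $i''$'s favourite piece) is essentially the same as the paper's argument (agent $i''$ weakly prefers a piece outside $\{k,k'\}$ gives envy-free; weakly prefers $P_{\tilde k}$ gives Condition~B; strictly prefers $P_k$ means the other endorser's switchover is the real transition, which you dismiss a bit too quickly but can be repaired by a role swap as in the paper). However there is a genuine gap in your treatment of B-arcs, and it hides exactly the step where the real work happens.

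You write that on a B-arc ``only two such equations are imposed, so the locus is a one-dimensional piecewise-linear curve in cut-space, and we again parameterize by $\alpha$.'' This is wrong: the two equations $v_1(P_t)=\alpha$ for $t\notin\{k,k'\}$ constrain three cuts, so for each fixed $\alpha$ there is a \emph{one-parameter family} of divisions, not a single one. Your B-arc is not parameterized by $\alpha$ until you add a third equation. The paper adds $v_i(P_k)=v_i(P_{k'})$ for a specific agent $i\in\{2,3,4\}$, giving trails $\gamma^B_{i,k,k'}$ indexed by $i$ as well as by $k,k'$. The need to commit to a choice of indifferent agent $i$ is not a cosmetic detail: it is precisely what makes the B-to-B and B-to-A transitions nontrivial. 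When one of the endorsements for $P_k$ or $P_{k'}$ fails on $\gamma^B_{i,k,k'}$, the fix is not to switch to an A-arc but to switch to $\gamma^B_{i',k,k'}$ for a different indifferent agent $i'$, and justifying this requires a structural lemma about how the same division point behaves on different trails: roughly, if agent $j'$ prefers $k$ to $k'$ on $\gamma^B_{j,k,k'}(\alpha)$, then agent $j$ prefers $k'$ to $k$ on $\gamma^B_{j',k,k'}(\alpha)$. Only after this is in place does the ``only part (i) fails'' case—where Agent~1's indifference between its two favourites is about to become strict—lead to an A-arc, and even there one needs a containment argument (piece $k$ on the A-arc contains piece $k$ on the B-arc; piece $k'$ is contained) to transfer the endorsements via monotonicity. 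Your proposal compresses all of this into ``the symmetric B-to-A transition is handled in the same style,'' but it is not symmetric and it is not handled in the same style; it is where the bulk of the proof lives.
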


\noindent The following two claims will play a crucial role in the proof of \cref{lem:continuous-path}.

\begin{claim}\label{clm:condB}
If a division $(P_1, P_2, P_3, P_4)$ satisfies Condition B with pieces $k, k'$, but is not envy-free, then each of the Agents 2, 3, and 4 strictly prefers one of the pieces $k$ or $k'$ (or both) to any other piece, i.e., $\max\{v_i(P_k), v_i(P_{k'})\} > \max_{t \neq k,k'} v_i(P_t)$ for all $i \in \{2,3,4\}$.
\end{claim}

\begin{proof}
According to Condition B, the (distinct) pieces $k,k'$ satisfy (i) Agent 1 (weakly) prefers any of the two pieces in $\{1,2,3,4\} \setminus \{k,k'\}$, and (ii) for each of the pieces $t \in \{k,k'\}$, there are at least two distinct agents in $\{2,3,4\}$ who (weakly) prefer piece $t$. To be more concrete, and without loss of generality, say that Agent 2 (weakly) prefers piece $k$, Agent 3 (weakly) prefers piece $k'$, and Agent 4 (weakly) prefers both piece $k$ and $k'$. Now, we claim that no agent in $\{2,3,4\}$ can (weakly) prefer any piece other than $k$ or $k'$. Indeed, assume that in addition to piece $k$, Agent 2 also likes some piece $t \notin \{k,k'\}$. Then the division is envy-free: assign piece $t$ to Agent 2, piece $k$ to Agent 4, piece $k'$ to Agent 3, and the remaining piece to Agent 1. Similarly, if we instead assume that Agent 3 or Agent 4 likes some piece $t \notin \{k,k'\}$, we again can assign a desired piece to every agent, and thus the division is envy-free. Since by assumption the division is not envy-free, it must be that each of the Agents 2, 3, and 4 strictly prefers one of the pieces $k$ or $k'$ (or both) to any other piece.
\end{proof}

\begin{claim}\label{clm:condAnotB}
If a division $(P_1, P_2, P_3, P_4)$ satisfies Condition A with piece $k$, but not Condition B, and is not envy-free, then there exist two distinct agents $i,i' \in \{2,3,4\}$ that strictly prefer piece $k$ to any other piece, i.e., $v_i(P_k) > \max_{t \neq k} v_i(P_t)$ and $v_{i'}(P_k) > \max_{t \neq k} v_{i'}(P_t)$.
\end{claim}

\begin{proof}
According to Condition A, the piece $k$ satisfies (i) Agent 1 (weakly) prefers any of the three pieces in $\{1,2,3,4\} \setminus \{k\}$, and (ii) there exist two (distinct) agents $i,i' \in \{2,3,4\}$ who (weakly) prefer piece $k$. Now, because the division is not envy-free and does not satisfy Condition B, we claim that we can strengthen point (ii) to say that there exist two distinct agents $i,i' \in \{2,3,4\}$ who \emph{strictly} prefer piece $k$ to any of the other pieces. Indeed, assume that this is not the case. Then, Agents 2, 3, and 4 have to satisfy the following pattern. One agent, say Agent 2, (weakly) prefers piece $k$. Another agent, say Agent 3, (weakly) prefers both piece $k$ and another piece $t$. The remaining agent, Agent 4, (weakly) prefers a piece $t'$, that is different from $k$. (These agents can of course also weakly prefer some additional pieces.) Now, if $t \neq t'$, then the division is envy-free: assign piece $k$ to Agent 2, piece $t$ to Agent 3, piece $t'$ to Agent 4, and the remaining piece to Agent 1. If, on the other hand, $t = t'$, then the division satisfies Condition B with pieces $k$ and $t$. Thus, we obtain a contradiction in both cases, and it must be that there exist two distinct agents $i,i' \in \{2,3,4\}$ who \emph{strictly} prefer piece $k$ to any of the other pieces.
\end{proof}

We are now ready to proceed with the proof of the lemma.

\begin{proof}[Proof of \cref{lem:continuous-path}]
We begin by introducing some notation that will be used in this proof, but also later in the sections presenting the algorithms.
We let $\equitwo \in (0,1)$ be the value of the pieces in the equipartition into two parts according to Agent 1. Similarly, we let $\equithree \in (0,\equitwo)$ and $\equifour \in (0,\equithree)$ be the value of the pieces in the equipartition into three or four parts according to Agent 1, respectively. Note that Condition A can only (possibly) hold at values $\alpha \in [\equifour, \equithree)$, and Condition B can only (possibly) hold at values $\alpha \in [\equifour, \equitwo)$.

\paragraph{\textbf{Trails.}}
Next, we define ``trails'' that our continuous path will follow. A trail is itself a continuous path that satisfies some properties, and our final continuous path will be following one of these trails at any given time, but will sometimes choose to switch from one trail to another, when they intersect. The trails, which we define in detail in the next paragraph, are a succinct way of formalizing the intuition provided above for how the cuts should move in order to try to maintain the invariant, namely to ensure that the division remains critical.

For any choice of two distinct pieces $k, k' \in \{1,2,3,4\}$ with $k < k'$ (i.e., piece $k$ lies on the left of piece $k'$), we define the trail $\gamma_{k,k'}$ to be the function that maps any $\alpha \in [\equifour,\equitwo]$ to the division $(P_1,P_2,P_3,P_4)$ that is obtained as follows:
\begin{enumerate}
\item Require that $v_1(P_t) = \alpha$ for all $t \in \{1,2,3,4\} \setminus \{k,k'\}$. Given that requirement, the division is uniquely determined by the position $x$ of the cut at the right end of piece $P_k$. This position $x$ is then chosen as follows.
\item For each agent $i \in \{2,3,4\}$, let $x_i$ be the (unique) position of that cut that ensures $v_i(P_k) = v_i(P_{k'})$. Let $x_M$ be the median of the three positions $x_2, x_3,$ and $x_4$.
\item If $\alpha \leq \equithree$, let $x_L$ be the (unique) position of the cut that ensures $v_1(P_{k'}) = \alpha$, and $x_R$ be the (unique) position that ensures $v_1(P_k) = \alpha$.
\item If $\alpha > \equithree$, then simply let $x_L := - \infty$ and $x_R := +\infty$.
\item Finally, in both cases, set $x$ to be the median of $x_L, x_M,$ and $x_R$.
\end{enumerate}
The division thus obtained is always well-defined, i.e., it exists and is unique. Furthermore, the trail $\gamma_{k,k'}$ is continuous, namely the division above depends continuously on $\alpha$. Indeed, it is easy to see that the positions $x_2(\alpha), x_3(\alpha), x_4(\alpha)$, and thus $x_M(\alpha)$, are continuous in $\alpha$. In addition, $x_L(\alpha)$ and $x_R(\alpha)$ are continuous over $[\equifour,\equithree)$ and over $(\equithree,\equitwo]$. Finally, the continuity of $x(\alpha)$ follows from the fact that at $\alpha = \equithree$, we necessarily have $x_M \in [x_L,x_R]$ and thus $x = x_M$.

An important observation is that any division that is critical must lie on one of these trails. In more detail, we have the following equivalent formulations of the two conditions:
\begin{description}
\item[Condition A:] Division $(P_1,P_2,P_3,P_4)$ satisfies Condition A if there exist $\alpha \in [\equifour,\equitwo]$ and pieces $k < k'$ such that $(P_1,P_2,P_3,P_4) = \gamma_{k,k'}(\alpha)$ and additionally there exist two distinct agents $i,i' \in \{2,3,4\}$ such that either (a) $x_M(\alpha) \leq x_L(\alpha)$, $v_i(P_k) = \max_t v_i(P_t)$ and $v_{i'}(P_k) = \max_t v_{i'}(P_t)$, or (b) $x_M(\alpha) \geq x_R(\alpha)$, $v_i(P_{k'}) = \max_t v_i(P_t)$ and $v_{i'}(P_{k'}) = \max_t v_{i'}(P_t)$.
\item[Condition B:] Division $(P_1,P_2,P_3,P_4)$ satisfies Condition B if there exist $\alpha \in [\equifour,\equitwo]$ and pieces $k < k'$ such that $(P_1,P_2,P_3,P_4) = \gamma_{k,k'}(\alpha)$ and additionally $x_M(\alpha) \in [x_L(\alpha), x_R(\alpha)]$ and $\max\{v_i(P_k), v_i(P_{k'})\} = \max_t v_i(P_t)$ for all agents $i \in \{2,3,4\}$.
\end{description}
Here the positions $x_M(\alpha), x_L(\alpha), x_R(\alpha)$ refer to the corresponding positions on the trails $\gamma_{k,k'}$ that are under consideration. Using these formulations of the two conditions, we obtain the following characterization of criticality.

\begin{observation}\label{obs:invariant}
A division $(P_1,P_2,P_3,P_4)$ is critical if and only if there exist $\alpha \in [\equifour,\equitwo]$ and pieces $k < k'$ such that $(P_1,P_2,P_3,P_4) = \gamma_{k,k'}(\alpha)$ and $\max\{v_i(P_k), v_i(P_{k'})\} = \max_t v_i(P_t)$ for all agents $i \in \{2,3,4\}$.
\end{observation}

The ``if'' direction can be shown by considering the three cases: (i) $x_M(\alpha) \in [x_L(\alpha), x_R(\alpha)]$, (ii) $x_M(\alpha) \leq x_L(\alpha)$, and (iii) $x_M(\alpha) \geq x_R(\alpha)$. In case (i), it is immediate that Condition B holds. In case (ii), $x_M(\alpha) \leq x_L(\alpha)$ implies that there exist two distinct agents $i,i' \in \{2,3,4\}$ such that $v_i(P_k) \geq v_i(P_{k'})$ and $v_{i'}(P_k) \geq v_{i'}(P_{k'})$. As a result, Condition A holds. A very similar argument shows that case (iii) also satisfies Condition A. The proof of the ``only if'' direction is omitted; we will not use this direction below.

\paragraph{\textbf{Construction of the path.}}
The path $T$ starts at the division $T(\alpha_0) = (\ell_0,m_0,r_0)$ and follows a trail $\gamma_{k,k'}$ as long as the invariant holds on the trail, i.e., as long as the division remains critical. When it reaches a point such that going any further would make the division not be critical anymore, we show below that either the division is envy-free, or the path can follow a different trail on which the division remains critical. In more detail, below we prove the following crucial property:
\begin{description}
\item[Path Extension Property:] If a division is critical but not envy-free, then there exists a trail $\gamma_{k,k'}$ passing through the division (at some value $\alpha$) that we can follow, i.e., there exists $\delta > 0$ such that for all $\alpha' \in [\alpha, \alpha + \delta]$ the division $\gamma_{k,k'}(\alpha')$ remains critical.
\end{description}
This means that as long as no envy-free division is encountered, the continuous path $T$ can always be extended while remaining on critical divisions. Since no trail ends in a critical division, the path has to stop before that. Let $\valstar$ denote the highest value attained by the path. In theory, the path could approach $\valstar$ without ever attaining it. Namely, the divisions could be critical on the path at values $[\alpha_0,\valstar)$, but not at $\valstar$. However, this is in fact impossible, since the set of critical divisions is closed. This is because it is a finite union of sets defined by non-strict inequalities on the values of various pieces, and the valuation functions are continuous. Thus, if the divisions on the path at values $[\alpha_0,\valstar)$ are critical, then so is the division at $\valstar$. As a result, the end of the path is closed, and $T(\valstar)$ must be envy-free.

\paragraph{\textbf{Proof of the Path Extension Property.}}
Let $(P_1,P_2,P_3,P_4)$ be a division that is critical, but not envy-free. Consider first the case where the division satisfies Condition B with pieces $k<k'$. In particular, the division lies on the trail $\gamma_{k,k'}$ at some value $\alpha$. By \cref{clm:condB} we have that Agents 2, 3, and 4 strictly prefer one of the pieces $k$ or $k'$ (or both) to any other piece. By continuity of the trail and the valuations, it follows that there exists $\delta > 0$ such that this remains the case on the trail $\gamma_{k,k'}(\alpha')$ for all $\alpha' \in [\alpha, \alpha + \delta]$. Thus, by \cref{obs:invariant} these divisions are also critical.

Now consider the case where the division satisfies Condition A with piece $k$, but not Condition B. Since we assumed that the division is not envy-free, \cref{clm:condAnotB} implies that there exist two distinct agents $i,i' \in \{2,3,4\}$ that strictly prefer piece $k$ to any other piece. Assuming for now that $k$ is not the rightmost piece, the division lies on a trail $\gamma_{k,k'}$ at some value $\alpha$, where $k' > k$ is an arbitrary piece lying to the right of piece $k$. By continuity of the trail and the valuations, there exists $\delta > 0$ such that $i$ and $i'$ strictly prefer piece $k$ in the division $\gamma_{k,k'}(\alpha')$ for all $\alpha' \in [\alpha, \alpha + \delta]$. In particular, agents $i,i'$ strictly prefer piece $k$ over piece $k'$, which implies that $x_M(\alpha') < x(\alpha')$ and thus $x_M(\alpha') < x_L(\alpha')$ on the trail $\gamma_{k,k'}(\alpha')$ for all $\alpha' \in [\alpha, \alpha + \delta]$. As a result, $\gamma_{k,k'}(\alpha')$ satisfies Condition A for all $\alpha' \in [\alpha, \alpha + \delta]$. In the case where $k$ is the rightmost piece, we pick an arbitrary piece $k' < k$ and apply the same arguments to $\gamma_{k',k}$. In that case, we obtain $x_M(\alpha') > x_R(\alpha')$ instead of $x_M(\alpha') < x_L(\alpha')$, and the same conclusion follows.

With the Path Extension Property now proved, the Lemma follows.
\end{proof}

\begin{remark}\label{rem:old-trails}
In a previous version of this paper, the proof of \cref{lem:continuous-path} used a different set of trails, which were closer to the intuition provided earlier in this section, but which also made the proof longer and more involved.\footnote{We thank an anonymous reviewer for suggesting the trails that are now used in the proof.} Nevertheless, these trails had the advantage of being very simple and so we will sometimes use them in subsequent sections. We include their definition below.

For any piece $k \in \{1,2,3,4\}$, we define the trail $\gamma_k^A$ that maps any $\alpha \in [\equifour,\equithree]$ to the unique division $(P_1,P_2,P_3,P_4)$ that satisfies $v_1(P_t) = \alpha$ for all $t \in \{1,2,3,4\} \setminus \{k\}$.

For any agent $i \in \{2,3,4\}$ and any two distinct pieces $k,k' \in \{1,2,3,4\}$, we define the trail $\gamma_{i,k,k'}^B$ that maps any $\alpha \in [\equifour,\equitwo]$ to the unique division $(P_1,P_2,P_3,P_4)$ that satisfies $v_i(P_k) = v_i(P_{k'})$, and $v_1(P_t) = \alpha$ for all $t \in \{1,2,3,4\} \setminus \{k,k'\}$.

It is easy to check that the trails used in the proof are a refinement of these trails, namely they always follow one of these trails and will sometimes switch from following one trail to another.
\end{remark}

\section{An Efficient Algorithm for Four Monotone Agents}\label{sec:algo}

In this section we prove the following result.

\begin{theorem}[Efficient algorithm for four monotone agents]
For four agents with monotone 1-Lipschitz valuations, we can compute an $\eps$-envy-free connected allocation using $O(\log^3(1/\eps))$ value queries.
\end{theorem}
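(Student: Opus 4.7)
The plan is to turn the existence proof in \cref{sec:proof-existence} into an algorithm by binary searching over the value parameter $\alpha$ that indexes the invariant (Condition A or Condition B). Concretely, setting $\vallow := \equifour$ (Agent 1's four-equipartition value) and $\valhigh := 1$, I binary search for a (locally) maximal $\valstar$ on which some trail satisfies the invariant, and output the associated partition. Correctness at $\valstar$ follows from \cref{lem:continuous-path}: either the invariant fails for all partitions slightly further along, in which case the case analysis inside \cref{lem:continuous-path} forces the current partition to be envy-free, or the path could be extended and $\valstar$ would not be maximal.

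Before searching, I preprocess by replacing each $v_i$ with a strictly hungry $v_i'(a,b) := (1-\eta) v_i(a,b) + \eta(b-a)$ for $\eta = \Theta(\eps)$; any $(\eps/2)$-envy-free allocation for the $v_i'$ is $\eps$-envy-free for the $v_i$, and $v_i'$ remains monotone and 1-Lipschitz. Hungriness ensures each trail map $\gamma_k^A$ and $\gamma_{i,k,k'}^B$ can be inverted numerically via binary search on cut positions. For each trail and target $\alpha$, I build an oracle that returns the associated partition up to precision $\Theta(\eps)$. On a Condition-A trail $\gamma_k^A(\alpha)$, the three cut equations involve only Agent 1 and decouple sequentially: each cut is found by a single binary search, so the whole partition costs $O(\log(1/\eps))$ value queries. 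On a Condition-B trail $\gamma_{i,k,k'}^B(\alpha)$, two cut equations are Agent 1's and the third is Agent $i$'s indifference between pieces $k$ and $k'$; these three couple through a single free coordinate, so I perform an outer binary search over that coordinate with an inner binary search for the dependent cut at each step, for a total of $O(\log^2(1/\eps))$ value queries per evaluation. An additional $O(1)$ queries check the remaining preference parts of Conditions A/B.

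The outer binary search over $\alpha \in [\equifour, 1]$ then runs for $O(\log(1/\eps))$ iterations; at each iteration it evaluates the $O(1)$ trails at cost $O(\log^2(1/\eps))$ each, summing to $O(\log^3(1/\eps))$ value queries overall (computing $\equifour$ fits within the same budget using an analogous nested binary search on Agent 1). The 1-Lipschitz assumption propagates the $\Theta(\eps)$ discretization slack linearly through the $O(1)$ comparisons used in the final envy check, yielding an $\eps$-envy-free allocation after a constant rescaling of the internal precision parameters.

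The main obstacle I expect is that the set of $\alpha$ on which the invariant holds is a union of intervals originating from different trails and is not necessarily a single interval, so naive binary search on a fixed trail is not monotone. I overcome this by evaluating all $O(1)$ trails at every midpoint and treating their disjunction as a single oracle. Correctness at the discretized maximum then rests on the structural content of \cref{lem:continuous-path}: whenever one trail's invariant just barely fails, either a neighbouring trail takes over at the same or slightly larger $\alpha$, or the current partition is already envy-free. This makes the true maximum robust to $\Theta(\eps)$ perturbation, and among the $O(1)$ candidate partitions at the discretized maximum one is guaranteed to be $O(\eps)$-envy-free.
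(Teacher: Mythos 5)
Your high-level plan is the same as the paper's: strong-hungrification, binary search on the value parameter $\alpha$ over which the Condition~A/B invariant holds, per-step oracle evaluations at cost $O(\log^2(1/\eps))$ (one binary search for A-trails, nested binary searches for B-trails), and an output taken at the boundary of the search.

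However, there is a genuine gap in your correctness argument. You claim that the ``true maximum is robust to $\Theta(\eps)$ perturbation'' and let the $\Theta(\eps)$ slack propagate ``linearly'' through the final envy check, implicitly binary-searching $\alpha$ to $\Theta(\eps)$ precision. This is not sufficient. The quantity you actually need to control is how far the returned \emph{cut positions} are from the envy-free endpoint $T(\valstar)$ of the continuous path, and the trail maps $\alpha \mapsto \gamma_k^A(\alpha)$, $\alpha \mapsto \gamma_{i,k,k'}^B(\alpha)$ are \emph{not} $O(1)$-Lipschitz: after your $\eta=\Theta(\eps)$ strong-hungrification they are only $\Theta(1/\eps^3)$-Lipschitz, because each successive cut can amplify movement by a factor $1/\eta$. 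The paper proves exactly this bound ($6/\eps^3$) and, as a consequence, binary searches until $|\valhigh - \vallow| \leq \eps^4/12$, which after multiplying by the Lipschitz constant gives cut positions within $\eps/2$ of an envy-free division and hence $\eps$-envy-freeness by $1$-Lipschitzness of the valuations. With only $\Theta(\eps)$ precision in $\alpha$, the cuts could shift by $\Theta(1/\eps^2)$, which is vacuous. The asymptotic query count is unaffected (both precisions give $O(\log(1/\eps))$ outer iterations), but without the Lipschitz lemma on the trails your proof does not establish that the output is $\eps$-envy-free. Your alternative informal argument --- appealing to the case analysis of \cref{lem:continuous-path} to deduce envy-freeness whenever the invariant ``just barely fails'' --- does not close the gap either, because that analysis is a statement about exact continuous paths, and you would have to redo the whole case analysis with additive $\Theta(\eps)$ slack on every comparison; the paper deliberately avoids this by proving the quantitative Lipschitz bound instead.

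A secondary omission: you skip the piecewise-linear interpolation on the $\eps$-grid that the paper also performs in preprocessing. Without it, your inner ``numerical'' binary searches for the cuts return only approximate cut positions, and these errors compound through the nested searches on Condition~B trails and then again through the outer binary search on $\alpha$; the paper's trick of interpolating onto a grid makes the simulated cut queries exact, removing this error accumulation entirely. This is fixable by careful error tracking but is nontrivial and should be addressed rather than elided.
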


The crucial observation that allows us to obtain an efficient algorithm is that we can perform binary search on the continuous path whose existence is guaranteed by \cref{lem:continuous-path}. It is straightforward to check that the algorithm also yields a polynomial-time algorithm in the standard Turing machine model.

\subsection{Preprocessing}\label{sec:algo-prepro}

In this section we show that, without loss of generality, we can assume that:
\begin{itemize}
    \item $\eps$ is the inverse of an integer, i.e., $\eps = 1/m$ for some integer $m \geq 2$.
    \item The valuations $v_i$ are $\eps$-strongly-hungry, meaning that if we extend an interval by some length $t$, then its value increases by at least $t\eps$.
    \item The valuations $v_i$ are piecewise linear on the $\eps$-grid $\{0,\eps,2\eps,\dots,1-\eps,1\}$, as defined below.
\end{itemize}
Thus, in the next section it will suffice to provide an algorithm that finds an $\eps$-envy-free allocation for $1$-Lipschitz valuations that are $\eps$-strongly-hungry and piecewise linear on the $\eps$-grid, whenever $\eps$ is the inverse of an integer.

\begin{definition}
Let $m \in \mathbb{N}$ and $x_i := i \cdot \delta$ for all $i \in \{0, 1, \dots, m\}$. A valuation function $v$ is \emph{piecewise linear on the $1/m$-grid} if it is continuous over its entire domain and (affine) linear in each of the triangles of the following triangulation of $\{(a,b) \in [0,1]^2: a \leq b\}$:
\begin{itemize}
    \item $T_{i,j}^\leq := \{(a,b) : x_{i-1} \leq a \leq x_i, x_{j-1} \leq b \leq x_j, a - x_{i-1} \leq b - x_{j-1}\}$ $\forall i,j \in [m]$ with $i \leq j$,
    \item $T_{i,j}^\geq := \{(a,b) : x_{i-1} \leq a \leq x_i, x_{j-1} \leq b \leq x_j, a - x_{i-1} \geq b - x_{j-1}\}$ $\forall i,j \in [m]$ with $i < j$.
\end{itemize}
\end{definition}

The piecewise linear structure on the $\eps$-grid allows us to answer various types of queries exactly. For example, we can simulate exact cut queries: given a position $a$ on the cake, and a desired value $\val$, we can find the unique position $b$ such that $v(a,b) = \val$ (or output that it does not exist). To do this, we perform binary search on the $\eps$-grid for $O(\log(1/\eps))$ steps to determine the interval on the $\eps$-grid in which the cut $b$ must lie. Then, using a constant number of queries we obtain full information about $v(a,b')$ for any $b'$ in that interval, and can thus determine the exact value of $b$. Similarly, given two cuts $a,c$, we can determine the exact position of cut $b \in [a,c]$ such that $v(a,b) = v(b,c)$. More generally, if we are looking for multiple cuts satisfying some property, it suffices to determine for each cut the interval on the $\eps$-grid in which it must lie, and then with a constant number of queries we can determine the exact positions of the cuts. We will make extensive use of this in the algorithm, which is described in the next section.

The next lemma is the main tool used to show that the restrictions are without loss of generality. A simpler version of this construction was used by Br{\^a}nzei and Nisan~\cite{BranzeiN19-cake-communication} for additive valuation functions.

\begin{lemma}\label{lem:linear-interpolation}
Let $m \in \mathbb{N}$ and let $v$ be a monotone $1$-Lipschitz valuation function. Then there exists a monotone $1$-Lipschitz valuation function $\widetilde{v}$ that satisfies:
\begin{enumerate}
    \item $\widetilde{v}$ is a $3/m$-approximation of $v$, i.e., $|v(a,b) - \widetilde{v}(a,b)| \leq 3/m$ for all $0 \leq a \leq b \leq 1$.
    \item $\widetilde{v}$ is $1/m$-strongly-hungry, i.e., $\widetilde{v}(a',b') \geq \widetilde{v}(a,b) + (b'-b)/m + (a-a')/m$ for all $a,b,a',b' \in [0,1]$ with $a'\leq a \leq b \leq b'$.
    \item Any value query to $\widetilde{v}$ can be answered by making a constant number of value queries to $v$.
    \item $v$ is piecewise linear on the $1/m$-grid.
\end{enumerate}
\end{lemma}

\begin{proof}
First, define $v'$ by letting $v'(a,b) = v(a,b)\cdot (m-1)/m + |b-a|/m \in [0,1]$. Note that $v'$ is $1$-Lipschitz-continuous and $1/m$-strongly-hungry. Furthermore, $v'$ is a $1/m$-approximation of $v$ and any value query to $v'$ can be answered by making a single value query to $v$.

Next, define $\widetilde{v}$ to be the piecewise linear interpolation of $v'$ on the $1/m$-grid. In other words, the valuation $\widetilde{v}$ is piecewise linear on the $1/m$-grid and agrees with $v'$ at the grid points, i.e., $\widetilde{v}(i/m,j/m) = v'(i/m,j/m)$ for all $i,j \in \{0, 1, \dots, m\}$ with $i \leq j$.

More formally, for any $a,b \in [0,1]$ with $a \leq b$, let $\underline{a},\overline{a}$ be consecutive multiples of $1/m$ such that $\underline{a} \leq a \leq \overline{a}$, and let $\underline{b},\overline{b}$ be consecutive multiples of $1/m$ such that $\underline{b} \leq b \leq \overline{b}$. Then, when $b - \underline{b} \geq a - \underline{a}$, we let
$$\widetilde{v}(a,b) = \frac{a-\underline{a}}{1/m} \cdot v'(\overline{a},\overline{b}) + \frac{\overline{b}-b}{1/m} \cdot v'(\underline{a},\underline{b}) + \frac{(b-\underline{b})-(a-\underline{a})}{1/m} \cdot v'(\underline{a},\overline{b})$$
and, when $b - \underline{b} \leq a - \underline{a}$, we let
$$\widetilde{v}(a,b) = \frac{\overline{a} - a}{1/m} \cdot v'(\underline{a},\underline{b}) + \frac{b - \underline{b}}{1/m} \cdot v'(\overline{a},\overline{b}) + \frac{(a - \underline{a}) - (b - \underline{b})}{1/m} \cdot v'(\overline{a},\underline{b}).$$
Clearly, a value query to $\widetilde{v}$ can be answered by making three value queries to $v'$, since it suffices to query the three vertices of the triangle containing the queried point $(a,b)$. Furthermore, since $v'$ is $1$-Lipschitz-continuous, so is $\widetilde{v}$. In particular, it follows that $\widetilde{v}$ is a $2/m$-approximation of $v'$, and thus a $3/m$-approximation of $v$.

It remains to argue that $\widetilde{v}$ is $1/m$-strongly-hungry. In other words, we want to show that for any $0 \leq a' \leq a \leq b \leq b' \leq 1$, we have $\widetilde{v}(a',b') \geq \widetilde{v}(a,b) + (b'-b)/m + (a-a')/m$. It suffices to prove that this is the case whenever $(a,b)$ and $(a',b')$ lie in the same triangle of the triangulation. Indeed, if the condition holds within each triangle, then it also holds globally. This is easy to see by considering the straight path connecting $(a,b)$ to $(a',b')$ and using the fact that the statement holds for each portion of the path that lies within a triangle.

If $(a,b)$ and $(a',b')$ lie in the same triangle, then there exist consecutive multiples $\underline{a},\overline{a}$ of $1/m$ such that $a,a' \in [\underline{a}, \overline{a}]$. Similarly, there exist consecutive multiples $\underline{b},\overline{b}$ of $1/m$ such that $b,b' \in [\underline{b}, \overline{b}]$. Consider first the case where $b - \underline{b} \geq a - \underline{a}$. Since $(a,b)$ and $(a',b')$ lie in the same triangle, we must also have $b' - \underline{b} \geq a' - \underline{a}$. We can thus write
\begin{equation*}
\begin{split}
\widetilde{v}(a',b') - \widetilde{v}(a,b) &= \frac{a'-a}{1/m} \cdot v'(\overline{a},\overline{b}) + \frac{b-b'}{1/m} \cdot v'(\underline{a},\underline{b}) + \frac{(b'-b) - (a'-a)}{1/m} \cdot v'(\underline{a},\overline{b})\\
&= (b'-b) \cdot m \cdot (v'(\underline{a},\overline{b}) - v'(\underline{a},\underline{b})) + (a-a') \cdot m \cdot (v'(\underline{a},\overline{b}) - v'(\overline{a},\overline{b}))\\
&\geq (b'-b)/m + (a-a')/m
\end{split}
\end{equation*}
where we used the fact that $v'$ is $1/m$-strongly-hungry, and thus $v'(\underline{a},\overline{b}) - v'(\underline{a},\underline{b}) \geq (\overline{b} - \underline{b})/m = 1/m^2$ and $v'(\underline{a},\overline{b}) - v'(\overline{a},\overline{b}) \geq (\overline{a} -\underline{a})/m = 1/m^2$. The case where $b - \underline{b} \leq a - \underline{a}$ (and thus $b' - \underline{b} \leq a' - \underline{a}$) is handled analogously.
\end{proof}

Now assume that we have an algorithm that solves the problem using $O(\log^3(1/\eps))$ queries under the restrictions mentioned above. We show how to obtain an algorithm for the problem without these restrictions. Let $\eps \in (0,1)$ be given and let $m \in \mathbb{N}$ be the smallest integer such that $\eps' := 1/m$ satisfies $\eps' \leq \eps/7$. Note that $\eps' \geq \eps/8$. By \cref{lem:linear-interpolation}, we can run the algorithm on the modified valuations $\widetilde{v}_i$ and obtain an allocation that is $\eps'$-envy-free with respect to the modified valuations. Since $\widetilde{v}_i$ is a $3\eps'$-approximation of $v_i$, the allocation is $7\eps'$-envy-free with respect to the original valuations, i.e., $\eps$-envy-free. The algorithm made $O(\log^3(1/\eps'))$ queries to the modified valuations, which corresponds to $O(\log^3(8/\eps))$ queries to the original valuations by \cref{lem:linear-interpolation}.

\subsection{The Algorithm}

Consider four agents with valuation functions $v_1,v_2,v_3,v_4$ that are monotone and $1$-Lipschitz continuous. In this section we present an algorithm that computes an $\eps$-envy-free allocation using $O(\log^3(1/\eps))$ value queries. Using the preprocessing construction presented in the previous section, we also assume that the valuations are $\eps$-strongly-hungry and piecewise linear on the $\eps$-grid.

\paragraph*{\textbf{Critical divisions.}} We briefly recall the following definition. A division of the cake into four pieces is \emph{critical} if at least one of the following two conditions holds:
\begin{description}
\item[Condition A:] Agent 1 is indifferent between its three favorite pieces, and the remaining piece is (weakly) preferred by (at least) two of the three other agents.
\item[Condition B:] Agent 1 is indifferent between its two favorite pieces, and the two remaining pieces are each (weakly) preferred by (at least) two of the three other agents.
\end{description}
We say that ``Condition A (resp.~B) holds at value $\val$'' if there exists a division of the cake for which Condition A (resp.~B) holds, and in which Agent 1 has value $\val$ for its favorite pieces. Similarly, we say that ``there exists a critical division at value $\alpha$'', if there exists a critical division where Agent 1 has value $\alpha$ for its favorite pieces. For a more formal definition of the two conditions, see the previous section.

\paragraph*{\textbf{The Algorithm.}}

\begin{enumerate}
    \item Compute the (unique) equipartition of the cake into four equal parts according to Agent 1. If this equipartition yields an envy-free division, then return it. Otherwise, let $\equifour \in (0,1)$ be the value that Agent 1 has for each of the equal parts. Set $\vallow := \equifour$ and $\valhigh := v_1(0,1) \leq 1$.
    \item Repeat until $|\valhigh - \vallow| \leq \eps^4/12$:
    \begin{enumerate}
        \item Let $\val := (\vallow+\valhigh)/2$.
        \item If there exists a critical division at value $\val$, then set $\vallow := \val$. Otherwise, set $\valhigh := \val$.
    \end{enumerate}
    \item Return a critical division at value $\vallow$.
\end{enumerate}

We begin by explaining how each step can be implemented efficiently using value queries. Then, we argue about the correctness of the algorithm.

\subsection{Implementation using value queries}

\begin{lemma}\label{lem:algo-queries}
The algorithm can be implemented using $O(\log^3(1/\eps))$ value queries.
\end{lemma}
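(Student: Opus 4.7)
The plan is to account for the number of value queries at each phase of the algorithm and combine them. By the preprocessing of Section~\ref{sec:algo-prepro}, each value query on the smoothed $v_i''$ costs $O(1)$ queries on the original $v_i$, so it suffices to bound queries on the $v_i''$. The basic primitive I will use is an exact cut query on some $v_i''$: given an endpoint $a$ and a target value $\val$, find $b$ with $v_i''(a,b) = \val$. Because $v_i''$ is piecewise linear on the $\eps$-grid, this is handled by binary searching over the $O(1/\eps)$ grid cells in $O(\log(1/\eps))$ value queries, after which the exact $b$ is recovered from the interpolation formula in $O(1)$ further queries. The same $O(\log(1/\eps))$ cost applies to an indifference query for a single free cut given the other two endpoints.

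With this primitive, Step~1 (setup) computes the Agent-1 equipartition by binary searching on the common piece value $\val$: for each candidate $\val$ sequentially recover $\ell(\val),m(\val),r(\val)$ via three Agent-1 cut queries and then test whether $v_1(r(\val),1)=\val$. The residual is monotone in $\val$, so a grid binary search localizes $\val$ in $O(\log(1/\eps))$ outer iterations, each costing $O(\log(1/\eps))$, for $O(\log^2(1/\eps))$ queries total. The outer binary search of Step~2 halves $|\valhigh-\vallow|$ from at most $1$ down to $\eps^4/12$, so it executes $O(\log(1/\eps))$ iterations.

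At each outer iteration I would check whether Condition~A or Condition~B holds at $\val$. For Condition~A I enumerate the four choices of the ``rejected'' piece $k$; for each, the division $\gamma_k^A(\val)$ is obtained by three sequential Agent-1 cut queries (the chain of three Agent-1-equivalued pieces sits contiguously at one end of the cake), and the weak-preference check for Agents $2,3,4$ adds only $O(1)$ queries, for $O(\log(1/\eps))$ across all four trails. I expect Condition~B to be the main obstacle and the source of the extra logarithmic factor. There are $18$ trails $\gamma_{i,k,k'}^B$. For the three ``adjacent'' unordered pairs $(k,k')\in\{(1,2),(2,3),(3,4)\}$ the three cuts can be computed sequentially in $O(\log(1/\eps))$. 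For the three ``non-adjacent'' pairs $(k,k')\in\{(1,3),(1,4),(2,4)\}$ two of the cuts are coupled by both an Agent-1 equal-value constraint and the Agent-$i$ indifference, so I would handle them by a nested binary search: an outer grid binary search on one of the coupled cuts, an inner Agent-1 cut query recovering the other, and then evaluation of the Agent-$i$ residual. Hungriness and monotonicity make this residual monotone in the outer parameter, which justifies the outer binary search. Each non-adjacent trail costs $O(\log^2(1/\eps))$, with $O(1)$ additional queries for the preference checks. Summing over all $18$ trails yields $O(\log^2(1/\eps))$ per outer iteration.

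Combining, the total cost is the setup $O(\log^2(1/\eps))$ plus $O(\log(1/\eps))$ outer iterations of $O(\log^2(1/\eps))$ each, giving the claimed $O(\log^3(1/\eps))$ value queries.
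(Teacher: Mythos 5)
Your proposal matches the paper's proof in all essentials: the $O(\log(1/\eps))$ outer loop, the $O(\log(1/\eps))$-cost exact cut primitive coming from the piecewise-linear/$\eps$-grid preprocessing, the $O(\log^2(1/\eps))$ equipartition in Step 1, the constant-fan-out case enumeration over trails for Conditions A and B, and — the crucial point — the nested binary search (outer grid search on one cut, inner Agent-1 indifference recovery, monotone residual for Agent $i$) that gives $O(\log^2(1/\eps))$ for the non-adjacent trails of Condition B. The only cosmetic difference is that the paper binary-searches directly on the grid positions of $\ell$, $m$, $r$ for Step 1 rather than on the value $\val$, but this does not affect the bound.
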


\begin{proof}
The loop in Step 2 runs for $\log(12/\eps^4) = O(\log(1/\eps))$ iterations. Thus, in order to prove the lemma, it suffices to prove that:
\begin{enumerate}
    \item[I.] We can find the (unique) equipartition of the cake into four equal parts according to Agent 1 using $O(\log^2(1/\eps))$ value queries.
    \item[II.] Given $\val \in [0,1]$, we can check whether there exists a critical division at value $\val$ using $O(\log^2(1/\eps))$ value queries, and, if so, output such a critical division.
\end{enumerate}
Our proof crucially uses the fact that cuts satisfying various properties are unique, because the valuation functions are hungry, and the fact that they can be found by binary search due to the monotonicity of the valuations and the preprocessing.

\paragraph*{\textbf{Proof of I.}}
For any position of the leftmost cut $\ell$, we can determine using $O(\log(1/\eps))$ value queries the exact position of the cut $m$, and then $r$, such that the three leftmost pieces have identical value for Agent 1. With one additional value query we can then check if the rightmost piece has a higher or a lower value for Agent 1 than the other three pieces. By the monotonicity of the valuation function, this comparison is monotone with respect to the position of cut $\ell$. As a result, after at most $\log(1/\eps)$ steps of binary search on the $\eps$-grid (where each step requires $O(\log(1/\eps))$ value queries), we can determine in which interval of the $\eps$-grid the cut $\ell$ must lie in the equipartition. A symmetric approach also allows us to determine the interval of the $\eps$-grid in which the cut $r$ must lie in the equipartition.

For any position $m$ of the middle cut, we can determine using $O(\log(1/\eps))$ value queries the exact position of the cut $\ell$ such that the two leftmost pieces have identical value for Agent 1. Similarly, we can also find the exact position of the cut $r$ such that the two rightmost pieces have identical value of Agent 1. We can then check if the rightmost piece has a higher or lower value for Agent 1 than the leftmost piece. By the monotonicity of the valuation function, this comparison is monotone with respect to the position of cut $m$. Thus, we can perform binary search to locate the interval on the $\eps$-grid in which the cut $m$ must lie in the equipartition.

Once we have determined the interval on the $\eps$-grid in which each cut must lie, we can determine the exact positions of the cuts using a constant number of value queries.

\paragraph*{\textbf{Proof of II.}}

Given a value $\val \in [0,1]$, we can check whether Condition A or Condition B holds at value $\val$ as follows.

\textbf{Condition A.}
For each of the four possible choices for the identity of the piece $k$ that is not (necessarily) favored by Agent 1, we can determine the exact (and unique) positions of the cuts that ensure that Agent 1 has value exactly $\val$ for all pieces except (possibly) $k$. For example, if piece $k$ is the rightmost piece, then we first determine the position of $\ell$ such that $v_1(0,\ell) = \val$, then the position of $m$ such that $v_1(\ell,m) = \val$, and finally the position of $r$ such that $v_1(m,r) = \val$. Depending on the identity of piece $k$, we might have to compute the cuts in some other order, but the cuts will always be unique and can be found using $O(\log(1/\eps))$ value queries.

Once we have determined the positions of the cuts, it then suffices to check that Agent 1 has value at most $\val$ for piece $k$, and that at least two of the other agents weakly prefer piece $k$. If this check succeeds for at least one choice of piece $k$, then Condition A holds at value $\val$ and we can output a division that satisfies it. If the check fails for all four possible choices of piece $k$, then Condition A does not hold at value $\val$.

\textbf{Condition B.}
We proceed similarly to the procedure for Condition A above. We consider all possible choices for the identity of the two pieces $k,k'$ not (necessarily) favored by Agent 1, and all possible choices for the identity of the agent $i \in \{2,3,4\}$ who is indifferent between pieces $k$ and $k'$. Indeed, observe that if Condition B holds, then one of the three players 2,3,4 must be indifferent between pieces $k$ and $k'$.

Once we have fixed the identities of pieces $k,k'$ and agent $i$, the positions of the cuts are uniquely determined and, as we show below, can be found using at most $O(\log^2(1/\eps))$ value queries. We can then check if the resulting division satisfies Condition B or not.

The unique positions of the cuts can be found as follows. There are three cases:
\begin{itemize}
    \item The pieces $k$ and $k'$ are adjacent. Consider the representative example where pieces $k$ and $k'$ are the two middle pieces. The other instantiations of this case are handled analogously. Using $O(\log(1/\eps))$ value queries, we can determine the positions of cuts $\ell$ and $r$ such that $v_1(0,\ell) = \val$ and $v_1(r,1) = \val$. Then, using $O(\log(1/\eps))$ value queries, we can find the cut $m$ that ensures that $v_i(\ell,m) = v_i(m,r)$.
    \item There is one piece between pieces $k$ and $k'$. Consider the representative example where piece $k$ is the leftmost piece $[0,\ell]$ and piece $k'$ is the third piece from the left $[m,r]$. Using $O(\log(1/\eps))$ value queries we first determine the exact position of cut $r$ such that $v_1(r,1) = \val$. For each possible position of cut $\ell$, let $m(\ell)$ denote the unique position of the middle cut that satisfies $v_1(\ell,m(\ell)) = \val$. Note that, given $\ell$, we can find $m(\ell)$ using $O(\log(1/\eps))$ value queries. The crucial observation is that as we move $\ell$ to the right, $m(\ell)$ must also move to the right, due to the monotonicity of Agent 1. As a result, when we move $\ell$ to the right, piece $[0,\ell]$ becomes larger (i.e., a superset of its previous state), and piece $[m(\ell),r]$ becomes smaller (i.e., a subset of its previous state). But this means that we can use binary search on the $\eps$-grid to find the interval on the $\eps$-grid which contains the cut $\ell$ that satisfies $v_i(0,\ell)=v_i(m(\ell),r)$. This uses $O(\log^2(1/\eps))$ queries, since we perform binary search for $O(\log(1/\eps))$ steps and each step requires the computation of $m(\ell)$ given $\ell$, which takes $O(\log(1/\eps))$ value queries. By an analogous argument, namely by letting $\ell(m)$ be a function of $m$ such that $v_1(\ell(m),m) = \val$, we can also determine the interval on the $\eps$-grid which must contain cut $m$, again using $O(\log^2(1/\eps))$ value queries. Finally, we now have all the information needed to find the exact positions of cuts $\ell$ and $m$.
    \item Piece $k$ is the leftmost piece $[0,\ell]$ and piece $k'$ is the rightmost piece $[r,1]$. This case is handled analogously to the previous one. Given cut $\ell$, using $O(\log(1/\eps))$ value queries, we can determine cuts $m(\ell)$ and $r(\ell)$ such that $v_1(\ell,m(\ell)) = \val$ and $v_1(m(\ell),r(\ell)) = \val$. As a result, by monotonicity we can use $O(\log^2(1/\eps))$ value queries to determine the interval on the $\eps$-grid in which cut $\ell$ must lie to satisfy $v_i(0,\ell) = v_i(r(\ell),1)$. Similarly, we can determine the interval in which $r$ must lie. Finally, given a position for cut $m$, we can find cuts $\ell(m)$ and $r(m)$ such that $v_1(\ell(m),m) = \val$ and $v_1(m,r(m)) = \val$ using $O(\log(1/\eps))$ value queries. Again, using $O(\log^2(1/\eps))$ value queries, we can perform binary search to find the interval on the $\eps$-grid that must contain cut $m$. Then, we have all the information that is needed to find the exact positions of the cuts.
\end{itemize}
This completes the proof of the lemma.
\end{proof}

\subsection{Proof of Correctness}

In this section, we argue about the correctness of the algorithm. First of all, note that if the equipartition computed in Step 1 of the algorithm is an envy-free division, then the algorithm's output is correct. If it is not an envy-free division, then Condition A or B must hold, and thus it is a critical division at value $\equifour$. Furthermore, it is easy to see that there is no critical division at value $v_1(0,1)$, since the valuations are hungry. As a result, the binary search procedure performed in Step 2 must return $\vallow$ and $\valhigh$ such that: (i) there exists a critical division at value $\vallow$, and (ii) there is no critical division at value $\valhigh$. Furthermore, we have $\vallow < \valhigh$ and $|\valhigh-\vallow| \leq \eps^4/12$.

The algorithm returns a critical division $(\ell_0,m_0,r_0)$ where Agent 1 has value $\vallow$ for its favorite pieces. Since this division is critical, we can apply \cref{lem:continuous-path} to obtain a continuous path $T$ that starts at the division and ends at an envy-free division $T(\valstar)$. Using the fact that all divisions on the path are critical, and that there is no critical division at value $\valhigh$, it must be that $\valstar < \valhigh$. As a result, $|\valstar-\vallow| \leq \eps^4/12$. Finally, as we show below, the path $T$ is $6/\eps^3$-Lipschitz-continuous, and thus the division $(\ell_0,m_0,r_0)$ is $\eps/2$-close to the division $T(\valstar)$, which is envy-free. As a result, given that the valuations are $1$-Lipschitz-continuous, the division $(\ell_0,m_0,r_0)$ must be $\eps$-envy-free.

The Lipschitz-continuity of the continuous path $T$ follows from the following lemma. Recall that the path follows trails of type $\gamma_k^A$ and $\gamma_{i,k,k'}^B$ (see \cref{rem:old-trails}).

\begin{lemma}
The trails $\gamma_k^A$ and $\gamma_{i,k,k'}^B$ are Lipschitz-continuous (w.r.t.~the $\ell_1$-norm) with Lipschitz constant $6/\eps^3$.
\end{lemma}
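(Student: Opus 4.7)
The plan is to use implicit differentiation on the defining equations of each trail, exploiting the $\eps$-strong-hungriness (which bounds partial derivatives of valuations below by $\eps$ in absolute value) and $1$-Lipschitz property (which bounds them above by $1$) of the preprocessed valuations. Since these valuations are piecewise linear on the $\eps$-grid, every trail is piecewise linear in $\alpha$, so it suffices to bound $|\ell'(\alpha)|+|m'(\alpha)|+|r'(\alpha)|$ on each linear segment; global $(6/\eps^3)$-Lipschitz continuity in the $\ell_1$ norm then follows from continuity of the trails.

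For $\gamma_k^A$, I would handle the representative case $k=4$, whose defining equations $v_1(0,\ell)=v_1(\ell,m)=v_1(m,r)=\alpha$ can be solved sequentially. Differentiating the first gives $\ell'=1/\partial_\ell v_1(0,\ell)$, so $|\ell'|\leq 1/\eps$. Differentiating the second and substituting the bound on $\ell'$ yields $|m'|\leq (1+1/\eps)/\eps\leq 2/\eps^2$, and similarly $|r'|\leq (1+2/\eps^2)/\eps\leq 3/\eps^3$. Summing gives $|\ell'|+|m'|+|r'|\leq 1/\eps+2/\eps^2+3/\eps^3\leq 6/\eps^3$ (using $\eps\leq 1$). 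The other three values of $k$ are handled symmetrically, chaining the substitutions from the ``fixed'' piece outward.

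For $\gamma_{i,k,k'}^B$, differentiation of the three defining equations produces a $3\times 3$ linear system $J\cdot(\ell',m',r')^{T}=\mathbf{b}$, where $\mathbf{b}$ has a zero entry (from the differentiated equality $v_i(P_k)-v_i(P_{k'})=0$) and the entries of $J$ are partial derivatives of $v_1$ and $v_i$. The key structural observation is that the sign of each nonzero entry of $J$ is determined entirely by monotonicity (positive when the relevant cut is the upper endpoint of its interval, negative when lower), so every nonzero term in the Leibniz expansion of $\det J$ carries the same sign and the magnitudes add rather than cancel. I would then do a short case analysis on the positions of $\{k,k'\}$ within the four pieces (essentially three cases up to symmetry: adjacent, one piece apart, and at opposite ends). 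In the adjacent case one or both extreme cuts decouple and the bound $O(1/\eps^2)$ follows by sequential substitution; in the one-apart case one cut decouples and the remaining $2\times 2$ system has $|\det|\geq 2\eps^2$; in the opposite-ends case $|\det J|\geq 2\eps^3$. In every case the zero entry of $\mathbf{b}$ reduces the Cramer-rule numerators to subdeterminants of absolute value at most $2$, yielding $|\ell'|,|m'|,|r'|\leq 1/\eps^3$ in the worst case, and hence an $\ell_1$ norm of at most $3/\eps^3\leq 6/\eps^3$.

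The main obstacle is the sign-alignment argument for $\det J$ in the $B$-trail case: one must verify, in every configuration of $\{k,k'\}$, that the monotonicity-induced signs of the partial derivatives genuinely prevent cancellation in $\det J$, so that the lower bound of $2\eps^3$ (or better) on $|\det J|$ really does hold. Once this sign analysis is in place, the quantitative bound follows mechanically from Cramer's rule (or sequential substitution in the decoupled cases), yielding $|\ell'|+|m'|+|r'|\leq 6/\eps^3$ on every linear segment of every trail and hence the asserted global Lipschitz constant.
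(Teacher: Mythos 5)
Your proposal is correct and, for the trails of type A, essentially identical in spirit to the paper: both chain the bound $|\ell'|\le 1/\eps$ forward through the triangular system using $\eps$-strong-hungriness for the lower bound on the marginal and $1$-Lipschitzness for the cross-term. For the trails of type B you diverge from the paper's method: the paper continues with hand-tuned sequential substitution, arguing case-by-case how far each cut must move (e.g., ``since the third piece's value decreased by at most $1\cdot dm$ for Agent $i$, cut $r$ moves by at most $dm/\eps$''), while you propose implicit differentiation of the $3\times 3$ defining system and a Cramer's-rule bound driven by a sign-alignment claim for $\det J$. The sign alignment is in fact true in every one of the six $\{k,k'\}$ configurations (I checked: the nonzero monomials in the Leibniz expansion of $\det J$ all share one sign, so $|\det J|\ge 2\eps^3$, and in the adjacent and one-apart cases $J$ partially triangularizes so you get $2\eps^3$ trivially), and the argument goes through. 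Two small inaccuracies to watch: (i) your heuristic ``positive when the relevant cut is the upper endpoint of its interval, negative when lower'' is stated for rows of the form $v_1(P_p)=\alpha$ but needs a sign flip for the contribution of $P_{k'}$ in the row $v_i(P_k)-v_i(P_{k'})=0$, and a cut shared by $P_k$ and $P_{k'}$ yields a $J$-entry that is a \emph{sum} of two partials (so not literally ``a partial derivative''); (ii) the Cramer numerators can reach absolute value $4$ rather than $2$ in some cases (e.g. a $(q_1+q_2)(p_2+p_3)$ term), though the resulting bound $4/(2\eps^3)=2/\eps^3$ per coordinate still sums to $\le 6/\eps^3$. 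What your route buys is a uniform, one-shot argument that avoids tracking the direction and rate of each cut's motion separately; what the paper's route buys is that it sidesteps the case-by-case sign verification for $\det J$ and produces slightly sharper per-case constants (e.g.\ $4/\eps^2$ and $5/\eps^2$ rather than the uniform $3/\eps^3$).
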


\begin{proof}
We consider the two types of trails separately. In each case, we show that for any $\alpha$ and infinitesimal $\dval$, the cuts in division $\gamma(\alpha+\dval)$ are $6 \dval/\eps^3$-close to the cuts in division $\gamma(\alpha)$. Since the valuations are piecewise-linear, this is sufficient to deduce the statement of the lemma.

\paragraph*{\textbf{Trail of type A.}}
Consider the division $(\ell,m,r) := \gamma_k^A(\alpha)$, i.e., Agent 1 has value $\alpha$ for all pieces except piece $k$. In going from $\gamma_k^A(\alpha)$ to $\gamma_k^A(\alpha+\dval)$ the cuts move continuously as follows.
\begin{itemize}
\item If piece $k$ is the rightmost (i.e., $[r,1]$), the leftmost cut $\ell$ moves to the right by $d\ell$ such that $v_1(0,\ell+d\ell)=\val+\dval$. Simultaneously, the other two cuts move to the right, namely $m$ by $dm$ and $r$ by $dr$, such that $v_1(\ell+d\ell,m+dm)=\val+\dval$ and $v_1(m+dm,r+dr)=\val+\dval$. Since $v_1$ is $\eps$-strongly-hungry, we have that $\dval = v_1(0,\ell+d\ell) - v_1(0,\ell) \geq \eps \cdot d\ell$, which implies $d\ell \leq \dval/\eps$. Then, since $v_1$ is also $1$-Lipschitz-continuous, we have $\dval = v_1(\ell+d\ell,m+dm) - v_1(\ell,m) \geq \eps \cdot dm - 1 \cdot d\ell$, and thus $dm \leq \dval/\eps + d\ell/\eps \leq 2\dval/\eps^2$. Similarly, we also obtain that $dr \leq 3\dval/\eps^3$. Thus, the cuts collectively move by at most $6 \dval/\eps^3$. The case where piece $k$ is $[0,\ell]$ is symmetric.
\item If piece $k$ is the second from the right (i.e., $[m,r]$), the cuts $\ell$ and $m$ move as before such that $v_1(0,\ell+d\ell)=\val+\dval$ and $v_1(\ell+d\ell,m+dm)=\val+\dval$. Simultaneously, cut $r$ moves to the left such that $v_1(r-dr,1)=\val+\dval$. Using the same arguments as above, the total movement of the cuts can be bounded by $4\dval/\eps^2$. The case where piece $k$ is $[\ell,m]$ is symmetric.
\end{itemize}

\paragraph*{\textbf{Trail of type B.}}
Consider the division $(\ell,m,r) := \gamma_{i,k,k'}^B(\alpha)$, i.e., Agent 1 has value $\alpha$ for the pieces $p$ and $q$ ($\{p,q\} = \{1,2,3,4\} \setminus \{k,k'\}$), and Agent $i$ is indifferent between pieces $k$ and $k'$. In going from $\gamma_{i,k,k'}^B(\alpha)$ to $\gamma_{i,k,k'}^B(\alpha+\dval)$ the cuts move continuously as follows.
\begin{itemize}
\item If pieces $p$ and $q$ are the two leftmost pieces ($[0,\ell]$ and $[\ell,m]$), $\ell$ and $m$ move to the right, such that $v_1(0,\ell+d\ell) = \val + \dval$ and $v_1(\ell+d\ell,m+dm) = \val + \dval$. Simultaneously, $r$ moves to the right such that $v_i(m+dm,r+dr) = v_i(r+dr,1)$. Note that $r$ indeed needs to be shifted to the right (and not to the left) because of the monotonicity of $v_i$. As above, we have $d\ell \leq \dval/\eps$ and $dm \leq 2\dval/\eps^2$. Furthermore, since the third piece's value decreased by at most $1 \cdot dm$ for Agent $i$ (before moving cut $r$), it follows that cut $r$ must move by at most $dm/\eps$, i.e., $dr \leq dm/\eps \leq 2\dval/\eps^3$. Thus, the cuts collectively move by at most $5\dval/\eps^3$. The case where pieces $p$ and $q$ are the two rightmost pieces ($[m,r]$ and $[r,1]$) is symmetric.
\item If pieces $p$ and $q$ are $[0,\ell]$ and $[r,1]$, cut $\ell$ moves to the right, and cut $r$ to the left, while maintaining Agent 1's indifference, i.e., such that $v_1(0,\ell+d\ell) = \val + \dval$ and $v_1(r-dr,1) = \val+\dval$. Cut $m$ moves to the unique position that makes Agent $i$ indifferent between the two middle pieces, i.e., such that $v_i(\ell+d\ell,m \pm dm) = v_i(m \pm dm,r+dr)$. Note that, depending on the situation, $m$ might have to move to the left or to the right to satisfy the indifference condition. For this reason we denote the shifted cut by $m \pm dm$. Using similar arguments to above, we obtain that $d\ell \leq \dval/\eps$, $dr \leq \dval/\eps$, and $|dm| \leq \dval/\eps^2$, which yields that the total movement of cuts is at most $4\dval/\eps^2$.
\item If pieces $p$ and $q$ are $[0,\ell]$ and $[m,r]$, cut $\ell$ moves to the right such that $v_1(0,\ell+d\ell) = \val + \dval$. Then, cuts $m$ and $r$ simultaneously move such that $v_1(m \pm dm, r + dr) = \val + \dval$ and $v_i(\ell+d\ell,m \pm dm) = v_i(m \pm dm, r + dr)$. Note that, depending on the situation, $m$ might need to move left or right, whereas $r$ will have to move to the right by monotonicity of the valuations. Using similar arguments to above, we can show that $d\ell \leq \dval/\eps$, $dm \geq -\dval/\eps$, $dm \leq \dval/\eps^2$, and $dr \leq 2\dval/\eps^3$. The total movement of the cuts is thus at most $4\dval/\eps^3$. The case where pieces $p$ and $q$ are $[\ell,m]$ and $[r,1]$ is symmetric.
\item If pieces $p$ and $q$ are $[\ell,m]$ and $[m,r]$, then $\ell$ moves to the left and $r$ moves to the right such that Agent $i$ remains indifferent between the leftmost and rightmost piece, i.e., such that $v_i(0,\ell-d\ell) = v_i(r+dr,1)$. This specifies $dr$ as a function of $d\ell$. Next, the middle cut $m$ is shifted such that Agent 1 remains indifferent between the two middle pieces, i.e., such that $v_1(\ell-d\ell,m \pm dm) = v_1(m \pm dm,r+dr)$. This also specifies $dm$ as a function of $d\ell$. Finally, note that the value of Agent 1 for the two middle pieces strictly increases as $\ell$ is moved left (and the other cuts follow according to the above). We then also require that $v_1(\ell-d\ell,m \pm dm) = \val + \dval$, and this specifies the shifts of all cuts as functions of $\dval$. Using similar arguments to above, we can bound $|dm| \leq \dval/\eps$, and then $d\ell \leq 2\dval/\eps^2$ and $dr \leq 2\dval/\eps^2$. Thus, the total movement of the cuts is at most $5\dval/\eps^2$.
\end{itemize}
This completes the proof of the lemma.
\end{proof}

\section{An Efficient Algorithm in the Robertson-Webb Model}\label{sec:RW-algo}

In this section we show how the algorithm can also be implemented in the Robertson-Webb cake-cutting model.

\paragraph*{\textbf{Robertson-Webb query model.}}
In the Robertson-Webb cake-cutting model~\cite{WoegingerS07-cake}, the valuations are assumed to be additive, but not necessarily Lipschitz-continuous (and even if they are $L$-Lipschitz-continuous, we are not given any bound on $L$). Formally, for each valuation function $v_i: [0,1]^2 \to [0,1]$ there exists a non-atomic integrable density function $f_i: [0,1] \to [0,+\infty)$ such that for all $0 \leq a \leq b \leq 1$ we have
$$v_i(a,b) = \int_a^b f_i(x) dx.$$
Note that $v_i$ is indeed continuous because $f_i$ is non-atomic. It is also monotone. Furthermore, we assume that $v_i(0,1) = 1$.
It is easy to see that the lack of a bound on the Lipschitzness of $v_i$ (or equivalently the lack of a bound on the density $f_i$) implies that we cannot hope to solve the problem (even for two agents) using only value queries. For this reason, the Robertson-Webb model also introduces a second type of query: \emph{cut} queries. On cut query $(x,\alpha)$, where $x$ is a position on the cake and $\alpha$ is a value, the agent responds by returning a position $y$ on the cake such that $v_i(x,y) = \alpha$ (or responds that there is no such $y$). Due to the additivity, it is sufficient to only allow cut queries with $x=0$. It is also easy to see that we can simulate reverse cut queries, i.e., given $(y,\alpha)$, return $x$ such that $v_i(x,y) = \alpha$ (or respond that there is no such $x$).

In this section we prove the following result.

\begin{theorem}
For four agents with additive valuations, we can compute an $\eps$-envy-free connected allocation using $O(\log^2(1/\eps))$ value and cut queries in the Robertson-Webb model.
\end{theorem}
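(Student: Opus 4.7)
The plan is to reuse the algorithmic template of \cref{sec:algo} --- a preprocessing step followed by a binary search on the invariant value $\val$, checking whether Condition A or Condition B is realizable at each step --- with two modifications appropriate for the Robertson-Webb model. Preprocessing becomes simpler: I will set $v_i'(a,b) = (1-\delta) v_i(a,b) + \delta (b-a)$ with $\delta = \Theta(\eps)$, which preserves additivity, makes the valuations $\delta$-strongly hungry, and guarantees that an $\eps$-envy-free allocation for $(v_i')$ is $O(\eps)$-envy-free for $(v_i)$. The piecewise-linearization from \cref{sec:algo-prepro} is no longer needed, since RW cut queries are now primitives and each query to $v_i'$ can be answered by $O(1)$ RW queries to $v_i$.

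Each sub-routine in the proof of \cref{lem:algo-queries} will be re-implemented using cut queries to save a logarithmic factor. The key observation is that every inner binary search in \cref{lem:algo-queries} was essentially simulating a cut query via $O(\log(1/\eps))$ value queries under the bounded Lipschitz assumption; in the RW model a single cut query replaces each such simulation. Concretely: Agent 1's equipartition costs $3$ cut queries; checking Condition A at value $\val$ costs $O(1)$ queries per configuration (three cuts to fix the favored pieces, $O(1)$ value queries to test the other agents' preferences); and checking Condition B at value $\val$ costs $O(1)$ queries when $k,k'$ are adjacent, and $O(\log(1/\eps))$ queries in the two remaining configurations (one or two pieces between $k$ and $k'$), through an inner binary search over one cut position whose $O(\log(1/\eps))$ steps each use $O(1)$ cut queries. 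Combined with the $O(\log(1/\eps))$ outer binary-search iterations on $\val$, this yields $O(\log^2(1/\eps))$ queries in total.

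The main obstacle will be correctness, since we no longer have the $1$-Lipschitz bound used in \cref{sec:algo}. \cref{lem:continuous-path} remains applicable (it only uses monotonicity and hungriness), so the output satisfies the invariant at some $\vallow$ with the invariant failing at $\valhigh = \vallow + \text{threshold}$, and there exists a continuous path $T$ to an envy-free division with $\valstar \in [\vallow,\valhigh]$. The plan is to re-derive the bound on cut movement along $T$ in Agent 1's $v_1'$-measure rather than geometric distance: along each trail the $v_1'$-coordinates of the ``favored'' cuts are $O(1)$-Lipschitz in $\val$ by additivity, so those cuts move by $O(|\valhigh - \vallow|)$ in $v_1'$-measure along $T$. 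For the ``free'' cut on a type-B trail, which is determined by the indifference of some agent $i$, a similar estimate will be obtained in $v_i'$-measure using the $\delta$-strong hungriness of $v_i'$. Using that the hungrification lower-bounds every $v_i'$ by $\delta$ times Lebesgue and choosing the stopping threshold $|\valhigh - \vallow|$ to be a suitable polynomial in $\eps$ and $\delta$, the change of each $v_i'$ for each piece along $T$ will be at most $\eps$, giving $\eps$-envy-freeness of the output. The delicate point is handling cuts whose movement is controlled only in some agent's own measure; this is where the monotonic shrinking/growing of pieces within each trail and the bounded number of trail switches will be essential.
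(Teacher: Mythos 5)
Your proposal reproduces the high-level outline of the paper's \cref{sec:RW-algo} (equipartition, binary search on the invariant value, sub-routines for checking Conditions A and B), but you discard the one technique that makes it all work, and the replacement you sketch does not close the resulting gaps.

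The paper's preprocessing in the Robertson--Webb setting is not just strong hungrification; it is the quantile-based piecewise-linearization of \cref{lem:RW-prepro}. That lemma replaces each $v_i$ with a $\widetilde v_i$ that is linear between its own $\nquant$-quantiles, still answerable in $O(1)$ RW queries, strongly hungry, and an $O(1/\nquant)$-approximation. The whole algorithm then binary-searches over \emph{quantile indices}, not positions (\cref{lem:binary-search}): once every cut of every trail is confined to a single quantile interval for every agent, all the relevant piecewise-linear segments are known exactly, and the crossing (hence the exact envy-free division on the trail) can be computed in closed form with $O(1)$ further queries. You state that "piecewise-linearization is no longer needed, since RW cut queries are now primitives," but this is precisely where the proposal breaks. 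In your sub-routine for Condition B in the non-adjacent cases, the "inner binary search over one cut position" has no natural discretization to search over: searching over positions in $[0,1]$ for $O(\log(1/\eps))$ steps only pins the cut to geometric accuracy $\eps$, which gives no guarantee on either agent's value without a Lipschitz bound. The paper avoids this by searching over the $\leq \nquant$ quantile breakpoints; you have removed those.

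The correctness sketch has a symmetrical problem. You propose bounding cut movement along the path $T$ in each agent's own measure, using $\delta$-strong hungriness. But $\delta$-strong hungriness of $v_j'$ gives $v_j'(a,b)\geq \delta(b-a)$, i.e. it \emph{lower}-bounds $v_j'$-measure by Lebesgue measure. What you would need to control the envy of agent $j\neq 1$ is an \emph{upper} bound: that a cut moving by $d\alpha$ in $v_1'$-measure (hence by at most $d\alpha/\delta$ geometrically) changes each piece's $v_j'$-value by $O(d\alpha)$. In the Robertson--Webb model the density of $v_j'$ is unbounded above, so a geometrically tiny move of a cut can change another agent's value of an adjacent piece by $\Theta(1)$. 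Shrinking $|\valhigh - \vallow|$ to any polynomial in $\eps$ and $\delta$ therefore cannot give $\eps$-envy-freeness, and the appeal to "monotonic shrinking/growing and bounded trail switches" does not supply the missing upper bound. The paper sidesteps this entirely by not approximating at all: after its binary search the algorithm has full analytic information about every trail in the remaining $\alpha$-interval and returns an exactly envy-free division of the linearized instance.
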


\subsection{Preprocessing}

We begin with some useful ``preprocessing'' which will allow us to assume that the valuations have some additional structure.

\begin{lemma}\label{lem:RW-prepro}
Let $\nquant \in \mathbb{N}$ and let $v$ be an additive continuous valuation function. Then there exists an additive continuous valuation function $\widetilde{v}$ that satisfies:
\begin{enumerate}
\item $\widetilde{v}$ is a $2/\nquant$-approximation of $v$, i.e., $|v(a,b) - \widetilde{v}(a,b)| \leq 2/\nquant$ for all $0 \leq a \leq b \leq 1$.
\item $\widetilde{v}$ is $1/\nquant$-strongly-hungry, i.e., $\widetilde{v}(a,b) \geq (b-a)/\nquant$ for all $0 \leq a \leq b \leq 1$.
\item Any value or cut query to $\widetilde{v}$ can be answered by making a constant number of value and cut queries to $v$.
\item $\widetilde{v}$ is linear between its $\nquant$-quantiles, i.e., letting $0 = x_0 < x_1< \dots < x_\nquant = 1$ be the (unique) positions such that for all $j \in [\nquant]$
$$\widetilde{v}(x_{j-1},x_j) = 1/\nquant$$
we have that for all $j \in [\nquant]$ and all $t \in [0,1]$
$$\widetilde{v}(x_{j-1},x_{j-1} + t(x_j-x_{j-1})) = t/\nquant.$$
\end{enumerate}
\end{lemma}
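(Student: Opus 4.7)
The plan is to define $\widetilde{v}$ as the piecewise linear function that shares its $\nquant$-quantiles with $v$. Concretely, for each $j \in \{0,1,\dots,\nquant\}$ let $x_j$ be the position returned by the Robertson--Webb cut query $(0, j/\nquant)$ on $v$ (with $x_0=0$, $x_\nquant=1$). Since $v(0,\cdot)$ is continuous and takes the distinct values $j/\nquant$ at the $x_j$, monotonicity forces $0=x_0<x_1<\cdots<x_\nquant=1$. Now define $\widetilde{v}(0,\cdot)$ to be the piecewise linear function interpolating the points $(x_j, j/\nquant)$, and extend additively by $\widetilde{v}(a,b):=\widetilde{v}(0,b)-\widetilde{v}(0,a)$. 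Equivalently, $\widetilde{v}$ has constant density $1/(\nquant(x_j-x_{j-1}))$ on each interval $[x_{j-1},x_j]$.

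I would then verify the four properties. For property 4, each sub-interval has positive length so $\widetilde{v}(0,\cdot)$ is strictly increasing, making the $\nquant$-quantiles of $\widetilde{v}$ unique; by construction they coincide with the $x_j$, between which $\widetilde{v}$ is linear. For property 2, the density on $[x_{j-1},x_j]$ is $1/(\nquant(x_j-x_{j-1}))\geq 1/\nquant$ because $x_j-x_{j-1}\leq 1$, which gives $1/\nquant$-strong-hungriness. For property 1, observe that for any $a\in[x_{j-1},x_j]$ both $v(0,a)$ and $\widetilde{v}(0,a)$ lie in $[(j-1)/\nquant,\,j/\nquant]$ (the first by monotonicity of $v$, the second by linearity), so $|v(0,a)-\widetilde{v}(0,a)|\leq 1/\nquant$; applying this at $a$ and $b$ and using additivity yields $|v(a,b)-\widetilde{v}(a,b)|\leq 2/\nquant$.

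The substantive step is property 3. For a value query $\widetilde{v}(a,b)$, one value query $v(0,a)$ determines the index $j$ with $a\in[x_{j-1},x_j]$ via $j=\lceil \nquant\cdot v(0,a)\rceil$; two cut queries $(0,(j-1)/\nquant)$ and $(0,j/\nquant)$ on $v$ then recover $x_{j-1}$ and $x_j$, from which $\widetilde{v}(0,a)$ follows by linear interpolation, and symmetrically for $b$. For a cut query $(y,\beta)$ on $\widetilde{v}$, first compute $\widetilde{v}(0,y)$ as above; then $\widetilde{v}(0,b)=\widetilde{v}(0,y)+\beta$ directly identifies the interval $[x_{k-1},x_k]$ containing $b$ (or reports infeasibility if the sum exceeds $1$), and linear inversion recovers $b$ using two further cut queries on $v$. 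Each simulated query therefore costs $O(1)$ queries to $v$.

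The main obstacle I anticipate is in property 3, specifically guaranteeing that $\widetilde{v}$ is well-defined across repeated queries: if the Robertson--Webb oracle were allowed to return different preimages for the same cut query $(0,j/\nquant)$ on distinct calls (which can happen when $v$ has flat regions), then $\widetilde{v}$ would be ambiguous. I would resolve this by invoking the standard assumption that the oracle is deterministic, or equivalently by caching the $O(1)$ quantile values touched during each simulated query. Minor boundary cases arise when $v(0,a)$ or $\widetilde{v}(0,y)+\beta$ lands exactly on a multiple of $1/\nquant$, and are handled by a consistent tie-breaking rule (e.g.\ always take the left interval).
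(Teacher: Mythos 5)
Your construction of $\widetilde{v}$ (piecewise linear interpolation of the $\nquant$-quantiles of $v$, equivalently constant density $1/(\nquant(x_j-x_{j-1}))$ per segment) is exactly the one used in the paper, and your arguments for properties 1, 2, and 4 are essentially the same and correct. The place where your proof breaks is the value-query simulation in property 3, and the gap is more than a tie-breaking nuisance.

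Concretely, you propose recovering the segment index via $j = \lceil \nquant \cdot v(0,a) \rceil$ and handling ties by ``always take the left interval.'' This identifies the correct segment only when $v$ is strictly increasing. When $v$ has a flat region, there is a \emph{positive-measure} set of points $a$ with $v(0,a) = j/\nquant$ for some $j$, and for such $a$ the value $v(0,a)$ does not determine whether $a$ lies in $[x_{j-1},x_j]$ or in $[x_j,x_{j+1}]$ (the cut oracle may have returned an $x_j$ strictly to the left of $a$). Your rule picks $[x_{j-1},x_j]$ regardless, but when $a\in(x_j,x_{j+1}]$ the correct value is $j/\nquant + (a-x_j)/(\nquant(x_{j+1}-x_j))$, whereas your formula returns $(j-1)/\nquant + (a-x_{j-1})/(\nquant(x_j-x_{j-1}))$; these differ unless the two segments happen to have equal length, because the slopes differ. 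So the simulated answer is simply wrong on a positive-measure set of queries, not merely ambiguous on a measure-zero boundary. The paper avoids this by using $v(0,a)$ only to bracket the answer to two candidate segments, then \emph{comparing $a$ directly against the quantile position $x_j$} obtained from a cut query, and branching on whether $a < x_j$ or $a \geq x_j$. This still costs $O(1)$ queries and, unlike a tie-break on the value alone, is correct in flat regions. (You should also be careful with the endpoint cases $v(0,a)\in\{0,1\}$, which your $\lceil\cdot\rceil$ formula makes degenerate; the paper treats them separately.) Finally, your fallback of ``caching the $O(1)$ quantile values touched during each simulated query'' is insufficient as stated: the cache must persist across \emph{all} simulated queries in the algorithm's run, not per query, to keep the $x_j$'s consistent when the oracle is allowed to be non-deterministic on flat regions.
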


\begin{proof}
The $\nquant$-quantiles of $v$, i.e., positions $0 = x_0 < x_1 < \dots < x_\nquant = 1$ such that for all $j \in [\nquant]$
$$v(x_{j-1},x_j) = 1/\nquant$$
are in general not unique, because $v$ might not be hungry. For any choice $x_0, x_1, \dots, x_\nquant$ of $\nquant$-quantiles of $v$, we can define a corresponding valuation function $\widetilde{v}$ as follows.

The valuation function $\widetilde{v}$ is constructed by evening out the mass between the $\nquant$-quantiles $x_0, x_1, \dots, x_\nquant$ of $v$. Formally, $\widetilde{v}$ is given by its density function $\widetilde{f}$, which is defined as
$$\widetilde{f}(x) = \frac{1}{\nquant} \sum_{j=1}^\nquant \chi_{[x_{j-1},x_j]}(x) \cdot \frac{1}{(x_j-x_{j-1})}$$
for all $x \in [0,1]$. Here $\chi_I: [0,1] \to \{0,1\}$ denotes the characteristic function of interval $I$. Clearly, $\widetilde{v}$ is additive and continuous. Furthermore, it immediately follows from the construction that $\widetilde{v}(0,1) = 1$, that $x_0, x_1, \dots, x_\nquant$ are the (unique) $\nquant$-quantiles of $\widetilde{v}$, and that $\widetilde{v}$ is linear between its $\nquant$-quantiles. Finally, it is easy to check that $\widetilde{v}$ is $1/\nquant$-strongly-hungry and that it is a $2/\nquant$-approximation of $v$.

It remains to argue that we can simulate queries to $\widetilde{v}$ by using queries to $v$. To do this, we will fix the choice of $\nquant$-quantiles $x_0, x_1, \dots, x_\nquant$ of $v$ as follows: $x_j$ is defined as the cut position returned when we perform a cut query\footnote{A somewhat subtle detail is that performing the same cut query $(0,j/\nquant)$ to $v$ multiple times could potentially output different results (because the quantiles might not be unique). However, this is not an issue: it suffices for the simulation to memorize the answer the first time it performs the query and then to just reuse this answer for subsequent queries. This will ensure that the $x_j$'s are fixed throughout the simulation.} $(0,j/\nquant)$ to $v$. If the simulation never queries $(0,j/\nquant)$ for some $j$, then all simulated queries to $\widetilde{v}$ will be consistent with any of the possible choices for $x_j$.

First, we observe that for any $j \in [\nquant] \cup \{0\}$ we can obtain $x_j$ by using at most one cut query to $v$. Indeed, if $j = 0$ or $j=\nquant$, then we just use $x_0=0$ or $x_\nquant=1$ respectively. Otherwise, when $j \in [\nquant-1]$, we can use one cut query $(0,j/\nquant)$ to $v$ to obtain $x_j$. We can now simulate queries to $\widetilde{v}$ as follows:
\begin{itemize}
\item cut query $(0,\alpha)$: We first find $j \in [\nquant]$ such that $(j-1)/\nquant \leq \alpha \leq j/\nquant$. Then we compute $x_{j-1}$ and $x_j$ as described above and output $x_{j-1} + (x_j - x_{j-1}) (\nquant \alpha-(j-1))$.
\item value query $(0,b)$: We first perform one value query to $v$ to obtain $v(0,b) =: \alpha$. If $\alpha = 0$, then we compute $x_1$ as described above and output $b/(\nquant x_1)$. If $\alpha = 1$, then we compute $x_{\nquant-1}$ as described above and output $1 - (1-b)/(\nquant(1-x_{\nquant-1}))$. If $\alpha \in (0,1)$, then we find $j \in [\nquant-1]$ such that $(j-1)/\nquant < \alpha < (j+1)/\nquant$ and compute $x_j$. If $b \geq x_j$, then we compute $x_{j+1}$ and output $j/\nquant + (b-x_j)/(\nquant(x_{j+1}-x_j))$. If $b < x_j$, then we compute $x_{j-1}$ and output $j/\nquant - (x_j-b)/(\nquant(x_j-x_{j-1}))$.
\end{itemize}
Note that it is sufficient to only consider queries with endpoint $0$, since the valuations are additive.
\end{proof}

In the rest of this section, we will assume that the valuations $v_i$ have been obtained from the original valuations by applying \cref{lem:RW-prepro} with $m := \lceil 4/\eps \rceil$, i.e., we have replaced $v_i$ by $\widetilde{v}_i$. Thus, finding an envy-free allocation with respect to these valuations will yield an $\eps$-envy-free allocation with respect to the original valuations.

We let $x_0^i, \dots, x_\nquant^i$ denote the (unique) $\nquant$-quantiles of valuation $v_i$. The following will be used repeatedly in our algorithm.

\begin{lemma}\label{lem:binary-search}
Let $c: [t_1,t_2] \to [0,1]$ be continuous, strictly monotone, and let $c(t_1)$ and $c(t_2)$ be given. Furthermore, assume that given any $a \in \{c(t): t \in [t_1,t_2]\}$ we can find the unique $t$ such that $c(t) = a$ using at most $Q(\nquant)$ queries to the agents' valuations. Finally, let $s: [t_1,t_2] \to \mathbb{R}$ be an arbitrary function that can be evaluated using at most $Q(\nquant)$ queries and that satisfies $s(t_1) \leq 0$ and $s(t_2) \geq 0$. Then, using at most $O(Q(\nquant) \log \nquant)$ queries, we can compute $t_1^*, t_2^* \in [t_1,t_2]$ with $t_1^* < t_2^*$, $s(t_1^*) \leq 0$, $s(t_2^*) \geq 0$, and such that for each agent $i \in \{1,2,3,4\}$, there exist some (possibly different) $j \in [\nquant]$ such that $c(t_1^*), c(t_2^*) \in [x_{j-1}^i,x_j^i]$.
\end{lemma}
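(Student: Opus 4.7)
The plan is to maintain an interval $[t_L^*, t_R^*] \subseteq [t_1, t_2]$ with $s(t_L^*) \le 0$ and $s(t_R^*) \ge 0$, and iteratively shrink it via a carefully guided binary search until $c(t_L^*)$ and $c(t_R^*)$ lie in a common $\nquant$-quantile interval for every agent. The search will be steered by the unknown quantile positions $x_j^i$, which the algorithm discovers one at a time using cut queries.

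To initialize, I will set $t_L^* := t_1$ and $t_R^* := t_2$, and for each agent $i \in \{1,2,3,4\}$ compute indices $a_i, b_i \in [\nquant]$ (under a fixed tie-breaking convention at quantile boundaries) satisfying $c(t_L^*) \in [x_{a_i-1}^i, x_{a_i}^i]$ and $c(t_R^*) \in [x_{b_i-1}^i, x_{b_i}^i]$ with $a_i \le b_i$. Since each $x_j^i$ costs one cut query to $v_i$, standard binary search over $j$ pins down $a_i, b_i$ in $O(\log \nquant)$ queries per agent, for $O(\log \nquant)$ queries in total.

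In the main loop I will track the potential $\Phi := \sum_{i=1}^{4}(b_i - a_i) \le 4(\nquant - 1)$ and reduce it to $0$. Each iteration picks $i^* \in \argmax_i(b_i - a_i)$ (stopping when $\Phi = 0$), chooses a ``median'' quantile index $j^*$ inside $\{a_{i^*}, \dots, b_{i^*}\}$, computes $t_M := c^{-1}(x_{j^*}^{i^*})$ at cost $Q(\nquant)$, and evaluates $s(t_M)$ at cost $Q(\nquant)$. Depending on the sign of $s(t_M)$, I set $t_L^* := t_M$ or $t_R^* := t_M$ and refresh the affected indices $a_i$ or $b_i$ for all four agents by binary search (another $O(\log \nquant)$ queries). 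Because $b_{i^*} - a_{i^*} \ge \Phi/4$ by the choice of $i^*$, and the median choice of $j^*$ at least halves $b_{i^*} - a_{i^*}$ while leaving every other $b_i - a_i$ no larger, the potential $\Phi$ drops by a constant multiplicative factor per iteration and reaches $0$ in $O(\log \nquant)$ iterations. The total cost is $O(\log \nquant)$ iterations times $O(Q(\nquant) + \log \nquant)$ queries per iteration, yielding the claimed $O(Q(\nquant) \log \nquant)$ bound (absorbing $\log \nquant$ into $Q(\nquant)$, which is itself $\Omega(\log \nquant)$ since any implementation of $c^{-1}$ to $1/\nquant$-precision already requires $\Omega(\log \nquant)$ queries).

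The main obstacle I anticipate is the boundary case where $c(t_M) = x_{j^*}^{i^*}$ lands exactly on a quantile boundary, making the quantile index of the updated endpoint ambiguous and potentially stalling the potential, most delicately when $\max_i(b_i - a_i) = 1$: there the ``correct'' choice between $j^* = a_{i^*}$ and $j^* = b_{i^*}$ depends on the sign of $s(t_M)$, which we only learn after the query. I expect this to be handled either by fixing a tie-breaking convention that guarantees at least one endpoint strictly moves with every update, or by first querying $s$ at a strictly interior point of $[t_L^*, t_R^*]$ to predetermine the update direction and then choosing $j^*$ accordingly.
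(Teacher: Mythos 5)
Your proposal is essentially correct but takes a genuinely different, and more complicated, route than the paper. The paper handles the four agents \emph{sequentially}: it binary-searches over agent 1's quantile indices, shrinking $[t_1^*,t_2^*]$ until $c(t_1^*),c(t_2^*)$ land in a single agent-1 quantile, and then repeats for agents 2, 3, 4 in turn. The one observation that makes this simple strategy work — and that dispenses entirely with your potential $\Phi$ and simultaneous bookkeeping — is that each later binary search only \emph{shrinks} the interval $[c(t_1^*),c(t_2^*)]$ in inclusion order, so the condition already secured for agent 1 is automatically preserved while agents 2--4 are processed; this gives $4\log\nquant$ iterations at $O(Q(\nquant))$ queries each immediately. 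Your simultaneous scheme (pick $i^*=\argmax_i(b_i-a_i)$, bisect, track $\Phi=\sum_i(b_i-a_i)$) also reaches the bound, but it forces you to meet head-on the boundary tie-breaking issue you flagged and then deferred. The fix is indeed the convention you suspect: when $t_L^*$ moves to a quantile cut $x_j^i$, assign it to quantile $j+1$; when $t_R^*$ moves to $x_j^i$, assign it to quantile $j$ — both are legitimate memberships since the intervals $[x_{j-1}^i,x_j^i]$ are closed, and this guarantees $b_{i^*}-a_{i^*}$ strictly decreases every step. You should make that explicit rather than leave it as an expectation, since your potential argument hinges on it. One secondary point: the per-iteration refresh of the indices $a_i,b_i$ needs only a single value query per agent (the updated endpoint always sits at a known cut $x_{j^*}^{i^*}$, and each $\widetilde{v}_i$ is linear between quantiles), which removes the $O(\log\nquant)$ refresh cost and the side assumption $Q(\nquant)=\Omega(\log\nquant)$ that your accounting leans on.
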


\begin{proof}
We consider the case where $c$ is strictly increasing. The case where it is strictly decreasing can be handled analogously. We initialize $t_1^* := t_1$ and $t_2^* := t_2$. Using two value queries to Agent 1 we can determine $j_1 \leq j_2$ such that $x_{j_1-1}^1 \leq c(t_1^*) \leq x_{j_1}^1$ and $x_{j_2-1}^1 \leq c(t_2^*) \leq x_{j_2}^1$. If $j_1 = j_2$ then we are done with Agent 1. Otherwise, we perform binary search as follows:
\begin{enumerate}
\item[] As long as $j_1 < j_2$:
\item Let $j := \lfloor (j_1+j_2)/2 \rfloor$.
\item Compute $t$ such that $c(t) = x_j^1$.
\item If $s(t) \geq 0$, then set $t_2^* := t$ and $j_2 := j$.
Otherwise, set $t_1^* := t$ and $j_1 := j+1$.
\end{enumerate}
It is easy to see that this runs for at most $\log \nquant$ iterations, and each iteration requires $O(Q(\nquant))$ queries. When it terminates, we have that $[c(t_1^*), c(t_2^*)] \in [x_{j-1}^1,x_j^1]$. Repeating this in sequence for the other three agents yields the desired outcome. In particular, note that each binary search procedure can only make $[c(t_1^*), c(t_2^*)]$ smaller (w.r.t.~inclusion order), so the property will hold simultaneously for all agents at the end.
\end{proof}

\subsection{Algorithm}

We recall some notation that was introduced in the proof of \cref{lem:continuous-path} and in \cref{rem:old-trails}. We let
$\equitwo, \equithree, \equifour$ denote the value of the pieces in the equipartition according to Agent 1 into two, three, and four parts, respectively. Since we consider the additive setting here, we know that $\equitwo = 1/2$, $\equithree = 1/3$, and $\equifour = 1/4$. As mentioned in \cref{rem:old-trails}, the continuous path guaranteed by \cref{lem:continuous-path} always uses the following two types of trails.

For any piece $k \in \{1,2,3,4\}$, we define the trail $\gamma_k^A$ that maps any $\alpha \in [\equifour,\equithree]$ to the unique division $(P_1,P_2,P_3,P_4)$ that satisfies $v_1(P_t) = \alpha$ for all $t \in \{1,2,3,4\} \setminus \{k\}$.

For any agent $i \in \{2,3,4\}$ and any two distinct pieces $k,k' \in \{1,2,3,4\}$, we define the trail $\gamma_{i,k,k'}^B$ that maps any $\alpha \in [\equifour,\equitwo]$ to the unique division $(P_1,P_2,P_3,P_4)$ that satisfies $v_i(P_k) = v_i(P_{k'})$, and $v_1(P_t) = \alpha$ for all $t \in \{1,2,3,4\} \setminus \{k,k'\}$.

Note that any critical division must lie on one of these trails. If it satisfies Condition A then it must lie on a trail of type $\gamma_k^A$ for some $k$. If it satisfies Condition B then it must lie on a trail of type $\gamma_{i,k,k'}^B$ for some $i,k,k'$. We can equivalently restate the definitions of the conditions as follows.
\begin{description}
\item[Condition A:] We say that division $(P_1,P_2,P_3,P_4)$ satisfies Condition A if there exist $\alpha \in [\equifour,\equithree]$ and $k \in \{1,2,3,4\}$ such that $(P_1,P_2,P_3,P_4) = \gamma_k^A(\alpha)$ and additionally there exist two distinct agents $i,i' \in \{2,3,4\}$ such that $v_i(P_k) = \max_t v_i(P_t)$ and $v_{i'}(P_k) = \max_t v_{i'}(P_t)$. (Note that $v_1(P_k) \leq \alpha$ automatically holds since $\alpha \geq \equifour$.)
\item[Condition B:] We say that division $(P_1,P_2,P_3,P_4)$ satisfies Condition B if there exist $\alpha \in [\equifour,\equitwo]$, an agent $i \in \{2,3,4\}$, and two distinct pieces $k,k' \in \{1,2,3,4\}$ such that $(P_1,P_2,P_3,P_4) = \gamma_{i,k,k'}^B(\alpha)$ and additionally:
\begin{enumerate}
\item[i.] $v_1(P_k) \leq \alpha$ and $v_1(P_{k'}) \leq \alpha$, and
\item[ii.] $(v_i(P_{k'}) =)\,\, v_i(P_k) = \max_t v_i(P_t)$, and
\item[iii.] there exists $i' \in \{2,3,4\} \setminus \{i\}$ such that $v_{i'}(P_k) = \max_t v_{i'}(P_t)$, and
\item[iv.] there exists $i' \in \{2,3,4\} \setminus \{i\}$ with $v_{i'}(P_{k'}) = \max_t v_{i'}(P_t)$.
\end{enumerate}
\end{description}
We say that there exists a critical division at value $\val$, if there exists a critical division where Agent 1 has value $\val$ for its favorite pieces.

Let $\gamma$ be a trail, and $[\underline{\alpha}, \overline{\alpha}]$ some interval of values. Let $(\ell(\alpha),m(\alpha),r(\alpha)) := \gamma(\alpha)$ whenever this is well-defined. We say that for all $\alpha \in [\underline{\alpha}, \overline{\alpha}]$, the cuts in $\gamma(\alpha)$ remain in the same $\nquant$-quantile for all agents if for each cut $c \in \{\ell,m,r\}$, and each agent $i \in \{1,2,3,4\}$ there exists $j \in [\nquant]$ such that for all $\alpha \in [\underline{\alpha}, \overline{\alpha}]$ we have $x_{j-1}^i \leq c(\alpha) \leq x_j^i$ (whenever $c(\alpha)$ is well-defined).

We are now ready to state the algorithm.
\begin{enumerate}
\item Compute the (unique) equipartition of the cake into four equal parts according to Agent 1. If this equipartition yields an envy-free division, then stop and return this division.
\item Let $\underline{\alpha} = 1/4$ and $\overline{\alpha} = 1/2$.
\item Using binary search, find $\underline{\alpha}$ and $\overline{\alpha}$ such that
\begin{itemize}
\item There exists a critical division at value $\underline{\alpha}$, but not at value $\overline{\alpha}$.
\item For any trail $\gamma$ and for all $\alpha \in [\underline{\alpha}, \overline{\alpha}]$, the cuts in $\gamma(\alpha)$ remain in the same $\nquant$-quantile for all agents.
\end{itemize}
\item Using the information gathered, output an envy-free allocation.
\end{enumerate}

Before explaining how the steps of the algorithm are implemented, we briefly argue about the correctness. First of all, note that if the equipartition computed in Step 1 of the algorithm is an envy-free division, then the algorithm's output is correct. If it is not an envy-free division, then Condition A or B must hold at value $\equifour = 1/4$. Furthermore, it is easy to see that neither Condition A nor Condition B can hold at value $\equitwo = 1/2$. After the binary search procedure, \cref{lem:continuous-path} guarantees that there is an envy-free division on one of the trails in the interval $[\underline{\alpha}, \overline{\alpha}]$. Since all cuts remain in the same $\nquant$-quantile for all agents in all trails, we have full information about all the trails in the interval $[\underline{\alpha}, \overline{\alpha}]$. We can thus locate the envy-free division and return it.

\subsection{Implementation using value and cut queries}

In this section we argue that all steps of the algorithm can be implemented efficiently using value and cut queries. First of all, computing the equipartition in the first step of the algorithm is straightforward using a constant number of cut queries. Next, we show that we can check whether there exists a critical division at some value $\alpha$.

\begin{lemma}\label{lem:RW-check-invariant}
Given $\alpha \in [0,1]$, we can check whether there exists a critical division at value $\alpha$ using $O(\log (1/\eps))$ value and cut queries.
\end{lemma}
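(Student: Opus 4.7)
The plan is to exhaustively enumerate the $O(1)$ possible structural configurations of the candidate division---namely, for Condition~A the piece $k$ that Agent~1 may value less, and for Condition~B the triple $(i,k,k')$ specifying which other agent $i \in \{2,3,4\}$ is indifferent between which two pieces $k \neq k'$---pin down the unique candidate division for each configuration using cut queries plus at most one one-dimensional binary search, and verify the remaining preference constraints with $O(1)$ additional value queries. Throughout, we exploit the preprocessing of \cref{lem:RW-prepro}: each $v_i$ is piecewise linear between its $\nquant$-quantiles with $\nquant = \lceil 4/\eps \rceil$, so once the correct quantile segment for a cut has been identified (at a cost of $\log \nquant$ binary-search steps), its exact position can be recovered with $O(1)$ further queries.

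For Condition~A, fix the piece $k$. Setting $v_1(P_t) = \alpha$ on the other three pieces determines three cut positions that can be found by a sequence of three cut queries to $v_1$; if any cut query reports no valid position, this configuration is infeasible for the given $\alpha$. Then $O(1)$ value queries compute $v_1(P_k)$ and, for each $i \in \{2,3,4\}$, the values $v_i(P_t)$ on all four pieces, which is enough to verify $v_1(P_k) \le \alpha$ and to count how many of Agents $2,3,4$ weakly prefer $P_k$. The total is $O(1)$ queries per $k$, hence $O(1)$ queries overall for Condition~A.

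For Condition~B, fix $(i,k,k')$. The system has three equations in three cut positions: $v_1 = \alpha$ on the two pieces outside $\{k,k'\}$, together with $v_i(P_k) = v_i(P_{k'})$. If $k$ and $k'$ are adjacent, two cut queries to $v_1$ and one ``half-value'' cut query to $v_i$ determine all three cuts, in $O(1)$ queries total. In the two remaining sub-cases (where $k,k'$ are separated by one or by two pieces), we pick one cut as a free scalar parameter $t$; cut queries to $v_1$ then express the remaining cuts as continuous functions of $t$, and substituting into the Agent-$i$ indifference equation yields a one-dimensional equation $s(t) = 0$. The key structural observation---inherited from the piece-containment monotonicity arguments in the proof of \cref{lem:continuous-path}---is that $s$ is \emph{strictly monotone} in $t$, so we may apply \cref{lem:binary-search} with $Q(\nquant) = O(1)$ to locate $t$ inside a region where every resulting cut lies within a single $\nquant$-quantile segment for every agent. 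Inside such a region the preprocessing makes the system piecewise linear, so the exact root, and hence the exact division, is recoverable in $O(1)$ further queries. Each configuration thus costs $O(\log \nquant) = O(\log(1/\eps))$ queries, and the $O(1)$ candidate divisions obtained across all configurations are each checked against conditions (i)--(iv) with $O(1)$ value queries.

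The main obstacle to a query bound this tight is ensuring that, in the separated sub-cases of Condition~B, the search for a feasible division reduces to a \emph{single} monotone one-dimensional equation rather than a genuinely two-dimensional search; the monotonicity of $v_i$ together with the piece-inclusion relations used in \cref{sec:proof-existence} supply exactly this reduction. Summing over the constantly many configurations for Conditions~A and B yields the claimed $O(\log(1/\eps))$ bound on the total number of value and cut queries.
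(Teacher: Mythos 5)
Your proof is correct and follows essentially the same route as the paper's: constant casework over the piece $k$ (Condition A) or the triple $(i,k,k')$ (Condition B), explicit cut queries in the adjacent sub-case, and a one-dimensional monotone binary search via \cref{lem:binary-search} in the separated sub-cases, after which piecewise linearity yields the exact cuts. The two minor differences are that you parameterize the 1D search by a cut position rather than by the Agent-$i$ value $t$ of the outer pieces (an equivalent reparameterization since the preprocessed valuations are strictly increasing), and you attribute the strict monotonicity of $s$ to the piece-containment arguments in \cref{lem:continuous-path} when it follows more directly from the monotonicity of $v_i$ together with the directions in which the dependent cuts move as $t$ increases.
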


\begin{proof}
We show that we can check whether Condition A or B holds at a given value $\val$.
For Condition A it suffices to show that for any $k \in \{1,2,3,4\}$ we can compute the division $\gamma_k^A(\alpha)$, i.e., the division where Agent 1 has value $\alpha$ for all pieces except (possibly) piece $k$. Indeed, given that division, we can then easily verify the condition by querying the value of each piece for each agent (and this only uses a constant number of value queries). The division $\gamma_k^A(\alpha)$ can be computed using three cut queries. For example, if piece $k=3$, then we first use one cut query to determine the position $\ell$ such that $v_1(0,\ell) = \alpha$, then another cut query to determine $m$ such that $v_1(\ell,m) = \alpha$, and one final cut query to find $r$ such that $v_1(r,1) = \alpha$.

Similarly, for Condition B it also suffices to show that for any agent $i \in \{2,3,4\}$ and any distinct pieces $k,k' \in \{1,2,3,4\}$ we can compute the division $\gamma_{i,k,k'}^B(\alpha)$, i.e., the division where Agent $i$ has identical value for pieces $k$ and $k'$, and where Agent 1 has value $\alpha$ for each of the remaining two pieces. We distinguish between the following three cases:

\paragraph*{\textbf{Case 1: The pieces $k$ and $k'$ are adjacent.}}
We consider the representative example where $k$ and $k'$ are the two middle pieces. The other cases are handled analogously. Using two cut queries we can determine $\ell$ and $r$ such that $v_1(0,\ell) = \alpha$ and $v_1(r,1) = \alpha$. Then, we use one value query to obtain $\beta := v_i(\ell,r)$, and one cut query to obtain $m$ that satisfies $v_i(\ell,m) = \beta/2$. The division $(\ell,m,r)$ corresponds to $\gamma_{i,k,k'}^B(\alpha)$.

\paragraph*{\textbf{Case 2: There is one piece between pieces $k$ and $k'$.}}
Consider the representative example where piece $k$ is the piece $[\ell,m]$, i.e., the second piece from the left, and piece $k'$ is the piece $[r,1]$, i.e., the rightmost piece. Using one cut query we can determine $\ell$ such that $v_1(0,\ell) = \alpha$, and then using one value query we obtain $\beta := v_i(\ell,1)$. For any $t \in [0,\beta/2]$, let $m(t)$ and $r(t)$ denote the unique cuts that satisfy $v_i(\ell,m(t)) = t$ and $v_i(r(t),1) = t$. Our goal is to find the unique $t^*$ that satisfies $v_1(m(t^*),r(t^*)) = \alpha$. For this we can use the binary search approach of \cref{lem:binary-search}. Indeed, $t \mapsto m(t)$ is continuous and strictly increasing, and both $m(\cdot)$ and its inverse $m^{-1}(\cdot)$ can be computed using a single query. Similarly, $t \mapsto r(t)$ is continuous and strictly decreasing, and both $r(\cdot)$ and its inverse $r^{-1}(\cdot)$ can be computed using one query. Let $s: t \mapsto \alpha - v_1(m(t),r(t))$, which can be evaluated using three queries, and note that $s(0) = \alpha - v_1(\ell,1) \leq 0$ and $s(\beta/2) = \alpha \geq 0$. Applying \cref{lem:binary-search} twice, first on $m(\cdot)$ and then subsequently on $r(\cdot)$, we use $O(\log(1/\eps))$ queries to obtain $[t_1^*,t_2^*] \subseteq [0, \beta/2]$ such that $s(t_1^*) \leq 0$, $s(t_2^*) \geq 0$, and $[m(t_1^*),m(t_2^*)]$ and $[r(t_2^*),r(t_1^*)]$ lie within $\nquant$-quantiles for all agents. As a result, with an additional constant number of queries we now have full information about the functions $m$, $r$, and thus $s$ on $[t_1^*,t_2^*]$, and we can find $t^*$ such that $s(t^*) = 0$, i.e., $v_1(m(t^*),r(t^*)) = \alpha$. The division $(\ell, m(t^*), r(t^*))$ then corresponds to $\gamma_{i,k,k'}^B(\alpha)$.

\paragraph*{\textbf{Case 3: Piece $k$ is the leftmost piece and piece $k'$ is the rightmost piece.}}
This case is handled using the same approach as in the previous case. For any $t \in [0,1/2]$, we let $\ell(t)$ and $r(t)$ denote the unique cuts that satisfy $v_i(0,\ell(t)) = t$ and $v_i(r(t),1) = t$. Similar arguments show that we can find $t^*$ such that $v_1(\ell(t^*),r(t^*)) = 2\alpha$ using $O(\log(1/\eps))$ queries. The desired division is then $(\ell(t^*),m,r(t^*))$ where $m$ is picked so as to satisfy $v_1(\ell(t^*),m) = \alpha$.

This completes the proof of the lemma.
\end{proof}

Finally, we have to explain how the third step of the algorithm is implemented, namely the binary search. Starting with $[\underline{\alpha}, \overline{\alpha}] = [1/4,1/2]$, we first consider the trail $\gamma_1^A$ and using binary search we shrink the search interval $[\underline{\alpha}, \overline{\alpha}]$ until the cuts in $\gamma_1^A(\alpha)$ remain in the same $\nquant$-quantile for all agents (we show below how exactly to do this). Next, we consider the trail $\gamma_2^A$ and further shrink the search interval $[\underline{\alpha}, \overline{\alpha}]$ until the desired property holds for $\gamma_2^A$ as well. Importantly, note that the property that we ensured for $\gamma_1^A$ continues to hold, because we have only made the search interval smaller. We continue like this until all four trails of type A and all 18 trails of type B have been handled. The final search interval thus obtained now satisfies the desired property for all trails simultaneously.

In order to complete the proof it remains to show that for any given trail we can efficiently shrink the interval until the desired property holds. This is proved in the following two lemmas.

\begin{lemma}\label{lem:RW-search-A}
Let $\underline{\alpha} < \overline{\alpha}$ be such that there is a critical division at $\underline{\alpha}$ but not at $\overline{\alpha}$. Let $k \in \{1,2,3,4\}$. Then using $O(\log^2(1/\eps))$ queries we can find $\underline{\alpha}^* < \overline{\alpha}^*$ such that $[\underline{\alpha}^*, \overline{\alpha}^*] \subseteq [\underline{\alpha}, \overline{\alpha}]$, there is a critical division at $\underline{\alpha}^*$ but not at $\overline{\alpha}^*$, and such that for all $\alpha \in [\underline{\alpha}^*, \overline{\alpha}^*]$ the cuts $(\ell(\alpha), m(\alpha), r(\alpha)) = \gamma_k^A(\alpha)$ remain in the same $\nquant$-quantile for all agents.
\end{lemma}

\begin{proof}
For concreteness let us consider the case where $k=3$. The other cases are handled analogously. For $\alpha \in [\underline{\alpha}, \overline{\alpha}]$ let $(\ell(\alpha), m(\alpha), r(\alpha)) := \gamma_k^A(\alpha)$. Recall that these are the unique cuts satisfying $v_1(0,\ell(\alpha)) = v_1(\ell(\alpha),m(\alpha)) = v_1(r(\alpha),1) = \alpha$. It is easy to check that $\ell(\cdot)$, $m(\cdot)$, and $r(\cdot)$ are continuous, strictly monotone, and that their inverses can be computed using a single query. We define the function $s: [\underline{\alpha}, \overline{\alpha}] \to \{-1,+1\}$ as follows
\begin{equation}\label{eq:RW-invariant-function}
s(\alpha) = \left\{\begin{tabular}{cl}
$-1$ & if there exists a critical division at value $\alpha$\\
$+1$ & if there is no critical division at value $\alpha$
\end{tabular} \right.
\end{equation}
Note that $s(\underline{\alpha}) = -1$ and $s(\overline{\alpha}) = +1$, and that by \cref{lem:RW-check-invariant} we can evaluate $s$ using $O(\log(1/\eps))$ queries.

It follows that we can apply the binary search approach of \cref{lem:binary-search} to find $\underline{\alpha}^*$ and $\overline{\alpha}^*$ that satisfy the desired properties using $O(\log^2(1/\eps))$ queries. Namely, we first apply \cref{lem:binary-search} to ensure that $\ell(\cdot)$ remains in the same $\nquant$-quantile for all agents, then we apply the lemma again to further restrict the interval of $\alpha$ so that the same holds for $m(\cdot)$, and then a third time for $r(\cdot)$.
\end{proof}

\begin{lemma}
Let $\underline{\alpha} < \overline{\alpha}$ be such that there is a critical division at $\underline{\alpha}$ but not at $\overline{\alpha}$. Let $i \in \{2,3,4\}$ and $k, k' \in \{1,2,3,4\}$ with $k \neq k'$. Then using $O(\log^2(1/\eps))$ queries we can find $\underline{\alpha}^* < \overline{\alpha}^*$ such that $[\underline{\alpha}^*, \overline{\alpha}^*] \subseteq [\underline{\alpha}, \overline{\alpha}]$, there is a critical division at $\underline{\alpha}^*$ but not at $\overline{\alpha}^*$, and such that for all $\alpha \in [\underline{\alpha}^*, \overline{\alpha}^*]$ the cuts $(\ell(\alpha), m(\alpha), r(\alpha)) = \gamma_{i,k,k'}^B(\alpha)$ remain in the same $\nquant$-quantile for all agents.
\end{lemma}

\begin{proof}
Our approach will be similar to the proof of \cref{lem:RW-search-A}. Namely, we will seek to apply the binary search procedure of \cref{lem:binary-search} repeatedly to ensure that all cuts remain in the same $\nquant$-quantile for all agents. In particular, we will use the same function $s$ from \eqref{eq:RW-invariant-function}. However, some cuts will no longer be monotone functions of $\alpha$, and as a result we will have to argue more carefully. For $\alpha \in [\underline{\alpha}, \overline{\alpha}]$, let $(\ell(\alpha), m(\alpha), r(\alpha)) := \gamma_{i,k,k'}^B(\alpha)$. We consider three different cases.

\paragraph*{\textbf{Case 1: The pieces $k$ and $k'$ are adjacent.}}
We focus on the most challenging subcase, namely when $k$ and $k'$ are the two middle pieces. Recall that the cuts satisfy $v_1(0,\ell(\alpha)) = v_1(r(\alpha),1) = \alpha$ and $v_i(\ell(\alpha),m(\alpha)) = v_i(m(\alpha),r(\alpha))$ in this case. As a result, it is easy to see that $\ell(\cdot)$ is strictly increasing and $r(\cdot)$ is strictly decreasing, but $m(\cdot)$ is not necessarily monotone. Nevertheless, we can apply the binary search approach from \cref{lem:binary-search} first to $\ell(\cdot)$, and then to $r(\cdot)$ to ensure that those cuts remain in the same $\nquant$-quantile for all agents. Recall that this requires $O(\log^2(1/\eps))$ queries, because evaluating $s$ requires $O(\log(1/\eps))$ queries. In particular, we now have full information about $\ell(\cdot)$ and $r(\cdot)$ as functions of $\alpha$.

The crucial observation is that $m(\cdot)$ is now a monotone function in $\alpha$. To be more precise, $m(\cdot)$ is either constant or strictly monotone. The reason for this is that as $\alpha$ increases, the cuts $\ell(\alpha)$ and $r(\alpha)$ traverse the value of Agent $i$ at some constant rate (because these two cuts do not cross any $\nquant$-quantile for any agent). More formally, we have that $v_i(\ell(\alpha),\ell(\alpha+\dval)) = c \cdot \dval$, and $v_i(r(\alpha+\dval),r(\alpha)) = c' \cdot \dval$, where $c$ and $c'$ do not depend on $\alpha$. If the two rates are the same, i.e., $c=c'$, then $m(\alpha)$ will not move. Otherwise, if the two rates are different, then $m(\alpha)$ will move left or right depending on which rate is larger.

Now, if $m(\cdot)$ is constant, then it trivially remains in the same $\nquant$-quantile for all agents. If it is strictly monotone, then its inverse is well-defined and we can evaluate it using a constant number of queries. Indeed, for any fixed $m^*$, using a constant number of queries we can obtain full information about the values of all pieces to all agents, and thus determine the value of $\alpha$ such that $m^* = m(\alpha)$. As a result we can now apply \cref{lem:binary-search} to ensure that $m(\cdot)$ also remains in the same $\nquant$-quantile for all agents.

\paragraph*{\textbf{Case 2: There is one piece between pieces $k$ and $k'$.}}
We consider the representative example where piece $k$ is the piece $[\ell,m]$, i.e., the second piece from the left, and piece $k'$ is the piece $[r,1]$, i.e., the rightmost piece. Recall that in this case the cuts satisfy $v_1(0,\ell(\alpha)) = v_1(m(\alpha),r(\alpha)) = \alpha$ and $v_i(\ell(\alpha),m(\alpha)) = v_i(r(\alpha),1)$. Clearly, $\ell(\cdot)$ is strictly increasing in $\alpha$ and we can thus use the binary search approach of \cref{lem:binary-search} to restrict the interval of $\alpha$ so that $\ell(\cdot)$ remains in the same $\nquant$-quantile for all agents.

Next, we observe that $r(\cdot)$ is also strictly increasing in $\alpha$. Indeed, if there exist $\alpha_1 < \alpha_2$ such that $r(\alpha_1) \geq r(\alpha_2)$, then it follows that $m(\alpha_1) \geq m(\alpha_2)$, since $m(\alpha_1) < m(\alpha_2)$ would imply the contradiction that $v_1(m(\alpha_1),r(\alpha_1)) > v_1(m(\alpha_2),r(\alpha_2)) = \alpha$ because $v_1$ is hungry. But since $\ell(\alpha_1) < \ell(\alpha_2)$, we then obtain that $v_i(\ell(\alpha_1),m(\alpha_1)) > v_i(\ell(\alpha_2),m(\alpha_2)) = v_i(r(\alpha_2),1) \geq v_i(r(\alpha_1),1)$, again a contradiction.

Furthermore, we can compute the inverse of $r(\cdot)$ using $O(\log(1/\eps))$ queries. Indeed, given any $r^*$, we can query $\beta := v_i(r^*,1)$, and then use $O(\log(1/\eps))$ queries to find the cuts $\ell^*, m^*$ such that $v_i(\ell^*,m^*) = \beta$ and $v_1(0,\ell^*) = v_1(m^*,r^*)$ (use the same approach as in the proof of \cref{lem:RW-check-invariant}, but swap the roles of Agents 1 and $i$, and replace $\alpha$ by $\beta$). Letting $\alpha := v_1(0,\ell^*)$ we then must have $r^* = r(\alpha)$. As a result, we can use $O(\log^2(1/\eps))$ queries to perform the binary search approach of \cref{lem:binary-search} to restrict the search interval of $\alpha$ so that $r(\cdot)$ remains in the same $\nquant$-quantile for all agents.

In general, $m(\cdot)$ is not a monotone function of $\alpha$. However, now that $\ell(\cdot)$ and $r(\cdot)$ remain in the same $\nquant$-quantile for all agents, $m(\cdot)$ must be a monotone function. Namely, it will be either constant or strictly monotone. To see why this is the case, let us assume that $m(\cdot)$ is not constant and show that $m(\cdot)$ is then necessarily strictly monotone. Let $m^*$ be such that there exists $\alpha^*$ in the current search interval with $m(\alpha^*) = m^*$. We will show that $\alpha^*$ is unique (in the current search interval) and thus $m(\cdot)$ has to be strictly monotone. Since we have fixed $m^*$, using a constant number of queries we now have full information about the position of cut $r'(\alpha)$ as a function of $\alpha$, where $r'(\cdot)$ is defined such that $v_1(m^*,r'(\alpha)) = \alpha$.\footnote{Here we also restrict $\alpha$ to be such that $r'(\alpha)$ lies in the same $\nquant$-quantile as $r(\cdot)$ for all agents. Indeed, if $r'(\alpha)$ does not satisfy this, then it immediately follows that $\alpha$ is such that $m^* \neq m(\alpha)$.} Furthermore, since the cuts $\ell(\cdot)$ and $r'(\cdot)$ do not cross any $\nquant$-quantiles, we know that $v_1(\ell(\alpha),\ell(\alpha+\dval)) = c \cdot v_i(\ell(\alpha),\ell(\alpha+\dval))$ and $v_1(r'(\alpha),r'(\alpha+\dval)) = c' \cdot v_i(r'(\alpha),r'(\alpha+\dval))$. If $c \neq c'$, then there exists at most one $\alpha^*$ with $v_i(\ell(\alpha^*),m^*) = v_i(r'(\alpha^*),1)$, and thus at most one $\alpha^*$ with $m^* = m(\alpha^*)$. If $c=c'$, then since there exists $\alpha^*$ with $m^*=m(\alpha^*)$, it must be that all $\alpha$ in the current interval satisfy $m^*=m(\alpha)$, i.e., $m(\cdot)$ is constant, a contradiction.

If $m(\cdot)$ is constant, then it trivially remains in the same $\nquant$-quantile for all agents. If it is strictly monotone, then its inverse is well-defined and it can be evaluated using a constant number of queries (by using the approach for inverting in the previous paragraph). It follows that we can use the binary search approach of \cref{lem:binary-search} to ensure that $m(\cdot)$ also remains in the same $\nquant$-quantile for all agents.

\paragraph*{\textbf{Case 3: Piece $k$ is the leftmost piece and piece $k'$ is the rightmost piece.}}
Recall that in this case the cuts satisfy $v_i(0,\ell(\alpha)) = v_i(r(\alpha),1)$ and $v_1(\ell(\alpha),m(\alpha)) = v_1(m(\alpha),r(\alpha)) = \alpha$. It is easy to see that $\ell(\cdot)$ is strictly decreasing and $r(\cdot)$ is strictly increasing. Furthermore, they can both be inverted using a constant number of queries. As a result, we can use \cref{lem:binary-search} to ensure that $\ell(\cdot)$ and $r(\cdot)$ remain in the same $\nquant$-quantile for all agents.

In general, $m(\cdot)$ is not a monotone function, but now that $\ell(\cdot)$ and $r(\cdot)$ remain in the same $\nquant$-quantile for all agents, $m(\cdot)$ will be constant or strictly monotone. Furthermore, when it is strictly monotone, it can also be inverted using a constant number of queries, because we can obtain full information about the values of all pieces for all agents when the middle cut is fixed (again using the fact that $\ell(\cdot)$ and $r(\cdot)$ remain in the same $\nquant$-quantile for all agents). As a result, by applying \cref{lem:binary-search} we can again ensure that $m(\cdot)$ remains in the same $\nquant$-quantile for all agents.

This completes the proof of the lemma.
\end{proof}

\section{The Intersection End-of-Line Problem}\label{sec:IEoL}

We begin with the definition of the problem.

\begin{definition}[{\sc Intersection End-of-Line}] \hfill

\begin{description}
\item[Input] 
Each party receives a superset of the edges; we say that an edge is {\em active} in the {\sc End-of-Line} instance $G = (V,E)$ if it is in the intersection of all the supersets. Edges that only appear in the supersets of some parties are called {\em inactive}.

\item[Output] A solution to the {\sc End-of-Line} instance defined by the active edges.
\end{description}

\end{definition}

In this section we prove the following result.

\begin{lemma}\label{lem:CC-IEoL}

{\sc Intersection End-of-Line} with $k \ge 3$ parties requires $\poly(|G|)$ communication complexity,
even in the special case where the instances satisfy the following promises: 
\begin{enumerate}
\item[0.] Every node has at most one active incoming edge, and at most one active outgoing edge.
\item Every inactive edge is only included in at most one party's set of edges.
\item No vertex has both an $i$-inactive and a $j$-inactive incident edge, for some $i \neq j$. (An edge is said to be {\em $i$-inactive}, if it is inactive, but included in Party $i$'s superset.)
\item Every vertex is assigned to one party (the vertex-party assignment is publicly known): Only that party's superset  may have more than one edge coming into that vertex. Similarly, only that party's superset may have more than one edge going out of that vertex.
\end{enumerate}
\end{lemma}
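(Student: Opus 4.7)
The plan is to derive the lower bound by reducing from a number-on-forehead (NOF) lifting of classical {\sc End-of-Line}. Specifically, I would start from a known communication-hard NOF variant of {\sc End-of-Line} in which the answers $S(v)$ and $P(v)$ are not handed to any single party, but must be extracted by the parties via a small NOF lifting gadget whose bits sit on their foreheads. Such a variant inherits a $\poly(|G|)$ NOF communication lower bound from a standard lifting theorem combined with the query-hardness of {\sc End-of-Line}, since in NOF each party $i$ sees all inputs except its own, so the only residual uncertainty for party $i$ is its own forehead value.

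The central step is the unrolling. For each vertex $v$ of the lifted graph and for each of the computations $S(v)$, $P(v)$, I replace the ``query'' of the gadget by a path-like subgraph of $G'$, thought of as an unrolled decision tree: the vertices along the path are ordered by the steps of the gadget, and each internal vertex is \emph{owned} by exactly one party $i$, corresponding to the party whose forehead input is consumed at that step. At an $i$-owned branching vertex, party $i$'s superset of edges contains all outgoing branches (since party $i$ does not see its own forehead), whereas every other party $j\ne i$ sees $i$'s forehead and therefore keeps only the single outgoing edge matching the true input bit. Taking the intersection of all supersets collapses each gadget to a unique directed path from $u$ to $S(u)$, and similarly for the predecessor direction. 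The ``active'' subgraph of $G'$ is thus in natural bijection with the lifted {\sc End-of-Line} instance, up to contracting these path gadgets.

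I would then verify the four promises. Promise~0 holds because after intersection each gadget becomes a single path, so active in- and out-degrees are at most one. Promise~1 holds because ``dead'' branches at an $i$-owned vertex never appear in any other party's superset, by construction. Promise~2 is the main constraint on the gadget: whenever a branch is spawned by party $i$, all vertices on that dead branch as well as on the surrounding ``pruning'' subgraph must be $i$-owned and incident only to $i$-inactive edges. I would enforce this by serializing the gadget so that consecutive branching steps belonging to different parties are separated by fresh ``buffer'' vertices, and by duplicating any shared substructure per owning party so that inactive edges of different parties never meet at a common vertex. Promise~3 follows from the publicly fixed schedule of which party consumes which input bit at which step of the gadget.

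The reduction is completed by showing that any solution to {\sc Intersection End-of-Line} on $G'$ (an end-of-line, an $(S,P)$-inconsistency, or a vertex with more than one active successor or predecessor) contracts to a valid solution to the lifted {\sc End-of-Line}, and this decoding uses only public information. Consequently any $o(\poly(|G|))$-communication protocol for {\sc Intersection End-of-Line} would yield an NOF protocol of comparable cost for the lifted problem, contradicting the lower bound, and the lemma follows. The main obstacle is the simultaneous satisfaction of Promises~1 and~2: I expect the bulk of the work to lie in designing the per-party ``owned neighborhoods'' around each branching vertex so that dead branches are fully absorbed by the owning party without ever creating vertices incident to inactive edges of two different parties; the requirement $k \ge 3$ enters here, since it is what makes the underlying NOF lifting theorem available and what gives enough ``honest'' parties to pin down the active branch at every step.
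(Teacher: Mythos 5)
Your proposal takes essentially the same route as the paper's proof: start from a number-on-forehead lifting of {\sc End-of-Line} (which the paper instantiates via the critical-block-sensitivity machinery of Göös--Pitassi, Göös--Rubinstein, and Sherstov), unroll each lifting gadget into a layered tree in which Party~$i$'s superset branches at the layer where its own forehead input is consumed while every other party prunes to the single correct edge, and observe that the intersection of the supersets recovers exactly the embedded {\sc End-of-Line} paths while the tree structure automatically isolates each party's dead branches. The paper's concrete nine-layer, two-sided ``handshake'' construction is one specific way to implement the unrolling and the ownership schedule you describe, and its verification of the four promises matches your sketch.
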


\subsection{Number-on-Forehead End-of-Line}
The starting point for the proof of \cref{lem:CC-IEoL} is the number-on-forehead communication complexity of lifted {\sc End-of-Line}.

\begin{definition}[Number-on-Forehead (NoF) End-of-Line] \hfill

\noindent We consider a (common knowledge) host graph  $H$ and $k$-party {\em lifting gadgets} $g : \Sigma^k \rightarrow \{0,1\}$.\\
{\bf Inputs:} $k$ parties receive number-on-forehead inputs to gadgets $g$ for each edge of $H$.
The inputs induce a subgraph $G'$ of the host graph $H$, i.e.~the edges with $g(\cdot) = 1$.\\
{\bf Output:} The goal is to find a solution of the {\sc End-of-Line} problem on $G'$.
\end{definition}

The following lemma follows as a corollary of results from~\cite{Sherstov14,GP18,GR18}.
\begin{lemma}[Complexity of {\sc NoF End-of-Line}]
For every constant number of parties $k \ge 2$, there exist gadgets $g_k : \Sigma^k \rightarrow \{0,1\}$ for $|\Sigma| = 2^{k^{o(1)}}$, and a constant-degree host graph $H$ over $n$ vertices such that {\sc NoF End-of-Line} requires bounded-error randomized communication of $\tilde{\Omega}(\sqrt{n})$.
\end{lemma}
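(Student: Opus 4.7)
\smallskip

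The plan is to derive the lemma by combining three ingredients that are individually available in the cited literature: (i) a randomized query-complexity lower bound for \textsc{End-of-Line} on a bounded-degree graph; (ii) a number-on-forehead lifting theorem that transfers query lower bounds for search problems to NoF communication lower bounds; and (iii) a careful choice of gadget alphabet so that $|\Sigma|=2^{k^{o(1)}}$ still suffices for the lifting.

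First, I would fix a constant-degree host graph $H$ on $n$ vertices (e.g.\ an appropriate expander, or a graph naturally emerging from a hiding-the-endpoint distribution) and record the fact that any bounded-error randomized algorithm that has only black-box access to the incident-edge function of the induced subgraph $G'\subseteq H$ must make $\tilde{\Omega}(n)$ queries to solve \textsc{End-of-Line}. This is a standard hiding/path-length argument: embed a long path in $H$ and randomize the location of its endpoint, so that until a query lands on (or near) the hidden endpoint the algorithm cannot distinguish between many potential output vertices. The presence of an additional special source of degree one is handled by standard reductions.

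Next, I would invoke an NoF lifting theorem in the spirit of Sherstov~\cite{Sherstov14} and the query-to-communication framework of~\cite{GP18}, in the search-problem form developed in~\cite{GR18}. These theorems say: for a search problem $\mathcal{S}$ over $\{0,1\}^N$ with randomized query complexity $q$, if each input bit is replaced by a $k$-party gadget $g_k:\Sigma^k\to\{0,1\}$ with sufficiently strong pseudorandomness/discrepancy properties, then the NoF randomized communication complexity of $\mathcal{S}\circ g_k^N$ is at least $\tilde{\Omega}(q\cdot \log|\Sigma|)$. Applied to the \textsc{End-of-Line} query problem on $H$, with $N=O(n)$ adjacency bits and gadgets placed independently on each bit, this immediately gives an NoF lower bound of $\tilde{\Omega}(\sqrt{n})$ (in fact more, but $\tilde{\Omega}(\sqrt{n})$ is all we claim). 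The key properties required of $g_k$ are typically small discrepancy under NoF partitions and resilience to partial fixings, both of which are satisfied by the gadgets constructed in~\cite{Sherstov14,GR18} with alphabet size $|\Sigma|=2^{k^{o(1)}}$.

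The main obstacle is the last point: ensuring the lifting theorem applies in the \emph{search} setting (not merely for decision problems) with an alphabet as small as $2^{k^{o(1)}}$ for $k$ parties. For $k=2$ this is handled by classical two-party lifting with constant-size gadgets; for larger constant $k$ the correct parameter regime is exactly the one identified in~\cite{GR18} (building on Sherstov's NoF discrepancy bounds), where the gadget is a $k$-wise variant whose discrepancy is inverse-exponential in $k^{o(1)}$. The remainder of the proof is bookkeeping: verifying that the lifted search problem is literally an instance of \textsc{NoF End-of-Line} (each lifted adjacency bit on an edge of $H$ is computed by $g_k$ on the parties' foreheads), so the search problem on the induced subgraph $G'$ coincides with the communication problem defined in the statement.
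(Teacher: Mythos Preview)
Your proposal cites the right three papers but misidentifies what they prove, and this matters for the parameters. The lifting theorem of G\"o\"os--Pitassi~\cite{GP18} does \emph{not} state that the NoF communication complexity of $\mathcal{S}\circ g_k^N$ is at least $\tilde{\Omega}(q\cdot\log|\Sigma|)$ for $q$ the randomized query complexity. What it actually shows is that the NoF communication of the lifted search problem is lower-bounded by the $k$-party NoF communication complexity of \emph{unique set disjointness} on an instance of size $cbs(\mathcal{S})$, the \emph{critical block sensitivity} of $\mathcal{S}$. The role of~\cite{GR18} is then only to supply the bound $cbs(\textsc{EoL})=\tilde{\Omega}(n)$; it is not a ``search-problem form'' of a lifting theorem. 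Finally, Sherstov~\cite{Sherstov14} bounds the $k$-party NoF complexity of unique set disjointness on $m$ bits by $\Omega(\sqrt{m}/(2^k k))$. Chaining these three facts gives $\tilde{\Omega}(\sqrt{n})$ for constant $k$, which is exactly the paper's proof.

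In particular, your parenthetical ``in fact more'' is wrong: the square-root loss is intrinsic to the Sherstov bound on NoF unique set disjointness, not slack in the argument. A direct linear-in-$q$ NoF lifting with gadget alphabet $2^{k^{o(1)}}$, as you describe, is not known for $k\geq 3$; the only route that works with these parameters is the $cbs\to\textsc{UDisj}\to\text{Sherstov}$ chain. Your high-level plan (lift a query-type hardness to NoF communication) is right in spirit, but the specific lifting statement you invoke does not exist in the cited literature. Replacing randomized query complexity by critical block sensitivity and inserting the unique-set-disjointness step both fixes the argument and explains where the $\sqrt{n}$ actually comes from.
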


\begin{proof}
By~\cite{GP18}, there exist gadgets as in the theorem statement such that the communication complexity of {\sc NoF End-of-Line} is lower bounded by the $k$-party NoF communication complexity of unique set disjointness of size $cbs(\textsc{EoL})$, where $cbs(\cdot)$ denotes the critical block sensitivity (we will not define unique set disjointness of critical block sensitivity here --- refer, e.g., to~\cite{GP18}).

By~\cite{GR18}, the critical block sensitivity of {\sc End-of-Line} is $cbs(\textsc{EoL})=\tilde{\Omega}(n)$.
Finally, by~\cite{Sherstov14}, the $k$-party NoF communication complexity of the unique set disjointness of size $cbs(\textsc{EoL})$ is at least $\Omega(\frac{\sqrt{cbs(\textsc{EoL})}}{2^k k}) = \tilde{\Omega}(\sqrt{n})$.
\end{proof}

\subsection{Embedding the lifting gadgets in an End-of-Line graph}\label{sub:hard-intersect-EoL}

\begin{proof}[Proof of \cref{lem:CC-IEoL}]
We reduce {\sc NoF End-of-Line} to {\sc Intersection End-of-Line}. We present the reduction for $k=4$, since this is the version we will later use, but it easily generalizes to any $k \geq 3$.

\subsubsection{Constructing the vertices}
Given host graph $H = (V_H,E_H)$ for the {\sc NoF End-of-Line} problem, we construct the vertex set $V$ of the {\sc Intersection End-of-Line} as follows.

For each vertex $v \in V_H$, we construct the following vertices:
\begin{enumerate}
\item  We construct a vertex $v^1 \in V$, and assign it to Party 1. It will later be convenient to refer to $v^1$ using the labels $v^{\outgoing,1}$ and $v^{\incoming,1}$ (both refer to the same vertex). 
\item
\begin{enumerate} 
\item For each possible vector of inputs $\sigma^{\outgoing,1} \in \Sigma^{\odeg{v}}$ on Party 1's forehead in the lifting gadgets corresponding to edges going out of $v$, we construct a vertex $v^{\outgoing,2}_{\sigma^{\outgoing,1}} \in V$, and assign it to Party 2.
\item Similarly, for each possible vector of inputs $\sigma^{\incoming,1} \in \Sigma^{\ideg{v}}$ on Party 1's forehead in the lifting gadgets corresponding to edges coming into $v$, we construct a vertex $v^{\incoming,2}_{\sigma^{\incoming,1}} \in V$, and assign it to Party 2.
\end{enumerate} 
\item
\begin{enumerate} 
\item For each possible vector of outgoing gadgets inputs $(\sigma^{\outgoing,1}, \sigma^{\outgoing,2}) \in \Sigma^{\odeg{v}} \times \Sigma^{\odeg{v}}$ on Party 1's and Party 2's foreheads, we construct a vertex $v^{\outgoing,3}_{\sigma^{\outgoing,1}, \sigma^{\outgoing,2}} \in V$, and assign it to Party 3.
\item Similarly, for each possible vector of incoming gadgets inputs $(\sigma^{\incoming,1}, \sigma^{\incoming,2}) \in \Sigma^{\odeg{v}} \times \Sigma^{\odeg{v}}$ on Party 1's and Party 2's foreheads, we construct a vertex $v^{\incoming,3}_{\sigma^{\incoming,1}, \sigma^{\incoming,2}} \in V$, and assign it to Party 3.
\end{enumerate} 
\item Analogously, we construct vertices $v^{\outgoing,4}_{\sigma^{\outgoing,1}, \sigma^{\outgoing,2}, \sigma^{\outgoing,3}}, v^{\incoming,4}_{\sigma^{\incoming,1}, \sigma^{\incoming,2}, \sigma^{\incoming,3}} \in V$ and assign them to Party 4. 
\end{enumerate}

We're now ready to construct vertices for the full vectors $\sigma^{\outgoing} = (\sigma^{\outgoing,1}, \sigma^{\outgoing,2}, \sigma^{\outgoing,3}, \sigma^{\outgoing,4})$ (and similarly, $\sigma^{\incoming} = (\sigma^{\incoming,1}, \sigma^{\incoming,2}, \sigma^{\incoming,3}, \sigma^{\incoming,4})$) that correspond to all the inputs for the edges potentially going out of vertex $v$. The vector $\sigma^{\outgoing}$ determines all outgoing neighbors of $v$ in the original graph. We let $w$ denote an arbitrary such outgoing neighbor. If there are no outgoing neighbors, we set $w := v$. Similarly, we let $u$ denote an arbitrary incoming neighbor, and set $u := v$ if there is no such neighbor.
We use $\rho$ and $\tau$ to denote the parties' inputs corresponding to vertices $u$ and $w$ (resp.).

For each $v \in V$ and corresponding lifting gadget inputs $\sigma^{\outgoing}$ and $\sigma^{\incoming}$, yielding neighborhood $u \rightarrow v \rightarrow w$, we construct the following vertices%
\footnote{We remark that Layers 5-8 are useful for guaranteeing the desideratum about vertices assigned to parties; for hardness of {\sc Intersection End-of-Line} in the general case they are unnecessary.}:
\begin{enumerate}
  \setcounter{enumi}{4}
\item Two vertices $(v^{\outgoing,5}_{\sigma^{\outgoing}},w^{\incoming})$ and $(v^{\incoming,5}_{\sigma^{\incoming}},u^{\outgoing})$, assigned to Party 1.
\item Another layer of vertices, depending on Party 1's inputs and assigned to Party 2:
\begin{enumerate}
\item For each input $\tau^{\incoming,1}$ on Party 1's forehead for $w$'s incoming edges, construct vertex $(v^{\outgoing,6}_{\sigma^{\outgoing}},w^{\incoming}_{\tau^{\incoming,1}})$.
\item For each input $\rho^{\outgoing,1}$ on Party 1's forehead for $u$'s outgoing edges, construct vertex $(v^{\incoming,6}_{\sigma^{\incoming}},u^{\outgoing}_{\rho^{\outgoing,1}})$.
\end{enumerate}
\item Vertices $(v^{\outgoing,7}_{\sigma^{\outgoing}},w^{\incoming}_{\tau^{\incoming,1},\tau^{\incoming,2}}), (v^{\incoming,7}_{\sigma^{\incoming}},u^{\outgoing}_{\rho^{\outgoing,1},\rho^{\outgoing,2}})$ assigned to Party 3.
\item Vertices $(v^{\outgoing,8}_{\sigma^{\outgoing}},w^{\incoming}_{\tau^{\incoming,1},\tau^{\incoming,2},\tau^{\incoming,3}}), (v^{\incoming,8}_{\sigma^{\incoming}},u^{\outgoing}_{\rho^{\outgoing,1},\rho^{\outgoing,2},\rho^{\outgoing,3}})$ assigned to Party 4.
\end{enumerate}

Finally we're ready to construct the last layer of vertices, which is not assigned to any party\footnote{These vertices can be assigned to an arbitrary party for the purpose of ensuring desideratum 3.
}:
\begin{enumerate}
  \setcounter{enumi}{8}
  \item We construct two vertices, each with two symmetric labels:
\begin{align*}
(v^{\outgoing,9}_{\sigma^{\outgoing}},w^{\incoming}_{\tau^{\incoming}}) &= (w^{\incoming,9}_{\tau^{\incoming}},v^{\outgoing}_{\sigma^{\outgoing}})\\
(v^{\incoming,9}_{\sigma^{\incoming}},u^{\outgoing}_{\rho^{\outgoing}}) &=(u^{\outgoing,9}_{\rho^{\outgoing}},v^{\incoming}_{\sigma^{\incoming}}).
\end{align*}
  \end{enumerate}

\subsubsection{Constructing the edges}

We describe the edges between the ``$\outgoing$'' vertices; the edge construction for ``$\incoming$'' vertices is analogous. 
\begin{itemize}
\item For $i \in \{1,2,3,4\}$, Party $i$'s superset includes the edge from $v^{\outgoing,i}$-vertex to a $v^{\outgoing,i+1}$-vertex whenever:
\begin{itemize}
\item $v^{\outgoing,i}$'s subscript is a prefix of $v^{\outgoing,i+1}$'s subscript; and
\item all the lifting gadget inputs in the subscripts (in particular $v^{\outgoing,i+1}$'s subscript) are consistent with Party $i$'s information, i.e., all the inputs not on $i$'s forehead are correct.
\end{itemize}
\item Similarly, for $i \in \{5,6,7,8\}$, Party $(i-4)$'s superset includes the edge from $(v^{\outgoing,i},w^{\outgoing})$-vertex to a $(v^{\outgoing,i+1},w^{\outgoing})$-vertex whenever the former subscript is a prefix of the latter, and they are consistent with Party $i$'s input. 
\end{itemize}

Notice that an edge $(v \rightarrow w)$ in the {\sc NoF End-of-Line} instance is embedded as the following path in the intersection of the parties' supersets:

\begin{gather*}\Big(v^1 \rightarrow \dots \rightarrow (v^{\outgoing,5}_{\sigma^{\outgoing}},w) \rightarrow \dots \rightarrow (v^{\outgoing,9}_{\sigma^{\outgoing}},w^{\incoming}_{\tau^{\incoming}})= (w^{\incoming,9}_{\tau^{\incoming}},v^{\outgoing}_{\sigma^{\outgoing}}) \rightarrow \dots \rightarrow w^1\Big).
\end{gather*}

\subsubsection{\textsc{NoF End-of-Line} to \textsc{Intersection End-of-Line}: Analysis}

First, notice that by construction every vertex has at most one active incoming edge, and at most one active outgoing edge. This is due to the fact that whenever there are multiple possible successors in the tree-like graph that we construct, at least 3 parties have all the information needed to exclude all but one of those outgoing edges. Furthermore, the only vertices that can have odd degree are the vertices on the last layer, namely the vertices of the form $(v^{\outgoing,9}_{\sigma^{\outgoing}},w^{\incoming}_{\tau^{\incoming}})$ and $(v^{\incoming,9}_{\sigma^{\incoming}},u^{\outgoing}_{\rho^{\outgoing}})$. This is because our construction ensures that there is a unique active path from $v^1$ to one of its leaves $(v^{\outgoing,9}_{\sigma^{\outgoing}},w^{\incoming}_{\tau^{\incoming}})$, and a unique active path from one of the leaves $(v^{\incoming,9}_{\sigma^{\incoming}},u^{\outgoing}_{\rho^{\outgoing}})$ to $v^1$. Every other edge in any of the two trees around $v^1$ will be inactive, because at least one of the endpoints of such an edge must have a subscript that is inconsistent with the correct input information, and thus at least one party (in fact, all but one) will see this and not include the edge. Now, if a vertex of the form $(v^{\outgoing,9}_{\sigma^{\outgoing}},w^{\incoming}_{\tau^{\incoming}})$ has odd degree, then it must have out-degree 0, since it has in-degree 1. But if it has out-degree 0, that means that it is not included in the tree constructed around $w^1$. This can only happen if: (i) $w$ has multiple predecessors in the original graph (and picked a different one in the tree branch corresponding to those inputs), or (ii) $w=v$ (i.e., $v$ has no outgoing edge) and $v$ has at least one incoming edge. In both cases we obtain a solution to {\sc NoF End-of-Line}.

We now verify the remaining particular desiderata from the statement of \cref{lem:CC-IEoL}.
\begin{enumerate}
\item \underline{Every inactive edge is only included in at most one party's set of edges.}\\
Every inactive edge has at least one endpoint with at least one label that is different from the corresponding lifting gadget input on some party's forehead. 
Hence it will not be included in any of the other parties' supersets (all other parties know this label is wrong). 
\item \underline{No vertex has both an $i$-inactive and a $j$-inactive incident edge.}\\
If an $i$-inactive edge is between two vertices that both have (purported) information from Party $i$'s input in their subscript, then that information is the same on both vertices and does not correspond to what lies on Party $i$'s forehead. As a result, no other Party will have any incident edges on these two vertices. If only one of the two endpoints contains (purported) information from Party $i$'s input, then that information is incorrect (so that endpoint does not have any $j$-inactive edge) and all other inactive edges incident on the other endpoint must have an endpoint with incorrect $i$-information as well.

\item \underline{Only the assigned party's superset may have $>$1 outgoing or $>1$ incoming edges on a vertex.}\\
Party $i$'s superset can have more than one edge coming into the same vertex if and only if it does not know the next label because it is written on Party $i$'s forehead. By construction, this only happens for vertices assigned to Party $i$. An analogous argument holds for outgoing edges.
\end{enumerate}
This completes the proof.
\end{proof}

\section{Communication Lower Bound for Envy-free Cake-cutting with Four Non-monotone Agents}\label{sec:cake-hardness}

In this section we prove the following result.

\begin{theorem}
For four agents with non-monotone valuations, finding an $\eps$-envy-free connected allocation requires $\poly(1/\eps)$ communication.
\end{theorem}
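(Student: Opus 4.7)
The plan is to reduce the {\sc Intersection End-of-Line} problem, whose $\poly(|G|)$ communication hardness is given by \cref{lem:CC-IEoL}, to the four-agent $\eps$-envy-free cake-cutting problem, choosing parameters so that $1/\eps$ is polynomial in the size of the {\sc IEoL} instance. The reduction proceeds in the two stages sketched as Steps II and III in \cref{sub:overview-embed}: first an inherently two-dimensional Sperner-style embedding of {\sc IEoL}, and then a translation of that embedding into non-monotone valuations over intervals of the cake.

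For the embedding, I would adapt the 2D {\sc End-of-Line} embedding of~\cite{CD09} to a distributed four-party setting, assigning every grid point in a discretized unit square a color $(\flabel, \slabel) \in \{\pm 1\}^2$. Edges of the underlying {\sc EoL} graph are routed as paths colored by three stripes $(+1,+1),(-1,+1),(-1,-1)$ on a default background of $(+1,-1)$, so that cells exhibiting both diagonal label pairs can only arise at endpoints of embedded paths or at crossings that I will handle by a dedicated intersection gadget. Because the {\sc IEoL} input is distributed, each party's embedding is different; exploiting the promises of \cref{lem:CC-IEoL} (at most one party sees a given inactive edge, and ``extra'' edges at a vertex are all seen by the same party), I would design the intersection and horizontal-path gadgets so that at every grid cell and in at least one coordinate a strict majority of the four parties sees the same sign, while the lone dissenter is identifiable from the $x$-coordinate alone.

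Next, I would encode a partition $(\ell,m,r)$ as a point $(x,y)$ in the unit square by $x \approx \ell \approx 1-r$ and $y \approx m$. The baseline value of each piece is taken to be (close to) its length, which forces any $\eps$-EF partition to have all four pieces of roughly equal length, hence the cuts to lie in separate regions and $\ell \approx 1-r$ to hold robustly. On top of the baseline, the second and third pieces (which ``see'' both $\ell$ and $m$, hence the full point $(x,y)$) receive additive corrections $\gamma \cdot \flabel$ and $\gamma \cdot \slabel$ respectively, while the first piece receives an $x$-dependent boost of $\pm \beta$ with $\beta = 8\gamma$, and the fourth piece is valued exactly as its length. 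The corrections are calibrated so that in any grid cell where all four parties agree on a positive (resp.\ negative) first label, the second piece is over- (resp.\ under-) demanded, ruling out $\eps$-envy-freeness; the symmetric statement for the third piece handles the second label. The boost is precisely what is needed to handle the three-vs-one regions: each party uses the $x$-coordinate alone (which is all the first piece sees) to predict where the other three parties may perceive a neutral label and to tilt the demand for the first and second pieces accordingly, a prediction made possible only because our special intersection and horizontal-path gadgets constrain where a party can differ from the others.

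The correctness argument then amounts to showing that any $\eps$-EF partition must land in a grid cell that is simultaneously a Sperner solution of the embedded problem and therefore corresponds to a solution of the original {\sc IEoL} instance, yielding the desired $\poly(1/\eps)$ communication lower bound via \cref{lem:CC-IEoL}. The main obstacle, as the overview emphasizes, is the joint design of the intersection and horizontal-path gadgets together with the strip-wise boost on the first piece: I need to simultaneously close off spurious $\eps$-EF partitions in the four-parties-agree regions (handled by the $\gamma$-corrections), in the three-parties-agree regions (handled by the $\beta$-boost), and around genuine path endpoints and crossings (handled by the custom gadgetry), while ensuring that the boost respects the fact that the first piece cannot see the $y$-coordinate. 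Verifying all these interactions simultaneously is the technically heaviest part of the construction.
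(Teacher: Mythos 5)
Your proposal follows essentially the same strategy as the paper's proof: reducing from {\sc Intersection End-of-Line} via a distributed two-dimensional Sperner embedding with four-color labels, encoding the three cuts $(\ell,m,r)$ as a point in the unit square with $x\approx\ell\approx 1-r$ and $y\approx m$, using piece values close to lengths plus $\gamma$-corrections from the labels on the second and third pieces and a vertical-strip $\beta$-boost on the first piece, and relying on custom crossing and horizontal-path gadgets so that the lone dissenting party is detectable from the $x$-coordinate alone. The remaining work you identify (the case analysis over grid cells and regions) is exactly what the paper's \cref{sub:special-embedding,sub:valuations,sub:hardness-analysis} carry out.
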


The rest of this section is devoted to the proof of this theorem. We reduce from the $4$-party {\sc Intersection End-of-Line} problem. In~\cref{sub:special-embedding} we show how each party can embed its superset of edges into a two-dimensional variant of the {\sc Sperner} problem. In~\cref{sub:valuations} we then use each party's embedding to construct a corresponding valuation function for a cake cutting instance. Finally, in~\cref{sub:hardness-analysis} we analyze this reduction; specifically we show that any $\eps$-envy free partition of the cake maps to a region in the Sperner embedding that embeds a solution of the original {\sc Intersection End-of-Line} problem.

Our reduction can also be used to obtain the following two results. We provide more details about this in \cref{sec:hardness-PPAD-query}.

\begin{theorem}
For four agents with \emph{identical} non-monotone valuations, finding an $\eps$-envy-free connected allocation requires $\poly(1/\eps)$ queries.
\end{theorem}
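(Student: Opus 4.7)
The plan is to obtain this result as a specialization of the reduction used for the communication lower bound. When the four parties in an {\sc Intersection End-of-Line} instance share identical supersets of edges, there are no inactive edges and the problem degenerates to standard {\sc End-of-Line}. The starting point for the query lower bound will therefore be the standard {\sc End-of-Line} problem in the (black-box) query model, for which a $\poly(N)$ query lower bound is classical, where $N$ is the size of the graph.

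First I would argue that the Sperner-style 2D embedding in \cref{sub:special-embedding} specializes straightforwardly when the graph is ``ordinary'' {\sc End-of-Line} rather than its intersection variant: the four parties' views of the graph all coincide, so the four associated colorings of the discretized unit square become a single coloring, and there is no need for the special gadget that reconciles disagreements between parties about which embedded edges exist. In particular, all the subtleties about one party seeing a crossing that the other three do not disappear, and we are left with a standard embedding of {\sc End-of-Line} in the plane à la~\cite{CD09,HPV89}. Correspondingly, the value functions constructed in \cref{sub:valuations} all coincide: the ``boost'' mechanism (which was introduced precisely to make one party disagree with the other three in a controlled way) can be turned off entirely, since there are no neutral labels for any party to worry about. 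This yields a single valuation function $v$ (assigned to all four agents) of the same form as in the main theorem, but without per-party variation.

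Next I would verify the analysis of \cref{sub:hardness-analysis} carries through in this identical-valuation setting. The argument there shows that any $\eps$-envy-free partition $(\ell,m,r)$ maps via $(\ell,m) \mapsto (x,y)$ to a grid cell whose four corner labels encode a Sperner-type solution; under identical valuations this analysis becomes strictly easier, since all agents evaluate each piece in exactly the same way and the ``three-out-of-four agree'' machinery collapses to ``all four agree.'' The resulting Sperner solution then decodes back to a solution of the underlying {\sc End-of-Line} instance by inverting the 2D embedding.

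Finally, I would check that the valuation function $v$ constructed this way admits one value query per $O(\polylog(1/\eps))$ black-box queries to the {\sc End-of-Line} circuits $S,P$, so that any cake-cutting algorithm using $q$ value queries yields an {\sc End-of-Line} algorithm using $q \cdot \polylog(1/\eps)$ queries; combined with the $\poly(N) = \poly(1/\eps)$ lower bound for {\sc End-of-Line} in the query model (where we scale so that $1/\eps$ is polynomially related to $N$), this delivers the claimed $\poly(1/\eps)$ query lower bound. The main technical obstacle I anticipate is bookkeeping rather than conceptual: making sure that the embedding and the valuation construction from the communication proof really do specialize cleanly, and in particular that the ``boost'' terms vanish identically (rather than only asymptotically) so that we indeed obtain a single valuation shared by all four agents.
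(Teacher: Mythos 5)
Your proposal matches the paper's own argument (\cref{sec:hardness-PPAD-query}) essentially point for point: reduce from standard {\sc End-of-Line} (with its classical $\poly(n)$ query lower bound), observe that identical agents yield identical embeddings and hence a single shared valuation, note that the analysis collapses since all the ``three-out-of-four'' machinery is only there to handle cross-party disagreement, and check query efficiency of the construction. The only small detail the paper flags that you do not mention explicitly is that the $(-1,-1)$ overwrite of connector regions also becomes unnecessary (since each vertex now has in/out-degree at most one), but this is a minor implementation note and your overall route is the same.
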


\begin{theorem}
For four agents with \emph{identical} non-monotone valuations, finding an $\eps$-envy-free connected allocation is \ppad/-complete.
\end{theorem}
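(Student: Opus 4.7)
The plan is to specialize the communication reduction of \cref{sec:cake-hardness} to the white-box setting with identical valuations, reducing from the standard \ppad/-complete problem {\sc End-of-Line} rather than from {\sc Intersection End-of-Line}. Since all four agents share the same valuation function $v$, any $\eps$-envy-free allocation is automatically an $\eps$-equipartition of $v$ (each of the four pieces has value within $O(\eps)$ of $v(0,1)/4$), so the task reduces to efficiently constructing a valuation $v$ whose approximate equipartitions encode solutions to the original {\sc End-of-Line} instance. Containment in \ppad/ is already known from~\cite{DQS12} (as cited in the preliminaries), so only hardness needs to be established.

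First, I would use the 2D Sperner-style embedding from \cref{sub:special-embedding}, but substantially simplified. Because there is no longer a distinction between different ``party views'' of the graph, only a standard single-view 4-color embedding is needed: the discretized unit square is colored with labels $(\pm 1,\pm 1)$, embedded {\sc End-of-Line} paths are painted by stripes $(+1,+1),(-1,+1),(-1,-1)$ over the $(+1,-1)$ background, and panchromatic grid cells appear precisely at the endpoints of embedded paths, i.e.~at solutions of the original {\sc End-of-Line}. The custom crossing gadget of \cref{sub:special-embedding}, tailored to handle the awkward case where only one party sees an edge crossing, can be replaced by a standard Chen-Deng-style crossing gadget \cite{CD09}, and the grid resolution is chosen to be a small polynomial in $\eps$.

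Next, I would define $v$ following \cref{sub:valuations}, dropping the party-dependent ``boost'' mechanism that exists solely to reconcile disagreements between differently-informed parties. Concretely, for cuts $(\ell,m,r)$ the first and fourth pieces are valued (approximately) by their lengths; the point $(x,y)$ encoded by $(\ell,m,r)$ takes $x \approx \ell \approx 1-r$ and $y \approx m$; and the second and third pieces are valued at their lengths plus $\gamma \cdot \flabel(x,y)$ and $\gamma \cdot \slabel(x,y)$ respectively, for a sufficiently small $\gamma = \poly(\eps)$. This yields a piecewise-defined $v$ computable by a polynomial-size circuit given the circuits representing the successor/predecessor functions of the {\sc End-of-Line} instance, providing the desired Karp reduction.

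The core step is the analog of the analysis of \cref{sub:hardness-analysis}: any $\eps$-envy-free (hence $\eps$-equipartition) allocation for $v$ must correspond to a grid cell whose corners realize both $+1$ and $-1$ in each coordinate, yielding a Sperner solution and therefore a solution of the embedded {\sc End-of-Line}. The constraints $|v(A_i)-v(A_j)|\le \eps$ force piece lengths to cluster near $1/4$, which robustly enforces $\ell\approx 1-r$ and makes the $\flabel,\slabel$ perturbations the dominant factor; these perturbations cannot cancel unless both labels take both signs within the active cell. The main obstacle will be ruling out spurious equipartitions in the monochromatic background regions and calibrating $\gamma$ against the grid resolution so that no ``near-equipartition'' appears off embedded paths; this follows from the $1$-Lipschitz near-length structure of $v$, exactly as in \cref{sub:valuations}, but significantly cleaner since the asymmetric handling of different parties is no longer required.
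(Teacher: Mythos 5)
Your proposal follows essentially the same route as the paper's own (quite terse) proof in \cref{sec:hardness-PPAD-query}: reduce from the standard {\sc End-of-Line} rather than {\sc Intersection End-of-Line}, give all four agents the same labeling (hence the same valuation), and observe that the analysis collapses to the easy cases because every agent agrees everywhere. Your further concrete simplifications --- replacing the custom crossing gadget by a Chen--Deng-style one and zeroing out the boost function --- are valid observations, since that machinery exists only to reconcile disagreements between differently-informed parties and is vacuous when views coincide; the paper instead just keeps ``the same construction'' and notes that the relevant hard cases never arise and that the connector region is never overwritten with $(-1,-1)$ because active in/out-degrees are at most one. One small imprecision in your framing: for non-additive $v$ envy-freeness does not by itself force each piece's value near $v(0,1)/4$; that conclusion is justified here only because the constructed $v$ is near-length (i.e., approximately additive), and the paper's analysis in any case works directly via the toolbox claims rather than via an equipartition reformulation.
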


\subsection{Special Sperner Embedding of End-of-Line}\label{sub:special-embedding}

In this section, we present the first step of the reduction, which is a special embedding of an \textsc{End-of-Line} instance into a two-dimensional Sperner-type problem. While it is well-known how to do this in the query or white-box model \cite{CD09}, we have to very carefully construct a special embedding here since we are reducing from the communication problem {\sc Intersection End-of-Line}. In more detail, each of the four parties is going to separately embed the edges of {\sc Intersection End-of-Line} that it sees (i.e., that are included it its superset of the edges) in a very specific way.\footnote{An important point is that agents can see multiple incoming or outgoing edges on a vertex. Existing embeddings do not allow for this, because it cannot occur in a standard End-of-Line instance. Our embedding is carefully constructed in order to be able to handle this.} This will ensure that an important set of properties (proved in the claims below) will hold. In the next section, the embedding of party $i$ will then be used to construct the valuation function of agent $i$. The properties proved here will be crucial for the latter analysis of the reduction. In this section we use ``agent'' and ``party'' interchangeably.

Consider an {\sc Intersection End-of-Line} instance with four parties/agents that satisfies the promises of \cref{lem:CC-IEoL}. Let $n$ denote the number of nodes of the instance. Let vertex $1$ be the trivial source.\footnote{We sometimes use $0$ to denote the trivial source in other parts of the paper.}

\paragraph*{\bf Grid and labeling.}
Consider the two-dimensional grid $S = \{0, 1, 2, \dots, N\}^2$ where $N := 300 n^4$. For each agent $i \in \{1,2,3,4\}$ we construct a labeling $\lambda_i: S \to \{+1,-1\}^2$. In other words, every agent assigns one label to each grid point, and the possible labels are $(+1,+1)$, $(+1,-1)$, $(-1,+1)$, $(-1,-1)$. We say that $(+1,+1)$ and $(-1,-1)$ are opposite labels, and similarly $(+1,-1)$ and $(-1,+1)$ are opposite labels too. A square of the grid is a set of points $\{j, j+1\} \times \{k, k+1\}$ for some $j,k \in \{0, 1, 2, \dots, N-1\}$. We say that an agent sees a Sperner solution in a square of the grid, if the four corners contain two opposite labels, i.e., if both labels $(+1,+1)$ and $(-1,-1)$ appear, or if both labels $(+1,-1)$ and $(-1,+1)$ appear. In particular, if a square is not a Sperner solution, then the labels appearing at its four corners must be all be the same or belong to one of the following four sets: $\{(+1,+1), (+1,-1)\}$, $\{(+1,+1), (-1,+1)\}$, $\{(-1,-1), (-1,+1)\}$, or $\{(-1,-1), (+1,-1)\}$.

\paragraph*{\bf Vertex regions and lanes.}
For each of the $n$ vertices of End-of-Line, we define a corresponding region on the diagonal of grid $S$. Namely, for vertex $j \in [n]$, its vertex region is
$$V(j) := \{300 n^3 (j-1) + 1, \dots, 300 n^3 j\}^2.$$
The vertex region has a certain number of lanes incident to it. They are shown in \cref{fig:vertex-region} and defined formally below.

We subdivide the set of grid points
$$\{1, \dots, 300 n^3 (j-1) + 1\} \times \{300 n^3 (j-1) + 100 n^3 + 1, \dots, 300 n^3 (j-1) + 200 n^3\}$$
into $n$ horizontal lanes of width $100 n^2$ each. For any $k < j$, we let $L_j^{left}(k)$ denote the $k$th such lane. Formally, $L_j^{left}(k) := \{1, \dots, 300 n^3 (j-1) + 1\} \times \{300 n^3 (j-1) + 100 n^3 + 100 n^2 (k-1) + 1, \dots, 300 n^3 (j-1) + 100 n^3 + 100 n^2 k\}$.

We subdivide the set of grid points
$$\{300 n^3 j, \dots, N\} \times \{300 n^3 (j-1) + 100 n^3 + 1, \dots, 300 n^3 (j-1) + 200 n^3\}$$
into $n$ horizontal lanes of width $100 n^2$ each. For any $k > j$, we let $L_j^{right}(k)$ denote the $k$th such lane. Formally, $L_j^{right}(k) := \{300 n^3 j, \dots, N\} \times \{300 n^3 (j-1) + 100 n^3 + 100 n^2 (k-1) + 1, \dots, 300 n^3 (j-1) + 100 n^3 + 100 n^2 k\}$.

We subdivide the set of grid points
$$\{300 n^3 (j-1) + 200 n^3 + 1, \dots, 300 n^3 j\} \times \{300 n^3 j, \dots, N\}$$
into $n$ vertical lanes of width $100 n^2$ each. For any $k > j$, we let $L_j^{top}(k)$ denote the $k$th such lane. Formally, we have $L_j^{top}(k) := \{300 n^3 (j-1) + 200 n^3 + 100 n^2 (k-1) + 1, \dots, 300 n^3 (j-1) + 200 n^3 + 100 n^2 k\} \times \{300 n^3 j, \dots, N\}$.

Similarly, we also subdivide the set of grid points
$$\{300 n^3 (j-1) + 1, \dots, 300 n^3 (j-1) + 100 n^3\} \times \{1, \dots, 300 n^3 (j-1)+1\}$$
into $n$ vertical lanes of width $100 n^2$ each. For any $k < j$, we let $L_j^{bottom}(k)$ denote the $k$th such lane. Formally, we have $L_j^{bottom}(k) := \{300 n^3 (j-1) + 100 n^2 (k-1) + 1, \dots, 300 n^3 (j-1) + 100 n^2 k\} \times \{1, \dots, 300 n^3 (j-1)+1\}$.

\begin{figure}
	\centering
	\includegraphics[scale=0.75]{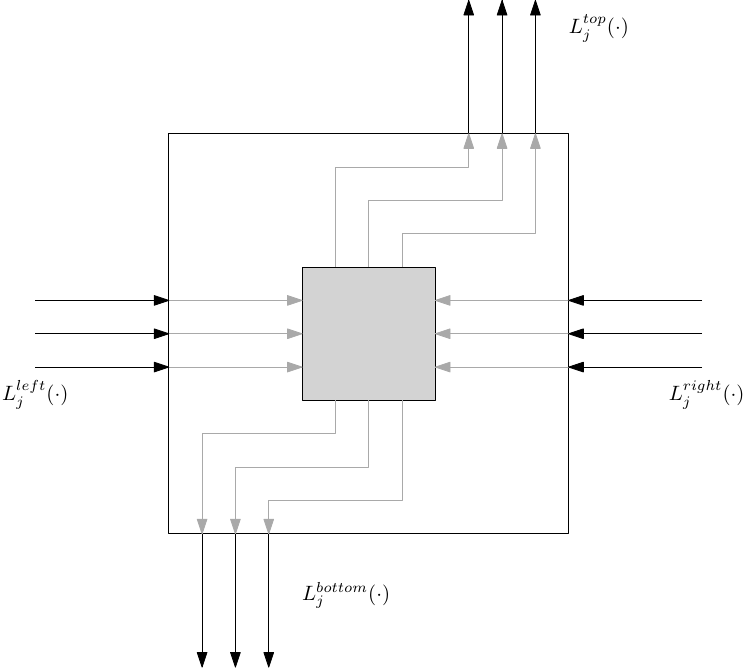}
	\caption{Illustration of the vertex region $V(j)$. The black edges represent lanes (where we have only represented three of each type here). The gray square in the center of the vertex region represents the connector region. The gray edges inside the vertex region show how (potential) paths can be routed to the connector region without intersecting.}
	\label{fig:vertex-region}
\end{figure}

\paragraph*{\bf Embedding of edges.} If agent $i$ sees an edge from vertex $j$ to vertex $k$ in the {\sc Intersection End-of-Line} instance, then it will implement a path from vertex region $V(j)$ to vertex region $V(k)$ in its labeling $\lambda_i$. In more detail:
\begin{itemize}
    \item If $j < k$, then the path leaves $V(j)$ using vertical lane $L_j^{top}(k)$ and moves up until it reaches horizontal lane $L_k^{left}(j)$. The path then turns towards the right and follows the horizontal lane $L_k^{left}(j)$ until it reaches $V(k)$.
    \item If $j > k$, then the path leaves $V(j)$ using vertical lane $L_j^{bottom}(k)$ and moves down until it reaches horizontal lane $L_k^{right}(j)$. The path then turns towards the left and follows the horizontal lane $L_k^{right}(j)$ until it reaches $V(k)$.
\end{itemize}

\paragraph*{\bf Environment label and paths.} The environment label is $(+1,-1)$. Unless otherwise specified, the label of a grid point is the environment label $(+1,-1)$. A path consists of a central part with label $(-1,+1)$ that is surrounded by one protective layer on each side. The protective layer on the left (when moving in the forward direction) has label $(+1,+1)$. The protective layer on the right has label $(-1,-1)$. Note that a path does not introduce a Sperner solution, because opposite labels are isolated from each other. The path can also turn without introducing Sperner solutions. Solutions do however appear at the beginning or end of a path, because the central part with label $(-1,+1)$ is in contact with the environment label $(+1,-1)$. For our purposes it will be convenient to use paths where the central part and protective layers have width two. See \cref{fig:path} for an illustration of the implementation of a path.

\begin{figure}
	\centering
	\includegraphics[scale=0.8]{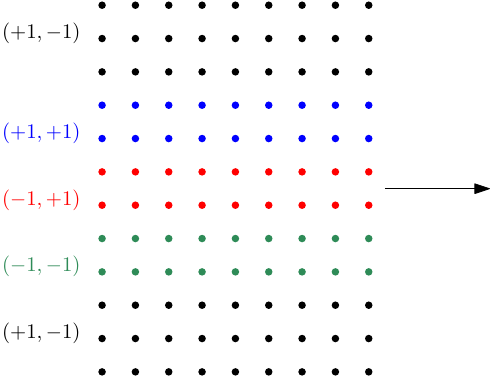}
	\caption{Implementation of a path using the Sperner labels. The arrow indicates that the path is moving from left to right. The color of each grid point indicates its label. The label corresponding to each color is shown on the left.}
	\label{fig:path}
\end{figure}

\paragraph*{\bf Boundary.} On the boundary of the grid $S$, we fix the labels so that they satisfy a Sperner-type boundary condition. In more detail
\begin{itemize}
    \item Points close to the top boundary of the grid, i.e., in $\{0, \dots, N\} \times \{N-2, N-1, N\}$, have label $(+1,-1)$.
    \item Points close to the bottom boundary of the grid, i.e., in $\{0, \dots, N\} \times \{0,1,2\}$, have label $(-1,+1)$.
    \item Points close to the left boundary of the grid, i.e., in $\{0,1,2\} \times \{0, \dots, N\}$, have label $(+1,+1)$.
    \item Points close to the right boundary of the grid, i.e., in $\{N-2, N-1, N\} \times \{0, \dots, N\}$, have label $(-1,-1)$.
\end{itemize}
For points close to the corners of the grid, where the boundary conditions above specify two different labels, we pick one of those two labels arbitrarily but consistently across agents.

Using a simple and standard construction \cite{CD09,GoldbergH21-hairy-ball}, it is possible to use these boundary conditions to create a single starting path in the vertex region $V(1)$. Since all agents agree that the trivial source $1$ has no incoming edges and exactly one outgoing edge, this starting path is used to implement this outgoing edge. Note that the agents also agree on the successor of vertex $1$. Thus, all four agents also agree on the labeling in region $V(1)$.

\paragraph*{\bf Crossing regions.} Outside of vertex regions, the only regions where two paths can cross are the regions where a horizontal lane crosses a vertical lane. Previous work \cite{CD09} has shown how such crossings can be handled locally in order to ensure that no Sperner solutions occur there. We will also make use of such crossing gadgets, although we will have to design them very carefully in our setting. But before we even get to the crossing gadgets, our arguments later in the reduction will require us to ensure that all crossing gadgets occur on different $x$-coordinates on the grid $S$.

Recall that we have specified that edges are implemented by paths that follow specific lanes, depending on which edge is implemented. We have however left some freedom in how exactly the paths travel inside their respective lanes. Furthermore, we have constructed the lanes so that they each have width $100 n^2$. Consider a path moving upwards in vertical lane $L_j^{top}(k)$. At any given point, the path intersects at most one horizontal lane, and overall it is going to intersect at most $n^2$ different horizontal lanes (because every vertex yields at most $n$ horizontal lanes). Now, we subdivide lane $L_j^{top}(k)$ into $n^2$ sublanes of width $100$ each. We can enforce that the path going up in $L_j^{top}(k)$ always uses the sublane corresponding to the horizontal lane it is currently crossing. Namely, if the path is currently crossing the $p$th horizontal lane of vertex $q$, then our path will use the $((q-1) n + p)$th sublane of $L_j^{top}(k)$. \cref{fig:zig-zag-lanes} shows how this can be implemented. For paths traveling downwards on a vertical lane $L_j^{bottom}(k)$ we also define sublanes and proceed in the same way. For paths traveling on horizontal lanes, we do not need to do any of this. Instead, we just let the path travel in a straight line in the center of the lane.

The intersection of a horizontal lane and the corresponding sublane of a vertical lane is a big rectangle of size $100 \times 100 n^2$. Note that the (potential) paths traveling in the horizontal lane and in the vertical sublane would meet in the center of that rectangle. We delimit a big square of dimensions $50 \times 50$ around the center of the rectangle and call it the \emph{crossing region} between the horizontal lane and the vertical lane. The crossing gadgets will be constructed inside the corresponding crossing region, when needed. The construction detailled above ensures that all crossing regions have separate $x$-coordinates. More formally, any two grid points that lie in two different crossing regions must have different $x$-coordinates.

\begin{figure}
	\centering
	\includegraphics[scale=0.8]{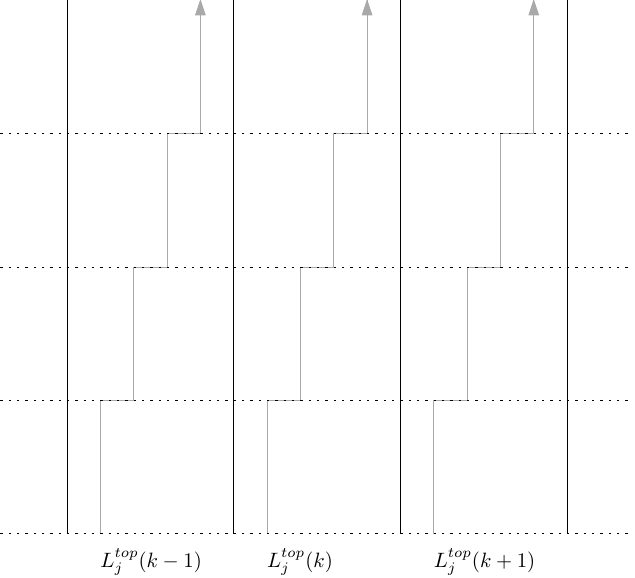}
	\caption{Illustration showing how we can ensure that paths use the sublane corresponding to the horizontal lane currently being crossed. The vertical black lines delimit the vertical lanes, and the horizontal dotted lines delimit the horizontal lanes. The gray arrow shows the trajectory that a (potential) path in the corresponding vertical lane would follow.}
	\label{fig:zig-zag-lanes}
\end{figure}

\begin{claim}[Embedding properties outside special regions]\label{clm:properties-outside}
The special embedding ensures that outside any crossing or vertex region, every square satisfies one of the following properties:
\begin{enumerate}
    \item All four agents agree on the labels of the four corners, and there is no Sperner solution.
    \item At least three agents see the environment label $(+1,-1)$ at all four corners.
\end{enumerate}
\end{claim}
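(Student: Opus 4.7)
My plan is to decouple the geometric aspect of the claim (no Sperner solution along a single agent's path) from the communication aspect (which agents see which labels), with the latter driven by the promises in \cref{lem:CC-IEoL}.

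First, for any fixed agent $i$, the labeling outside vertex and crossing regions is a union of disjoint, non-crossing, non-terminating path segments: each edge $(j,k)$ that agent $i$ sees is routed through specific horizontal and vertical lanes; endpoints lie inside vertex regions $V(j)$ and $V(k)$; and two of agent $i$'s own paths can only cross inside crossing regions (handled by the standard crossing gadget of \cite{CD09}). Consequently, inside any square $Q$ avoiding these special regions, agent $i$'s non-environment labels arise only from the strict interior of at most one path segment, and by the path-coloring scheme of Figure~\ref{fig:path} no pair of opposite labels appears there. This already establishes the no-Sperner-solution clause of the first property whenever the four labelings coincide on $Q$.

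Second, I would argue that the candidate edge whose path might affect a square $Q$ depends only on the location of $Q$ and not on the agent. By the dedicated-lane scheme, horizontal lane $L_k^{left}(j)$ (resp.\ $L_k^{right}(j)$) can only carry the path of edge $(j,k)$ with $j<k$ (resp.\ $j>k$), and similarly vertical lane $L_j^{top}(k)$ (resp.\ $L_j^{bottom}(k)$) can only carry the path of edge $(j,k)$. The bookkeeping here is to verify that the chosen widths --- main lanes of width $100n^2$, sublanes of width $100$, and paths (including protective layers) of width $6$ placed in the interior of their sublanes --- leave enough environment-labeled buffer around each path that no $2 \times 2$ square outside the crossing regions can be simultaneously touched by two different edges' paths. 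This needs to be checked for adjacent main lanes, for adjacent sublanes of the same vertical main lane (recalling that each main vertical lane carries one path that zigzags through its own sublanes, never escaping into a neighbouring main lane outside crossing regions), and for the boundary between a lane and the rest of the grid. The outcome is a well-defined candidate edge $e = e(Q)$, possibly empty.

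Given $e(Q)$, the two conclusions of the claim fall out from the promises of \cref{lem:CC-IEoL}. If $e(Q)$ does not exist, every agent assigns the environment label $(+1,-1)$ to every corner of $Q$ and the first property holds trivially. If $e(Q)$ exists and is active, then every party's superset includes it, every party embeds the same path using the same sublane routing, so the four labelings coincide on $Q$; together with the first step this yields the first property. If $e(Q)$ exists and is inactive, the first promise of \cref{lem:CC-IEoL} says it belongs to at most one party's superset, so at least three parties embed no path in the enclosing lane and therefore label every corner of $Q$ with $(+1,-1)$, giving the second property. The main obstacle is the width-based bookkeeping in the second step --- in particular handling the zigzag of vertical-lane paths between sublanes --- but modulo this arithmetic check, the rest is a direct application of the {\sc Intersection End-of-Line} promises and the coloring conventions of Figure~\ref{fig:path}.
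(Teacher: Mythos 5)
Your argument follows essentially the same route as the paper's: outside crossing and vertex regions each potential edge has a dedicated lane, so a square can be touched by (the embedding of) at most one candidate edge, and you then split on whether that edge is active, inactive, or absent and invoke promise 1 of \cref{lem:CC-IEoL}. The paper gives the same reasoning in two sentences; you give more detail about the geometric no-Sperner step and the width bookkeeping, which is fine and honestly flagged as an arithmetic check.

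There is one genuine omission. Your case ``$e(Q)$ does not exist $\Rightarrow$ every agent assigns the environment label $(+1,-1)$ to every corner of $Q$'' is false for squares inside the fixed boundary strips (the $3$-wide bands along the bottom, left, and right edges of the grid carry labels $(-1,+1)$, $(+1,+1)$, $(-1,-1)$ respectively, not the environment label). These strips lie outside every crossing and vertex region, so the claim has to cover them, and the paper explicitly invokes the fact that ``all agents agree on the labeling close to the boundary'' for exactly this reason. The conclusion you want still holds there --- the boundary labeling is publicly fixed, hence identical for all four agents, and adjacent boundary labels are never opposite (they differ in only one coordinate), so Property~1 applies --- but your stated justification does not cover it. Replacing ``every agent assigns the environment label'' by ``every agent assigns the same publicly determined default label (either the environment label or the fixed boundary label), so all four agree and there is no Sperner solution'' closes the gap.
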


\begin{proof}
Outside of crossing and vertex regions every potential path has its own lane, and these lanes do not intersect by construction (except in crossing regions, of course). Furthermore, if one agent sees a path in a lane, then by property 1 of \cref{lem:CC-IEoL}, either all agents see the path, or no other agent sees the path. Finally, all agents agree on the labeling close to the boundary, and, in particular, in the creation of the starting path for $V(1)$. As a result, every square falls into one of the two categories above.
\end{proof}

It remains to handle vertex regions and crossing regions, and, in particular, to explain how the labeling is defined there.

\paragraph*{\bf Labeling inside vertex regions.} Let $j > 1$ be any vertex. Let the connector region be defined as the central part of $V(j)$, namely $\{300 n^3 (j-1) + 100 n^3 + 1, \dots, 300 n^3 (j-1) + 200 n^3\}^2$. Note that by construction the connector region does not share any $x$-coordinates with any of the vertical lanes $L_j^{top}(k)$ or $L_j^{bottom}(k)$.

For any agent $i$, the labeling inside the vertex region $V(j)$ is constructed as follows.
First, we route each incoming and outgoing path from the boundary of $V(j)$ to its corresponding position on the boundary of the connector region. \cref{fig:vertex-region} shows a principled way of achieving this while ensuring that no paths intersect.

It remains to specify the labeling inside the connector region. If the agent sees more than one incoming path, or more than one outgoing path, then all points in the connector region are labeled $(-1,-1)$. If the agent sees one incoming path and no outgoing path, or one outgoing path and no incoming path, then all points in the connector region are labeled with the environment label $(+1,-1)$. Finally, if the agent sees exactly one incoming path and exactly one outgoing path, then the two paths are connected inside the connector region. We can make sure that this connection is performed in a consistent way, namely such that different agents who see the same incoming and outgoing path perform the connection in the exact same way.

A vertex region $V(j)$ is said to be an \emph{end-of-line vertex region}, if the corresponding vertex $j$ is an end-of-line solution of the {\sc Intersection End-of-Line} instance. We say that $V(j)$ is a \emph{non-end-of-line vertex region}, if it is not an end-of-line vertex region. Ultimately, our construction will ensure that any envy-free division yields a solution in an end-of-line vertex region.

\begin{claim}[Embedding properties in non-end-of-line vertex region]\label{clm:properties-vertex}
The special embedding ensures that in any non-end-of-line vertex region, one of the following cases occurs:
\begin{enumerate}
\item All four agents see the same single path passing through (and no other paths).
\item At least three agents see no path.
\item Three agents see the same single path passing through, and the remaining agent sees that path along with possibly others, but has label $(-1,-1)$ everywhere in the connector region.
\end{enumerate}
\end{claim}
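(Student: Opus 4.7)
The plan is a case analysis on the vertex $j$ corresponding to the vertex region $V(j)$, driven by the structure of its active incident edges. I will handle the trivial source $V(1)$ separately: by the construction of the starting path from the Sperner boundary conditions, all four agents produce the identical labeling in $V(1)$ and thus see the same single path, which gives Case 1 of the claim. For $j > 1$, since $j$ is non-end-of-line, promise 0 of \cref{lem:CC-IEoL} forces the number of active incoming and active outgoing edges at $j$ to be equal and at most one, leaving two sub-cases to analyze.

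Suppose first that $j$ has one active incoming and one active outgoing edge. Both lie in every party's superset by definition of ``active''. Let $p$ be the party to which $j$ is assigned. Promise 3 implies that every other party $p' \neq p$ has at most one incoming and at most one outgoing edge at $j$, so $p'$'s superset contains exactly the two active edges. Since the embedding inside $V(j)$ is a deterministic function of the observed edges (lane assignment, the ``principled'' routing to the connector region of \cref{fig:vertex-region}, and the in-connector-region connection), the three parties $p' \neq p$ produce identical labelings in $V(j)$ and each see the same single path passing through. For party $p$ itself, promise 2 forces all inactive edges at $j$ to lie in a single party's superset, which must be $p$'s if any exist. If $p$ has no extra inactive edges, then all four agents see the same single path, which is Case 1 of the claim. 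Otherwise $p$ has strictly more than one incoming or more than one outgoing edge at $j$, so by the connector-region labeling rule, $p$'s labeling in the connector region is uniformly $(-1,-1)$, which is Case 3.

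In the remaining sub-case, $j$ has zero active incoming and zero active outgoing edges. Then every incident edge at $j$ is inactive, and by promise 2 they all lie in a single party $q$'s superset. The three parties other than $q$ therefore see no edge at $j$ and draw no path in $V(j)$; this establishes Case 2, regardless of what $q$ sees.

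The main subtlety to verify is the ``identical routing'' step used in the 1-in/1-out case: the construction must be canonical, so that two agents with the same set of observed incident edges at $V(j)$ produce literally the same labeling inside $V(j)$. This is consistent with the ``principled way'' of routing illustrated in \cref{fig:vertex-region} and with the explicit stipulation that the connection between the unique incoming and unique outgoing path inside the connector region is performed in the same way across agents. Once this canonicity is pinned down, the three enumerated cases exhaust all possibilities and the claim follows.
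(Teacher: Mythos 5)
Your proof is correct and follows essentially the same case analysis as the paper's: split on whether $j$ is isolated or has one active in-edge and one active out-edge, then invoke the promises of \cref{lem:CC-IEoL} to bound how many parties can see extra (inactive) incident edges. One genuine small improvement over the paper's writeup: you explicitly handle $V(1)$. The paper's proof asserts that a non-end-of-line vertex $j$ is ``either isolated or has exactly one incoming and one outgoing edge,'' which is false for the trivial source (in-degree $0$, out-degree $1$), so strictly speaking that case needs the separate argument you give (all four agents agree on the starting path and the single outgoing edge, landing in Case~1). Your remark about canonicity of the in-region routing is also a fair thing to flag; the paper addresses it implicitly by stipulating that the connector-region connection is performed consistently across agents.
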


\begin{proof}
Since the vertex region $V(j)$ is a non-end-of-line vertex region, it follows that in the original graph, and in terms of the active edges, $j$ is either an isolated vertex or has exactly one incoming and exactly one outgoing edge. If $j$ is isolated, then by property 2 in \cref{lem:CC-IEoL}, at most one agent sees incident edges on $j$ (which are all inactive), and the other three agents see no incident edges on $j$. As a result, at least three agents see no path at all in $V(j)$.

If $j$ has exactly one incoming and exactly one outgoing edge, then all four agents must see these two edges (by definition of what it means for an edge to be active). If they do not see any further incident edges on $j$, then all four agents agree and see that single path passing through $V(j)$. If, on the other hand, some agents see additional (inactive) edges, then by property 3 in \cref{lem:CC-IEoL}, only a single agent, say agent $i$, can see additional edges, while the other three agents only see the active incoming and outgoing edge. As a result, the three agents see the same single path passing through $V(j)$, and agent $i$ sees that path along with others. Furthermore, by construction of the labeling in $V(j)$, the connector region is labeled $(-1,-1)$ for agent $i$, as claimed.
\end{proof}

\begin{claim}\label{clm:properties-vertex-details}
In any non-end-of-line vertex region, every square satisfies one of the following properties:
\begin{enumerate}
\item All four agents agree on the labels at the four corners, and there is no Sperner solution.
\item At least three agents agree and they see the same identical label at all four corners.
\item At least three agents see labels only in $\{(-1,+1), (+1,+1)\}$, or only in $\{(+1,-1), (+1,+1)\}$, and the square touches the connector region.
\item All four agents see labels only in $\{(+1,-1), (-1,-1)\}$, or only in $\{(-1,+1), (-1,-1)\}$.
\end{enumerate}
\end{claim}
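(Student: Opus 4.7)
The plan is to perform a case analysis based on \cref{clm:properties-vertex}, which identifies three scenarios for a non-end-of-line vertex region $V(j)$: (a) all four agents see the same single path passing through, (b) at least three agents see no path, or (c) three agents see the same single path while the fourth (say agent $i^*$) sees that path plus possibly others, with the connector region uniformly labeled $(-1,-1)$ for agent $i^*$.

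Cases (a) and (b) will dispose of themselves almost immediately. In Case (a), all four agents agree on every label in $V(j)$, and since the single path connects its incoming end to its outgoing end within the connector region, there is no path endpoint in $V(j)$ and hence no Sperner solution, so Property 1 holds for every square. In Case (b), the three agents seeing no path carry the environment label $(+1,-1)$ throughout $V(j)$, so Property 2 holds for every square.

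For Case (c), I will first observe that the routing of the common path from the boundary of $V(j)$ to the boundary of the connector region depends only on which lanes the path uses, so all four agents produce identical labels outside the connector region except inside the lanes used by agent $i^*$'s extra paths, and by construction these extra-path routings are disjoint from the common path's routing. A square lying entirely outside the connector region therefore either has all four agents in agreement (so Property 1 applies, since no path endpoints exist outside the connector in this case) or lies inside an extra-path lane (in which case the three non-$i^*$ agents see only the environment label and Property 2 applies).

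The remaining case is a square that touches the connector region. Since the three agents' labeling throughout $V(j)$ uses only the four colors of the common path (central $(-1,+1)$ and protective $(+1,+1)$ and $(-1,-1)$) together with the environment label $(+1,-1)$, and since the path's strips have width at least two, a $2\times 2$ square can contain corners from at most two adjacent layers of the path structure. The three agents' label set at the square is therefore one of the singletons $\{(+1,-1)\}$, $\{(-1,+1)\}$, $\{(+1,+1)\}$, $\{(-1,-1)\}$ or one of the adjacent-layer pairs $\{(+1,-1),(+1,+1)\}$, $\{(+1,-1),(-1,-1)\}$, $\{(-1,+1),(+1,+1)\}$, $\{(-1,+1),(-1,-1)\}$. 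In every case, all labels in the set share a common fixed coordinate value, either a $+1$ (making the set a subset of one of Property 3's two sets) or a $-1$ (making the set a subset of one of Property 4's two sets). When the shared coordinate is $+1$, Property 3 applies immediately since it constrains only three agents. When the shared coordinate is $-1$, I will verify Property 4 by checking agent $i^*$: at corners inside the connector region agent $i^*$'s label is $(-1,-1)$, which lies in both Property 4 sets, while at corners outside the connector region agent $i^*$ sees the same label as the three agents by the routing-agreement observation. The main obstacle is precisely this last step, which relies on the disjointness of the extra-path routings from the common-path routing in order to ensure that a square touching the common path's entry into the connector cannot simultaneously intersect an extra-path lane where agent $i^*$ would see a conflicting label.
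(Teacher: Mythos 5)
Your proposal is correct and follows essentially the same structure as the paper's proof: the same three-way case analysis from \cref{clm:properties-vertex}, the same observation that three agents agree with $i^*$ away from the connector region because $i^*$ also sees the common path, the same enumeration of the adjacent-label pairs arising from a path strip of width two, and the same verification for agent $i^*$ by splitting corners into those inside (labeled $(-1,-1)$) versus outside the connector region. Your handling of the Property~4 verification is slightly more unified than the paper's split into ``well inside'' versus ``on the boundary'' of the connector region, but it is the same underlying argument.
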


\begin{proof}
By \cref{clm:properties-vertex} there are three possible cases. We show that in each of those three cases, every square in the vertex region falls into (at least) one of the four categories mentioned in the statement.

If all agents see the same single path passing through the vertex region and nothing else, then the construction ensures that all four agents agree on the labels of all points in the vertex region. In particular, in every square, all four agents agree on the labels at the four corners, and these labels do not correspond to a Sperner solution, by the construction of the paths. Thus, every square belongs to category 1.

If at least three agents see no path, then they label the whole vertex region with the environment label $(+1,-1)$. As a result, every square belongs to category 2.

The final case is when three agents see the same single path passing through, and the remaining agent, say agent $i$, sees that path along with possibly others, but has label $(-1,-1)$ everywhere in the connector region. First of all, note that any square that is not a part of the single path seen by the three agents, must be fully labeled with the environment label $(+1,-1)$ by the three agents, and thus falls into category 2. Next, for any square lying on the single path but not touching the connector region, agent $i$ agrees with the labels of the other three agents (since agent $i$ also sees that path). Thus, the square falls into category 1. Furthermore, any square on the single path where the three agents have identical label (i.e., the same label at all four corners) falls into category 2. It remains to handle squares that lie on the single path and touch the connector region, and where the three agents see two different labels. By construction of the paths, the only possibilities for these two labels are: (i) $\{(-1,+1), (+1,+1)\}$, (ii) $\{(+1,-1), (+1,+1)\}$, (iii) $\{(+1,-1), (-1,-1)\}$, and (iv) $\{(-1,+1), (-1,-1)\}$. Cases (i) and (ii) immediately fall into category~3.

It remains to argue about cases (iii) and (iv). We consider two subcases: a square that lies on the single path and touches the connector region, and where the three agents see labels (iii) or (iv) must necessarily occur in (a) well inside the connector region, or (b) on the boundary of the connector region. In case (a), agent $i$ has label $(-1,-1)$ and the square falls into category 4, because label $(-1,-1)$ is a subset of both (iii) and (iv). Finally, in case (b) the square lies on the boundary of the connector region, meaning that two corners lie inside the connector region, and two corners lie outside the connector region. Since agent $i$ agrees with the other three agents on the two corners outside the connector region, and sees label $(-1,-1)$ in the two corners inside the connector region, we again have that the square falls into category 4, because label $(-1,-1)$ is a subset of both (iii) and (iv).
\end{proof}

\paragraph*{\bf Labeling inside crossing regions.} Any crossing region is of one of the two following types: (a) a (potential) crossing between a path going up (increasing in the $y$-coordinate) and a path going from left to right (increasing in the $x$-coordinate), or (b) a (potential) crossing between a path going down (decreasing in the $y$-coordinate) and a path going from right to left (decreasing in the $x$-coordinate).

Let us first consider a crossing region of type (a). We now explain how the labeling is locally modified inside the crossing region. If the agent sees no paths in the crossing region, then the crossing is fully labeled by the environment label $(+1,-1)$. If the agent sees only the vertical path, then we just implement the vertical path normally (\cref{fig:crossing-vertical-path}). If the agent only sees the horizontal path, then we implement the horizontal path, but with a small modification, see \cref{fig:crossing-horizontal-path}. If the agent sees both a horizontal and vertical path, then we implement a crossing gadget which locally reroutes the paths to avoid the intersection. In our setting, we have to construct this gadget very carefully, see \cref{fig:crossing}.

In the latter parts of this reduction we will have to argue very carefully about this crossing gadget. It is therefore convenient to define some notation that will be useful later. Consider the path that a single vertical path follows inside the crossing region. We can think of the path as being decomposed into separate columns, such that in each column all of the squares are identical. We let $c_1^+$ denote the column that contains the squares that contain both label $(+1,-1)$ and label $(-1,-1)$. Note that this indeed identifies a unique column of squares on that vertical path. Similarly, we let $c_1^-$ denote the column that contains the squares that contain both label $(-1,+1)$ and label $(-1,-1)$. Note that this column is also unique. In the same manner, we also consider the (bent) horizontal path and note that it has a portion of path that moves down vertically. For this vertical portion, we define columns $c_2^+$ and $c_2^-$ analogously. Namely, $c_2^+$ is the column of squares with labels $(+1,-1)$ and $(-1,-1)$, and $c_2^-$ is the column of squares with labels $(-1,+1)$ and $(-1,-1)$. The red edge in \cref{fig:crossing-overlay-horizontal} and in \cref{fig:crossing-overlay-vertical} is the portion of path of interest here.

For crossing regions of type (b), we use the same construction for the crossing, except that all paths now run in the opposite direction, compared to type (a). We also define columns $c_1^+, c_1^-, c_2^+, c_2^-$ analogously, where we use the upward vertical portion of the horizontal (bent) path, instead of the downward portion. In particular, we make sure that $c_1^+$ and $c_2^+$ still correspond to columns with squares with labels $(+1,-1)$ and $(-1,-1)$, and $c_1^-$ and $c_2^-$ still correspond to columns with squares with labels $(-1,+1)$ and $(-1,-1)$.

\begin{figure}
	\centering
	\includegraphics[scale=0.5]{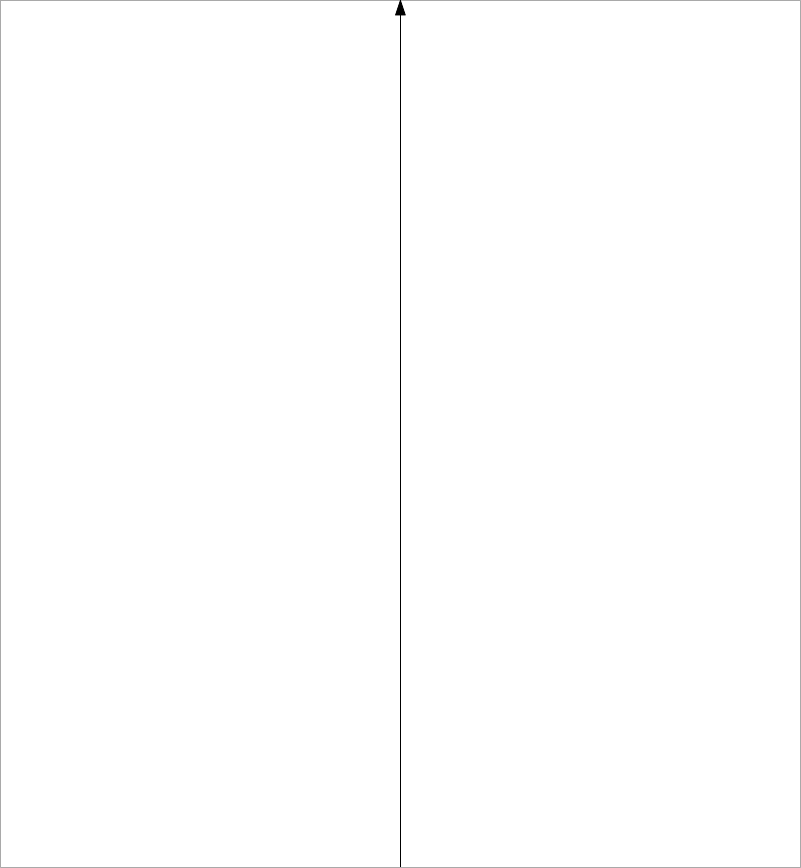}
	\caption{Implementation of the crossing gadget when the agent only sees a vertical path.}
	\label{fig:crossing-vertical-path}
\end{figure}

\begin{figure}
	\centering
	\includegraphics[scale=0.5]{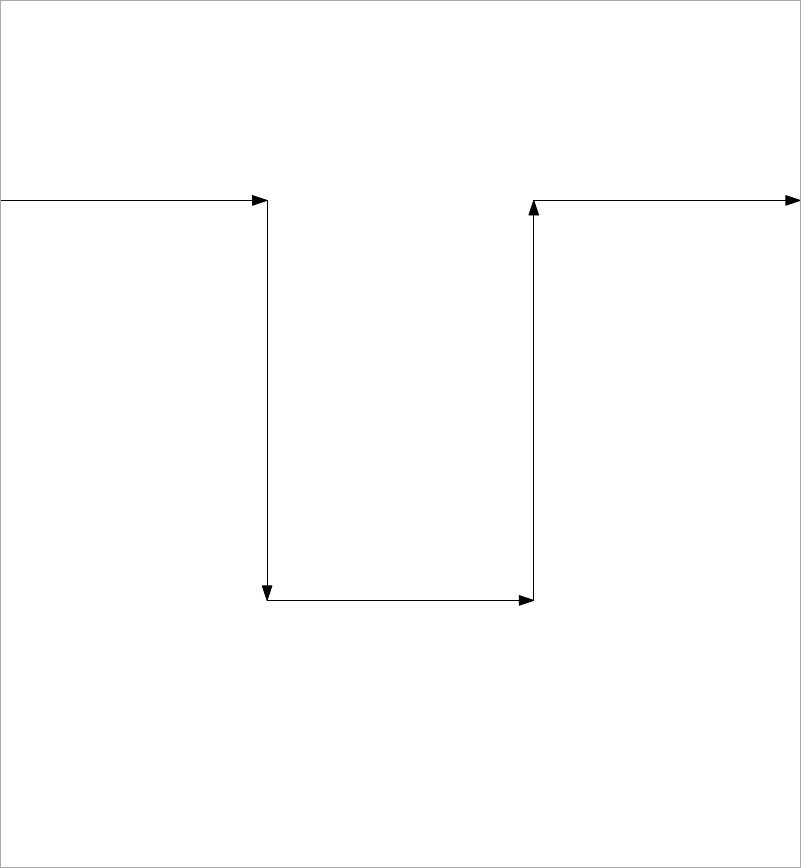}
	\caption{Implementation of the crossing gadget when the agent only sees a horizontal path.}
	\label{fig:crossing-horizontal-path}
\end{figure}

\begin{figure}
	\centering
	\includegraphics[scale=0.5]{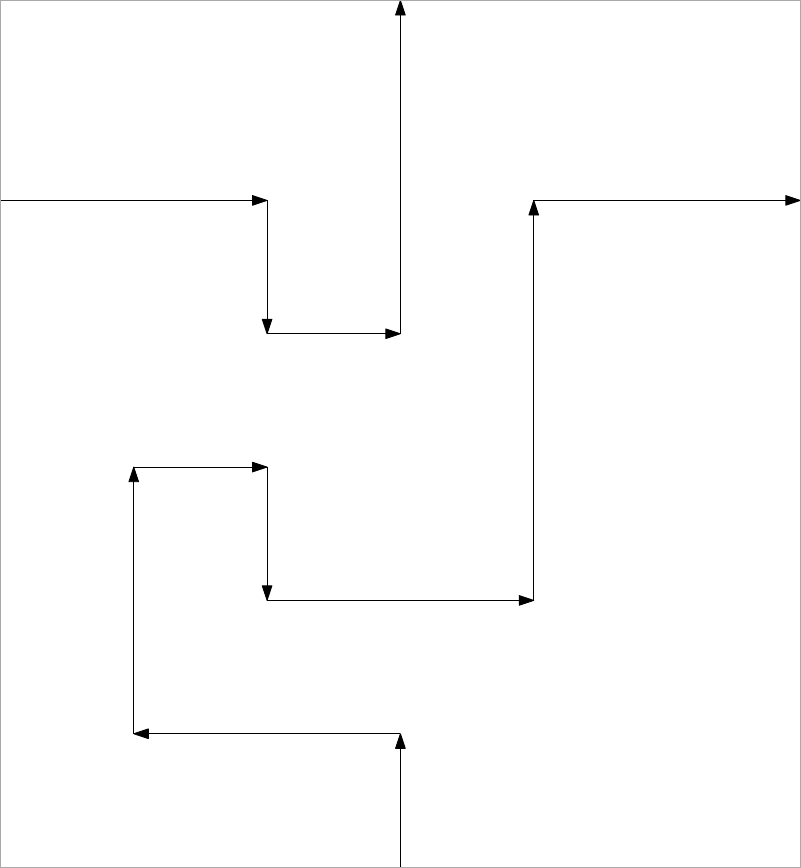}
	\caption{Implementation of the crossing gadget when the agent sees a horizontal path crossing a vertical path.}
	\label{fig:crossing}
\end{figure}

\begin{claim}[Embedding properties in crossing region]\label{clm:properties-crossing}
The special embedding ensures that in any crossing region, one of the following cases occurs:
\begin{enumerate}
    \item All four agents agree and see no Sperner solution.
    \item At least three agents see no path.
    \item Two agents see no path, one agent sees a horizontal path, and one other agent sees a vertical path.
    \item One agent sees a crossing, and the other three agents see the same single path (horizontal or vertical).
\end{enumerate}
Note that some cases are not mutually exclusive (e.g., case 1 and case 2).
\end{claim}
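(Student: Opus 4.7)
\textbf{Proof proposal for \cref{clm:properties-crossing}.}

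The plan is a case analysis on the joint status of the (at most) two potential edges passing through the crossing region, namely the horizontal one and the vertical one. A single potential edge corresponds to at most one lane, hence to at most one ``H'' or ``V'' path in this region. By the promises on the {\sc Intersection End-of-Line} instance (\cref{lem:CC-IEoL}), each such potential edge is in one of three possible states: (i) \emph{absent}, i.e.~not in any party's superset; (ii) \emph{active}, i.e.~in all four parties' supersets, hence visible to all four agents; or (iii) \emph{inactive}, i.e.~in exactly one party's superset by promise~1 of \cref{lem:CC-IEoL}, hence visible to exactly one agent. Since the crossing region does not intersect any vertex region, no vertex-region label affects the analysis: an agent's labeling inside the crossing region is fully determined by which of the H/V paths that agent sees, together with the local gadget chosen (Figures~\ref{fig:crossing-vertical-path}, \ref{fig:crossing-horizontal-path}, or \ref{fig:crossing}).

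First I would enumerate the $3 \times 3$ possibilities for (status of H, status of V). When both edges are absent, or both are active, or exactly one is active and the other absent, all four agents see the same picture, so case~1 applies; for the both-active subcase I would invoke the explicit design of \cref{fig:crossing} to assert that it introduces no Sperner solution (the two rerouted paths are surrounded by their protective layers of $(+1,+1)$ and $(-1,-1)$, so the central $(-1,+1)$ strip never meets the environment $(+1,-1)$ except at the endpoints outside the gadget). When one edge is active and the other inactive, the three ``non-seeing'' parties of the inactive edge perceive only the single active path, while the remaining party sees a crossing; this is case~4. When one edge is inactive and the other absent, a single agent sees one path and the other three see nothing, giving case~2. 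Finally, when both edges are inactive: if the same party owns both, then one agent sees a crossing while three agents see nothing (case~2); if two distinct parties own them, then two agents see nothing, one sees H only, and one sees V only (case~3).

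The main thing to verify carefully is the both-inactive/same-party case, since a priori one might worry it violates some property of \cref{lem:CC-IEoL}. However, promise~2 of \cref{lem:CC-IEoL} forbids a single vertex from having inactive incident edges attributed to different parties, but it does not forbid two unrelated edges (here, an H-edge between one pair of End-of-Line vertices and a V-edge between another pair) from being inactive and attributed to the same party. Thus this subcase is genuinely possible and is correctly captured by case~2.

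The only step that requires more than bookkeeping is confirming that the crossing gadget in \cref{fig:crossing} contains no Sperner square; I expect this to be the main obstacle, but it reduces to a finite, local check on the labels shown in the figure, analogous to the gadget analysis in~\cite{CD09}, adapted to our four-label palette. All other cases are immediate from the definition of the embedding together with the promises on active/inactive edges.
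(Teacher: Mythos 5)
Your proof is correct and rests on the same underlying observation as the paper's: by promise~1 of \cref{lem:CC-IEoL}, each of the two candidate edges through a crossing region is seen by $0$, $1$, or $4$ agents (absent, inactive, or active). You organize the argument as an explicit enumeration over the $3\times 3$ joint status pairs, whereas the paper instead first splits on whether some agent sees a crossing and then unwinds the possible views of the other three; these are two bookkeepings of the same idea and land on the same four outcomes. Your explicit handling of the ``both inactive, same party'' subcase (and the note that promise~2 does not preclude it, since the H- and V-edges are between unrelated vertex pairs) is a worthwhile sanity check that the paper leaves implicit.
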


\begin{proof}
Note that in a crossing region every agent sees either (i) no path, (ii) a single horizontal path, (iii) a single vertical path, or (iv) both a horizontal and a vertical path, in which case the agent sees a crossing. Furthermore, by property 1 of \cref{lem:CC-IEoL}, if at least two agents see a specific path, then all four agents see that path.

Consider first the case where at least one agent, say agent $i$, sees a crossing. Then the other three agents must have an identical view, i.e., they all see either (i) no path, (ii) the same single path (horizontal or vertical), or (iii) a crossing. Indeed, since agent $i$ sees both paths, if a second agent also sees a path, then all other agents must also see it. If case (i) holds, then we are in case 2 of the claim. Case (ii) yields case 4 of the claim, and case (iii) yields case 1 of the claim.

Now consider the case where no agent sees a crossing. If all agents see no path, then clearly we are in case 1 of the claim. If an agent, say agent $i$, sees a single path (horizontal or vertical), then either (i) the other three agents see no path, (ii) the other three agents also see that single path, or (iii) one other agent sees the other path (vertical or horizontal, respectively) and the two remaining agents see no path. Note that at most one agent can see the other path, since if two agents see it, then all four agents must see it (which is not possible, since agent $i$ does not see a crossing). Case (i) yields case 2 of the claim, case (ii) yields case 1 of the claim, and, finally, case (iii) yields case 3 of the claim.
\end{proof}

\begin{claim}\label{clm:properties-crossing-detail}
In a crossing region where one agent, say agent $i$, sees a crossing, and the other three agents see the same single path (horizontal or vertical), every square satisfies (at least) one of the following properties:
\begin{enumerate}
\item All four agents agree on the labels at the four corners, and there is no Sperner solution.
\item At least three agents agree and they see the same identical label at all four corners.
\item The square lies in column $c_1^+$ or $c_2^+$, and the three agents (other than $i$) only see labels from $\{(+1,-1), (-1,-1)\}$ in the square.
\item The square lies in column $c_1^-$ or $c_2^-$, and the three agents (other than $i$) only see labels from $\{(-1,+1), (-1,-1)\}$ in the square.
\item The square is not directly adjacent to any of the columns $c_1^+, c_1^-, c_2^+, c_2^-$, and three agents see labels only from $\{(-1,+1), (+1,+1)\}$ or only from $\{(+1,-1), (+1,+1)\}$ in the square.
\end{enumerate}
\end{claim}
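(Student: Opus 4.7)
The plan is a direct case analysis based on the two sub-cases of the hypothesis: either the three non-$i$ agents share the same vertical path (and agent $i$ sees this vertical path crossed by a horizontal one), or they share the same horizontal path (and agent $i$ sees that horizontal path crossed by a vertical one). In both sub-cases, agent $i$'s labelling is the full crossing gadget of Figure \ref{fig:crossing}, while the three other agents' labelling is that of Figure \ref{fig:crossing-vertical-path} or Figure \ref{fig:crossing-horizontal-path} respectively. I would walk through the crossing region square by square and compare the two labellings, grouping squares by where they sit with respect to the discrepancy between the two figures.

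The easy part is everything outside the discrepancy zone. Any square contained in a stretch of path that agent $i$ and the three other agents implement identically (e.g., the approach and exit tails of the common path) falls into category 1 by construction of paths. Any square lying strictly inside the region where the three agents see only the environment label $(+1,-1)$ at all four corners falls into category 2 immediately, irrespective of what agent $i$ labels there; this covers in particular all squares in the "other" path that only agent $i$ sees, away from where that path meets the common one. So the only squares that require work are those where agent $i$'s rerouted path visits grid points that the three agents label with something \emph{other} than $(+1,-1)$, i.e., squares near the common path where agent $i$ has shifted things to dodge the intersection.

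For these squares I would subdivide according to whether they lie in, adjacent to, or away from the distinguished columns $c_1^+, c_1^-, c_2^+, c_2^-$. Reading off Figure \ref{fig:crossing}, the three agents (who implement a straight path through the vertical strip where agent $i$'s path deviates) only ever see two adjacent colors at the four corners of such a square, and these two colors are exactly determined by which side of the central strip of the three agents' path the square sits on. Squares in $c_1^+$ or $c_2^+$ lie on the $(-1,-1)$-side of the straight path, so the three agents only see labels in $\{(+1,-1),(-1,-1)\}$, yielding category 3; symmetrically, squares in $c_1^-$ or $c_2^-$ yield category 4. Squares in the discrepancy zone that are \emph{not} directly adjacent to any of $c_1^\pm, c_2^\pm$ lie on the $(+1,+1)$-side of the straight path and only witness the left protective layer $(+1,+1)$ together with either $(-1,+1)$ or $(+1,-1)$, giving category 5.

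The main obstacle is the bookkeeping: because the gadget in Figure \ref{fig:crossing} bends both paths so as to swap their central/protective layers across the intersection, the locations of $c_1^\pm$ and $c_2^\pm$ must be identified precisely on agent $i$'s picture and their relationship to the three agents' straight path must be verified column by column. I would carry this out by listing the (finitely many) column types that appear in Figure \ref{fig:crossing}, marking for each column whether the three agents see it as a stretch of background, a stretch of path edge, or a stretch of path interior, and checking directly that every case fits into one of the five categories. The argument for the second sub-case (horizontal common path) is entirely symmetric, with $c_1^\pm$ and $c_2^\pm$ now playing swapped roles, and with Figure \ref{fig:crossing-horizontal-path} replacing Figure \ref{fig:crossing-vertical-path} as the three agents' labelling.
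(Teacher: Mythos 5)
Your proposal takes essentially the same route as the paper: compare agent $i$'s full crossing gadget against the three other agents' single-path labeling, observe that outside the discrepancy zone the labelings agree (category 1) or the three agents all see the environment label (category 2), and then classify the remaining discrepancy squares by which of the distinguished columns $c_1^\pm$, $c_2^\pm$ they sit in. The paper organizes the final step by the four possible two-label sets seen by the three agents ((i)--(iv)) rather than by column-type, but the content is the same, and both you and the paper defer the actual square-by-square verification to inspecting the gadget figures. One small caution on phrasing: the two sub-cases do not literally have $c_1^\pm$ and $c_2^\pm$ ``play swapped roles'' --- rather, when the three agents see the vertical path only $c_1^\pm$ carries the category-3/4 squares, and when they see the horizontal path only $c_2^\pm$ does; the other pair of columns in that sub-case is seen by the three agents as environment (so those squares fall into category 2 automatically). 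This is what the paper's remark about the red edge in Figures~\ref{fig:crossing-overlay-horizontal} and~\ref{fig:crossing-overlay-vertical} is tracking.
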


\begin{proof}
Since we have three agents seeing the same single path, it follows that any square outside that single path is fully labeled with the environment label $(+1,-1)$ by these three agents. As a result these squares fall into category 2. Next, any square on the single path where the three agents have identical label (i.e., the same label at all four corners) also falls into category 2. It remains to handle squares on the single path where the three agents see two different labels. By construction of the paths, the only possibilities for these two labels are: (i) $\{(-1,+1), (+1,+1)\}$, (ii) $\{(+1,-1), (+1,+1)\}$, (iii) $\{(+1,-1), (-1,-1)\}$, and (iv) $\{(-1,+1), (-1,-1)\}$.

The crucial observation is that the intricate construction of the crossing gadget ensures that for any such square on the single path, agent $i$ agrees with the other three agents on the labeling (in which case the square is in category 1), unless the square lies on:
\begin{itemize}
    \item the single vertical path, or
    \item the downward vertical portion of the horizontal (bent) path in a crossing of type (a)
    \item the upward vertical portion of the horizontal (bent) path in a crossing of type (b)
\end{itemize}
See \cref{fig:crossing-overlay-horizontal} and \cref{fig:crossing-overlay-vertical} for an illustration.
Now, if the square is of type (i) or (ii), then by construction we have that the square is not adjacent\footnote{Here we use the fact that the paths have internal width two.} to any of the columns $c_1^+, c_1^-, c_2^+, c_2^-$, and thus falls into category 5. If the square is of type (iii), then by construction it lies in column $c_1^+$ or $c_2^+$, and thus falls into category 3. Finally, if the square is of type (iv), then by construction it lies in column $c_1^-$ or $c_2^-$, and thus falls into category 4.
\end{proof}

\begin{figure}
	\centering
	\includegraphics[scale=0.5]{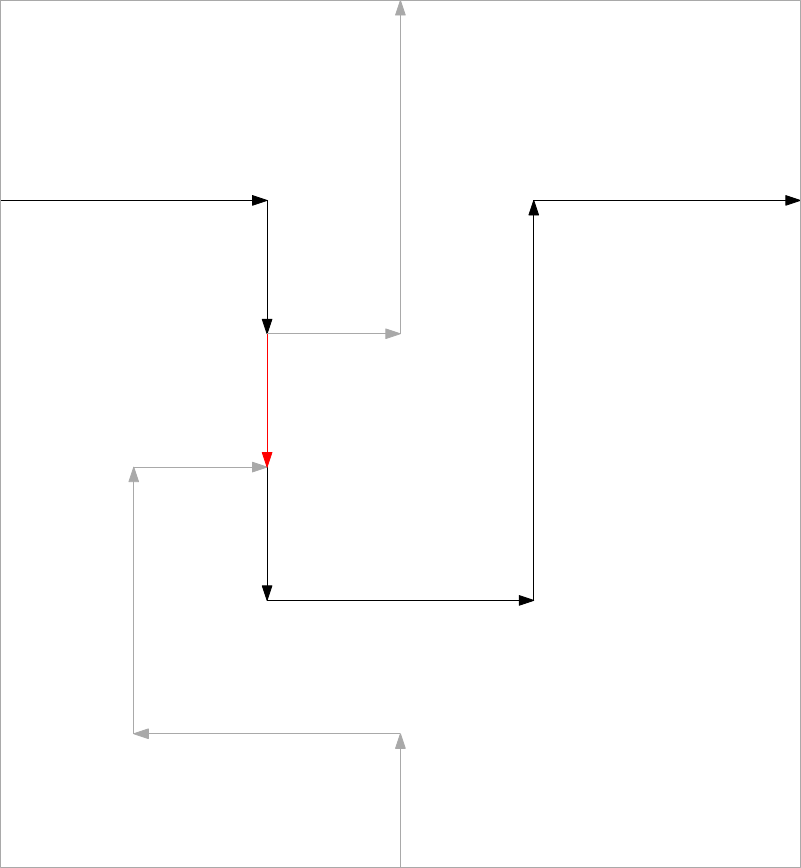}
	\caption{One agent sees a crossing, and the other three agents see a horizontal path. The black edges are the ones where all four agents agree that there is a path. The gray edges correspond to portions of paths only seen by the agent who sees a crossing. The red edge is the only portion of path that is seen by the three agents, but not by the agent who sees a crossing.}
	\label{fig:crossing-overlay-horizontal}
\end{figure}

\begin{figure}
	\centering
	\includegraphics[scale=0.5]{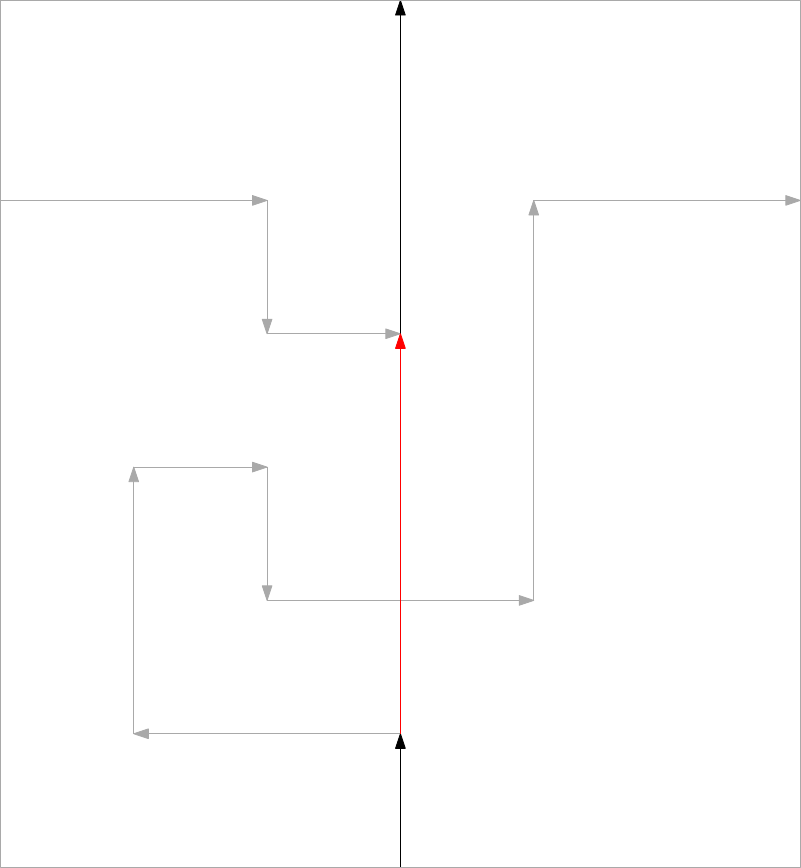}
	\caption{One agent sees a crossing, and the other three agents see a vertical path. The black edges are the ones where all four agents agree that there is a path. The gray edges correspond to portions of paths only seen by the agent who sees a crossing. The red edge is the only portion of path that is seen by the three agents, but not by the agent who sees a crossing.}
	\label{fig:crossing-overlay-vertical}
\end{figure}

\subsection{Construction of Valuations}\label{sub:valuations}

We now describe how the instance of envy-free cake-cutting is constructed. For each party $i$ of the {\sc Intersection End-of-Line} problem, we construct a corresponding agent $i$. The valuation function of agent $i$ over the cake will be constructed using only the information of party $i$.

We let $\delta := \frac{1}{10N} \leq 0.01$, where $N$ is the length of the grid $S$ in the previous section, and let $D$ denote the set $\{0, 1/10N, \dots, 1\}$, i.e., the $\delta$-discretization of $[0,1]$. For agent $i \in \{1,2,3,4\}$, its valuation function $v_i$ is constructed as follows. We first specify the values $v_i(a,b)$ for all $a,b$ in $D$. Then we use the piecewise linear interpolation described in \cref{sec:algo} to obtain a continuous valuation function.

At a high level, the valuation of an agent for an interval of the cake will depend on the Sperner labeling described in the previous section, when the endpoints of the interval are close to $0.25$ and $0.5$, or to $0.5$ and $0.75$. Otherwise, the valuation will essentially be the length of the interval.

We think of the Sperner grid from the previous section as being embedded in the grid $(D \cap [0.2,0.3]) \times (D \cap [0.45,0.55])$. Note that this grid has size exactly $(N+1) \times (N+1)$, so there is indeed a straightforward bijection with the $\{0,\dots,N\} \times \{0,\dots,N\}$ Sperner grid. The labeling $\lambda_i$ from the previous section is represented here by functions $\flabel_i$ and $\slabel_i$ that map a point in $(D \cap [0.2,0.3]) \times (D \cap [0.45,0.55])$ to the corresponding (single-coordinate) label in $\{+1,-1\}$. Namely, if $\lambda_i$ yields label $(+1,-1)$, then $\flabel_i$ outputs $+1$, and $\slabel_i$ outputs $-1$.

We let $\beta := \delta/8$, and $\gamma := \beta/8 = \delta/64$.
For $a,b \in D \subset [0,1]$, we let $v_i(a,b) := \text{length}([a,b]) = b - a$ for all $a,b \in D \subset [0,1]$, except in the following cases:
\begin{itemize}
\item When $b \in D \cap [0.2,0.3]$, we let
$$v_i(0,b) := \text{length}([0,b]) + \boost_i(b).$$
\item When $a \in D \cap [0.2,0.3]$ and $b \in D \cap [0.45,0.55]$, we let
$$v_i(a,b) := v_i(0,a) + \gamma \cdot \flabel_i(a,b).$$
\item When $a \in D \cap [0.45,0.55]$ and $b \in D \cap [0.7,0.8]$, we let
$$v_i(a,b) := v_i(b,1) + \gamma \cdot \text{second-label}_i(1-b,a).$$
\end{itemize}
In particular, note that we always have $v_i(a,1) = \text{length}([a,1]) = 1-a$ for all $a \in D$.

\paragraph*{\bf The boost function.}
The function $\text{boost}_i: D \cap [0.2,0.3] \to \{-\beta, 0, \beta\}$ is constructed as follows. The value $\boost_i(b)$ is
\begin{itemize}
    \item $+\beta$ if $b$, interpreted as an $x$-coordinate in the original Sperner grid $S$, lies in a column $c_1^+$ or $c_2^+$ of a crossing region in which agent $i$ sees a crossing.
    \item $-\beta$ if $b$, interpreted as an $x$-coordinate in the original Sperner grid $S$, lies in a column $c_1^-$ or $c_2^-$ of a crossing region in which agent $i$ sees a crossing.
    \item $0$ otherwise.
\end{itemize}
Note that $\boost_i$ is well-defined, since the construction in the previous section ensures that different crossing regions never overlap in terms of the $x$-coordinate.

\paragraph*{\bf Intuition.}
At a given division $(\ell,m,r)$ of the cake, with $\ell \in [0.2,0.3]$, $m \in [0.45,0.55]$, $r \in [0.7,0.8]$, the Sperner labels influence the agent's preferences as follows. The first label (i.e., $-1$ in label $(-1,+1)$) determines which piece is preferred between the first two pieces (i.e., the two leftmost pieces). If the first label is $-1$, then the first piece is preferred to the second piece. If the first label is $+1$, then the second piece is preferred to the first piece. Similarly, the second label determines which piece is preferred between the two last pieces (i.e., the two rightmost pieces). If the second label is $+1$, then the third piece is preferred to the fourth piece.

\paragraph*{\bf Further observations.}
Since the valuation function $v_i$ is constructed by piecewise linear interpolation, we can write for all $\ell \in [0.2,0.3]$ and $m \in [0.45,0.55]$
$$v_i(\ell,m) = v_i(0,\ell) + \gamma \cdot \flabel_i(\ell,m)$$
where we abuse notation to let $\flabel_i$ also denote the piecewise linear interpolation of $\flabel_i$. Similarly, we also have that for all $m \in [0.45,0.55]$ and $r \in [0.7,0.8]$
$$v_i(m,r) = v_i(r,1) + \gamma \cdot \slabel_i(1-r,m)$$
where again $\slabel_i$ also denotes the piecewise linear interpolation of $\slabel_i$. Note that both $\flabel_i$ and $\slabel_i$ are Lipschitz-continuous with Lipschitz constant $2/\delta$.

We can also write for all $\ell \in [0,1]$
$$v_i(0,\ell) = \ell + \boost_i(\ell)$$
where we have abused notation to let $\boost_i$ denote the piecewise linear interpolation of $\boost_i$, and where $\boost_i(\ell) = 0$ for $\ell \notin [0.2,0.3]$.

\subsection{Analysis}\label{sub:hardness-analysis}

Let $\eps := \gamma/8$. Note that $\delta > \beta > \gamma > \eps > 0$. Furthermore, note that $\eps = \Theta(1/n^4)$ and thus $\poly(n) = \poly(1/\eps)$.
In this section we show that any $\varepsilon$-envy-free allocation of the cake among the four agents with valuations $v_1,v_2,v_3,v_4$ yields a solution to the {\sc Intersection End-of-Line} problem. Let $(\ell,m,r)$ be an $\varepsilon$-envy-free division of the cake. We begin with the following claim which states that $\ell \approx 1-r$. We use the notation $x = y \pm z$ as a shorthand for $x \in [y - z, y + z]$.

\begin{claim}\label{clm:close-baseline}
It necessarily holds that $\ell = 1-r \pm 2\beta$. Furthermore, if $\boost_i(\ell) = 0$ for all agents $i \in \{1,2,3,4\}$, then we have $\ell = 1-r \pm \eps$.
\end{claim}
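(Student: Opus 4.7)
The plan is to exploit the fact that the valuations for the first piece $[0,\ell]$ and the fourth piece $[r,1]$ do not depend on the middle cut $m$. By construction, for every agent $i$ we have $v_i(0,\ell) = \ell + \boost_i(\ell)$ and $v_i(r,1) = 1-r$, and $|\boost_i(\ell)| \leq \beta$ always, with $\boost_i(\ell) = 0$ under the hypothesis of the second part.

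Now let $i_1$ and $i_4$ be the agents who receive the first and fourth pieces, respectively, in the given $\eps$-envy-free division $(\ell,m,r)$. Applying $\eps$-envy-freeness of $i_1$ (who must not envy piece $4$) and $i_4$ (who must not envy piece $1$) yields
\[
v_{i_1}(0,\ell) \geq v_{i_1}(r,1) - \eps \qquad \text{and} \qquad v_{i_4}(r,1) \geq v_{i_4}(0,\ell) - \eps.
\]
Substituting the formulas for $v_i(0,\ell)$ and $v_i(r,1)$ gives
\[
\ell + \boost_{i_1}(\ell) \geq 1-r - \eps \qquad \text{and} \qquad 1-r \geq \ell + \boost_{i_4}(\ell) - \eps,
\]
and hence $|\ell - (1-r)| \leq \eps + \max_i |\boost_i(\ell)|$. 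Since $|\boost_i(\ell)| \leq \beta$ and $\eps < \beta$, this gives $\ell = 1-r \pm 2\beta$. Under the additional hypothesis $\boost_i(\ell) = 0$ for all $i$ (in particular for $i_1, i_4$), the same calculation yields $\ell = 1-r \pm \eps$.

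The argument is essentially a direct computation once one isolates the two ``external'' pieces whose values are controlled purely by $\ell$ and $1-r$; no obstacle is expected since the potentially troublesome boost term is uniformly bounded by $\beta$ and vanishes under the hypothesis of the stronger conclusion.
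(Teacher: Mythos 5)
Your proof is correct and follows essentially the same argument as the paper: it applies $\eps$-envy-freeness to the agents holding the first and fourth pieces (whose values depend only on $\ell$ and $r$), uses $v_i(0,\ell)=\ell+\boost_i(\ell)$ with $|\boost_i(\ell)|\leq\beta$ and $v_i(r,1)=1-r$, and combines the two inequalities with $\eps\leq\beta$ to conclude. The paper presents the two directions of the bound slightly less symmetrically, but the content is identical.
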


\begin{proof}
Note that by construction we have that for all agents $i \in \{1,2,3,4\}$, $v_i(r,1) = 1-r$, and $v_i(0,\ell) = \ell \pm \beta$. Since $(\ell,m,r)$ is an $\varepsilon$-envy-free division, there exists an agent $i$ that is allocated piece $[0, \ell]$ and thus satisfies $v_i(0,\ell) \geq v_i(r,1) - \eps$, which yields $\ell + \beta \geq 1-r-\eps$, i.e., $\ell \geq 1-r -\eps-\beta$. Similarly, there exists an agent $i$ that is allocated piece $[r, 1]$ and thus satisfies $v_i(r,1) \geq v_i(0,\ell) - \eps$, which yields $1-r \geq \ell - \beta - \eps$, i.e., $\ell \leq 1-r + \beta + \eps$. As a result, since $\eps \leq \beta$, we indeed obtain that $\ell = 1-r \pm 2\beta$.

If we additionally have that $\boost_i(\ell) = 0$ for all agents $i \in \{1,2,3,4\}$, then $v_i(0,\ell) = \ell$, and the same arguments as above yield that $\ell = 1-r \pm \eps$.
\end{proof}

The analysis now proceeds as follows. In the next section, we prove that $(\ell,m,r)$ cannot be $\eps$-envy-free if it lies outside the embedding region. Next, in \cref{sec:analysis-toolbox}, we prepare a toolbox for arguing that there are no unwanted solutions inside the embedding region, which we then proceed to prove in \cref{sec:analysis-inside}. Together, these three sections show that any $\varepsilon$-envy-free allocation must yield a solution to the {\sc Intersection End-of-Line} instance.

\subsubsection{Analysis: No solutions outside the embedding region}\label{sec:analysis-outside}

In this section we show that if $(\ell,m,r)$ is an $\varepsilon$-envy-free division of the cake, then it cannot lie outside the embedding region.

\begin{claim}\label{clm:no-solutions-outside-embedding}
It necessarily holds that $\ell \in [0.2,0.3]$, $m \in [0.45,0.55]$, and $r \in [0.7,0.8]$.
\end{claim}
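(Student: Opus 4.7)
The plan is to argue by contradiction: assume $(\ell,m,r)$ is $\eps$-envy-free but one of the three containments fails. The key engine is that each agent's value for a piece is close to that piece's length whenever the endpoints lie outside the two special regions $[0.2,0.3]\times[0.45,0.55]$ and $[0.45,0.55]\times[0.7,0.8]$ (where the label-dependent modifications and their interpolations live). Combined with \cref{clm:close-baseline}, which pins $\ell\approx 1-r$, this forces each cut near its target position $1/4$, $1/2$, $3/4$, and in fact well inside the prescribed intervals.

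First I would establish $\ell\in[0.2,0.3]$. Suppose $\ell<0.2$. Then the boost at $\ell$ vanishes for every agent, so the refined part of \cref{clm:close-baseline} gives $\ell=1-r\pm\eps$, hence $r>0.8-\eps$. For the agent $i_1$ receiving piece $1$, $v_{i_1}(0,\ell)=\ell<0.2$, so $\eps$-envy-freeness demands $v_{i_1}(\ell,m)\leq \ell+\eps$ and $v_{i_1}(m,r)\leq \ell+\eps$. Since $\ell<0.2$ keeps the second piece outside the first special region, $v_{i_1}(\ell,m)$ equals $m-\ell$ up to $O(\delta)$ boundary-interpolation error, giving $m-\ell\leq 0.2+O(\eps+\delta)$; in particular $m<0.45$, which in turn keeps the third piece outside the second special region as well, so $v_{i_1}(m,r)\approx r-m$ and $r-m\leq 0.2+O(\eps+\delta)$. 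Summing contradicts $r-\ell>0.6-\eps$. The case $\ell>0.3$ is parallel: \cref{clm:close-baseline} gives $r<0.7+2\beta$; envy-freeness of the agent $i_2$ receiving piece $2$ forces $m-\ell\geq \ell-\eps$ (the value is just $m-\ell$, since $\ell\notin[0.2,0.3]$), so $m>0.6-\eps$; but then $m>0.55$ pulls piece $3$ out of its special region too, and envy-freeness of the agent $i_3$ receiving piece $3$ gives $r-m\geq(1-r)-\eps$, i.e., $m<0.4+O(\beta+\eps)$, a contradiction.

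An essentially symmetric argument handles $r$: if $r<0.7$ the cascade (envy-freeness for piece $3$, then piece $2$) forces $m$ to lie both above $0.6-O(\eps)$ and below $0.4+O(\beta+\eps)$, and $r>0.8$ mirrors the case $\ell<0.2$. Finally, with $\ell\in[0.2,0.3]$ and $r\in[0.7,0.8]$ in hand, $m\in[0.45,0.55]$ is quick: if $m<0.45$ then $m$ sits outside the $m$-coordinate projection $[0.45,0.55]$ of both special regions, so $v_i(\ell,m)=m-\ell$ and $v_i(m,r)=r-m$ for every $i$; envy-freeness between the agents receiving pieces $2$ and $3$ then gives $m-\ell\geq r-m-\eps$, i.e., $m\geq(\ell+r)/2-\eps/2\geq 0.5-O(\beta+\eps)$, contradicting $m<0.45$. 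The case $m>0.55$ is symmetric.

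The main subtlety is the piecewise-linear interpolation: within width-$\delta$ strips around $\ell\in\{0.2,0.3\}$, $m\in\{0.45,0.55\}$, and $r\in\{0.7,0.8\}$, the interpolation blends modified grid values with length values and can produce deviations from length of order up to $0.15$. Fortunately $\delta=1/(10N)$ is tiny and $\eps,\gamma,\beta\ll\delta$, while every contradiction derived above has slack of constant order (roughly $0.2$), so the argument absorbs all boundary interpolation effects comfortably.
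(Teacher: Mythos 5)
Your overall strategy—split on which cut is out of range, use that values equal lengths outside the two special regions, and play $\eps$-envy-freeness between adjacent pieces against the approximate identity $\ell\approx 1-r$—is morally the same as the paper's. However, your handling of the boundary interpolation contains a genuine gap that invalidates the argument as written.

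The problem is the magnitude of the interpolation error in the width-$\delta$ strips bordering the special regions, and how it interacts with your deduction chains. You first call it an ``$O(\delta)$ boundary-interpolation error,'' then in your final paragraph correctly upgrade it to ``up to $0.15$,'' but then claim a slack of ``roughly $0.2$'' absorbs it. Neither of these saves the argument. Concretely, in the case $\ell<0.2$: for $\ell\in(0.2-\delta,0.2)$ and $m\in[0.45,0.55]$, the value $v_i(\ell,m)$ is a linear blend of $v_i(0.2-\delta,m)=m-0.2+\delta$ (a length) with $v_i(0.2,m)=0.2+\gamma$ (since $\flabel_i=+1$ at the left Sperner boundary), and the latter can be as much as $\approx 0.15$ below $m-0.2$ when $m$ is near $0.55$. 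This deviation is $O(1)$, not $O(\delta)$. Feeding it into your step ``$v_{i_1}(\ell,m)\le\ell+\eps$, hence $m-\ell\le 0.2+O(\eps+\delta)$'' only yields $m-\ell\le 0.35+O(\eps+\delta)$ and thus $m< 0.55+O(\eps+\delta)$, not the $m<0.45$ you need. Without $m<0.45$, your next link (``the third piece is outside the second special region'') is unavailable, and the final contradiction $r-\ell>0.6-\eps$ vs.\ $r-\ell<0.4+2\eps$ is never assembled. The chain breaks at an intermediate step, so reasoning about the slack at the end is moot. The same issue infects the $\ell>0.3$ case (near $\ell=0.3$ the second piece's value blends with $0.3-\gamma+\boost$) and the $m\notin[0.45,0.55]$ case once $\ell,r$ are pinned.

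The paper sidesteps this entirely by a grid-point argument: in each of its four regimes, it establishes a single strict inequality (e.g., $v_i(\ell,m)>v_i(0,\ell)+\eps$ for all $i$) at \emph{every} $D\times D$ grid point in the regime, covering grid points inside the embedding region via the Sperner boundary labels ($\flabel=+1$ on the top/left boundary, $-1$ on the bottom/right, etc.) and grid points outside via length comparisons. Linearity of the interpolation then transfers the inequality to all non-grid points, including those in the $\delta$-transition strips, with no error term to bound at all. To repair your proof you would need to replace the ``value $\approx$ length with bounded error'' reasoning in the transition strips with an explicit case analysis at grid points using the boundary labels, i.e., essentially the paper's mechanism.
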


\begin{proof}
We consider the following four cases, and show that in each case the division cannot be $\eps$-envy-free
\begin{itemize}
    \item Regime 1: $\ell \leq 0.2 + \delta$ and $m \geq 0.5$, or $\ell \leq 0.25$ and $m \geq 0.55$
    \item Regime 2: $\ell \leq 0.2 + \delta$ and $m \leq 0.5$, or $\ell \leq 0.25$ and $m \leq 0.45$
    \item Regime 3: $\ell \geq 0.3 - \delta$ and $m \geq 0.5$, or $\ell \geq 0.25$ and $m \geq 0.55$
    \item Regime 4: $\ell \geq 0.3 - \delta$ and $m \leq 0.5$, or $\ell \geq 0.25$ and $m \leq 0.45$
\end{itemize}
It thus follows that we must have $\ell \in [0.2 + \delta, 0.3 - \delta]$ and $m \in [0.45,0.55]$. By \cref{clm:close-baseline} we also necessarily have $1-r = \ell \pm 2\beta$, and since $2\beta \leq \delta$, it follows that we also have $r \in [0.7,0.8]$, as desired. In the remainder of this proof we now show that there is no $\eps$-envy-free division in each of the four regimes.

\textbf{Regime 1.} For any pair $(\ell,m)$ in Regime 1 that also lies on the $D \times D$ grid, we show below that $v_i(\ell,m) > v_i(0,\ell) + \eps$ for all agents $i \in \{1,2,3,4\}$. Since the valuations are constructed by piecewise linear interpolation, it then follows that $v_i(\ell,m) > v_i(0,\ell) + \eps$ also holds for points off the grid, i.e, for all $(\ell,m)$ in Regime 1. As a result, the division cannot be $\eps$-envy-free. It remains to prove that we indeed have $v_i(\ell,m) > v_i(0,\ell) + \eps$ for all grid points in Regime 1. If $(\ell,m)$ lies on a grid point in Regime 1 inside the embedding region, then by construction of the labels on the boundary of the embedding region we must have $\flabel_i(\ell,m) = +1$, because the top and left boundary of the Sperner instance have first label $+1$. As a result
$$v_i(\ell,m) = v_i(0,\ell) + \gamma \cdot \flabel_i(\ell,m) = v_i(0,\ell) + \gamma > v_i(0,\ell) + \eps$$
since $\gamma > \eps$. On the other hand, if $(\ell,m)$ lies on a grid point in Regime 1 outside the embedding region, then we consider two cases. If $\ell \leq 0.2 + \delta$ and $m \geq 0.5$, then
$$v_i(\ell,m) = m - \ell \geq 0.5 - 0.2 - \delta > 0.25 > 0.2 + \delta + \beta + \eps \geq \ell + \beta + \eps \geq v_i(0,\ell) + \eps$$
where we used $v_i(0,\ell) \leq \ell + \beta$, and $\delta + \beta + \eps < 0.05$. If $\ell \leq 0.25$ and $m \geq 0.55$, then
$$v_i(\ell,m) = m - \ell \geq 0.55 - 0.25 = 0.3 > 0.25 + \beta + \eps \geq \ell + \beta + \eps \geq v_i(0,\ell) + \eps$$
where we used $v_i(0,\ell) \leq \ell + \beta$, and $\beta + \eps < 0.05$.

\textbf{Regime 4.} We can argue that there is no solution in Regime 4 by using very similar arguments. For all grid points $(\ell,m)$ inside the embedding region, we have $\flabel_i(\ell,m) = -1$ for all agents $i \in \{1,2,3,4\}$ by construction of the labels on the boundary (because the bottom and right boundary of the Sperner instance have first label $-1$.). As a result,
$$v_i(\ell,m) = v_i(0,\ell) + \gamma \cdot \flabel_i(\ell,m) = v_i(0,\ell) - \gamma < v_i(0,\ell) - \eps$$
since $\gamma > \eps$. On the other hand, if $(\ell,m)$ is a grid point in Regime 4 outside the embedding region, we again consider two cases. If $\ell \geq 0.3 - \delta$ and $m \leq 0.5$, then
$$v_i(\ell,m) = m - \ell \leq 0.5 - 0.3 + \delta < 0.25 < 0.3 - \delta - \beta - \eps \leq \ell - \beta - \eps \leq v_i(0,\ell) - \eps$$
where we used $v_i(0, \ell) \geq \ell - \beta$, and $\delta + \beta + \eps < 0.05$. If $\ell \geq 0.25$ and $m \leq 0.45$, then
$$v_i(\ell,m) = m - \ell \leq 0.45 - 0.25 = 0.2 < 0.25 - \beta - \eps \leq \ell - \beta - \eps \leq v_i(0, \ell) - \eps$$
where we used $v_i(0, \ell) \geq \ell - \beta$, and $\beta + \eps < 0.05$.

\textbf{Regime 2.} Next, we consider Regime 2. By \cref{clm:close-baseline}, $\ell \leq 0.2 + \delta$ implies $r \geq 1 - \ell - 2\beta \geq 0.8 - \delta -2\beta \geq 0.8 -2\delta$, and similarly $\ell \leq 0.25$ implies $r \geq 0.75 -2\beta \geq 0.75 - \delta$, since $\delta \geq 2\beta$. Thus, it suffices to show that for any pair $(m,r)$ satisfying
\begin{itemize}
    \item Regime $2'$: $r \geq 0.8 - 2\delta$ and $m \leq 0.5$, or $r \geq 0.75 - \delta$ and $m \leq 0.45$
\end{itemize}
the division cannot be $\eps$-envy-free. As before, it will suffice to argue that $v_i(m,r) > v_i(r,1) + \eps$ holds for all agents $i \in \{1,2,3,4\}$ and for all grid points $(m,r) \in D \times D$ in Regime $2'$, since the valuations are constructed by piecewise linear interpolation. This will be enough to show that no agent is happy with piece $[r,1]$, and thus the division cannot be $\eps$-envy-free. Consider first any grid point $(m,r)$ in Regime $2'$ such that $(1-r,m)$ lies inside the embedding region (i.e., $1-r \in [0.2,0.3]$ and $m \in [0.45,0.55]$). By construction of the labels on the boundary we have that $\slabel_i(1-r,m) = +1$ (because the bottom and left boundary of the Sperner instance have second label $+1$.), and thus
$$v_i(m,r) = v_i(r,1) + \gamma \cdot \slabel_i(1-r,m) = v_i(r,1) + \gamma > v_i(r,1) + \eps$$
since $\gamma > \eps$. On the other hand, if $(m,r)$ is a grid point in Regime $2'$ such that $(1-r,m)$ lies outside the embedding region, then we have two cases. If $r \geq 0.8 - 2\delta$ and $m \leq 0.5$, then
$$v_i(m,r) = r - m \geq 0.8 - 2\delta - 0.5 > 0.25 > 0.2 + 2\delta + \eps \geq 1 - r + \eps = v_i(r,1) + \eps$$
since $2\delta + \eps < 0.05$. If $r \geq 0.75 - \delta$ and $m \leq 0.45$, then
$$v_i(m,r) = r - m \geq 0.75 - \delta - 0.45 = 0.3 - \delta > 0.25 + \delta + \eps \geq 1 - r + \eps = v_i(r,1) + \eps$$
since $2\delta + \eps < 0.05$.

\textbf{Regime 3.} We can argue that there is no solution in Regime 3 using very similar arguments. By \cref{clm:close-baseline} we have that if $(\ell,m)$ is in Regime 3, then $(m,r)$ must satisfy
\begin{itemize}
    \item Regime $3'$: $r \leq 0.7 + 2\delta$ and $m \geq 0.5$, or $r \leq 0.75 + \delta$ and $m \geq 0.55$
\end{itemize}
For any grid point $(m,r)$ in Regime $3'$ such that $(1-r,m)$ lies inside the embedding region, we have by construction of the labels on the boundary that $\slabel_i(1-r,m) = -1$ for all agents $i \in \{1,2,3,4\}$ (because the top and right boundary of the Sperner instance have second label $-1$.), and thus
$$v_i(m,r) = v_i(r,1) + \gamma \cdot \slabel_i(1-r,m) = v_i(r,1) - \gamma < v_i(r,1) - \eps$$
since $\gamma > \eps$.
On the other hand, for any grid point $(m,r)$ in Regime $3'$ such that $(1-r,m)$ lies outside the embedding region, we have two cases. If $r \leq 0.7 + 2\delta$ and $m \geq 0.5$, then
$$v_i(m,r) = r - m \leq 0.7 + 2\delta - 0.5 < 0.25 < 0.3 - 2\delta - \eps \leq 1 - r - \eps = v_i(r,1) - \eps$$
since $2\delta + \eps < 0.05$. If $r \leq 0.75 + \delta$ and $m \geq 0.55$, then
$$v_i(m,r) = r - m \leq 0.75 + \delta - 0.55 = 0.2 + \delta < 0.25 - \delta - \eps \leq 1 - r - \eps = v_i(r,1) - \eps$$
since $2\delta + \eps < 0.05$.
\end{proof}

\subsubsection{Analysis: Toolbox}\label{sec:analysis-toolbox}

By the previous section, we can now assume that $(\ell,m,r)$ is an $\varepsilon$-envy-free division with $\ell \in [0.2,0.3]$, $m \in [0.45,0.55]$, and $r \in [0.7,0.8]$. In particular, $(\ell,m)$ is a point in the Sperner embedding and it lies inside a square of the Sperner discretization (or in multiple squares if it lies on the boundary between different squares). The following claims provide a toolbox for arguing that, depending on the labels at the corners of the square, the division cannot be $\varepsilon$-envy-free in various cases.

We begin with the following claim which will be heavily used in this section.

\begin{claim}\label{clm:close-label}
It holds that
$$|\slabel_i(1-r,m) - \slabel_i(\ell,m)| \leq 1/2.$$
\end{claim}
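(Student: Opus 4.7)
The plan is to derive this inequality as a direct consequence of two facts already established: \cref{clm:close-baseline}, which gives $\ell$ and $1-r$ within $2\beta$ of each other, together with the $(2/\delta)$-Lipschitz-continuity of the piecewise linear interpolation $\slabel_i$ noted at the end of \cref{sub:valuations}. By \cref{clm:no-solutions-outside-embedding} we may assume $\ell \in [0.2,0.3]$, $m \in [0.45,0.55]$, and $r \in [0.7,0.8]$, so $1-r \in [0.2,0.3]$ and both $(1-r,m)$ and $(\ell,m)$ lie in the domain on which $\slabel_i$ is interpolated.

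The calculation itself is a one-liner. Holding the second coordinate fixed at $m$ and applying Lipschitz-continuity of $\slabel_i$ in its first coordinate,
$$|\slabel_i(1-r,m) - \slabel_i(\ell,m)| \;\leq\; \frac{2}{\delta} \cdot |\ell - (1-r)| \;\leq\; \frac{2}{\delta} \cdot 2\beta \;=\; \frac{4\beta}{\delta}.$$
Plugging in the calibration $\beta = \delta/8$ fixed in \cref{sub:valuations}, the right-hand side equals exactly $1/2$, which is the claimed bound.

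There is no real obstacle here; the statement is essentially a calibration lemma, and the choice $\beta = \delta/8$ was made precisely so that this ratio comes out to $1/2$. Its role in the subsequent analysis is to ensure that the second-label values at $(1-r,m)$ and $(\ell,m)$ are close enough to be treated interchangeably: since $1/2 < 1$, whenever one of the two is close to $\pm 1$, the other must have the same sign, which is the form in which the bound is likely to be invoked in the case analysis of \cref{sec:analysis-inside}.
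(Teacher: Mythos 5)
Your proof is correct and follows exactly the same argument as the paper: apply the $(2/\delta)$-Lipschitz bound on $\slabel_i$ to the first coordinate, use $|\ell-(1-r)|\leq 2\beta$ from \cref{clm:close-baseline}, and conclude with $4\beta/\delta = 1/2$ via the calibration $\beta = \delta/8$.
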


\begin{proof}
Given that $\slabel_i$ is Lipschitz-continuous with Lipschitz constant $2/\delta$, by \cref{clm:close-baseline} we have that
$$|\slabel_i(1-r,m) - \slabel_i(\ell,m)| \leq \frac{2}{\delta} |1-r-\ell| \leq \frac{4\beta}{\delta} \leq 1/2$$
since $\beta \leq \delta/8$.
\end{proof}

\begin{claim}\label{clm:all-agents-agree}
If $(\ell,m)$ lies in a square of the Sperner embedding such that all four agents agree on the labels of the four corners of the square, and these labels do not yield a Sperner solution, then $(\ell,m,r)$ cannot be $\varepsilon$-envy-free.
\end{claim}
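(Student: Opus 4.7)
The plan is to argue by cases on how the four corner labels avoid forming a Sperner solution, and in each case to exhibit an agent who necessarily envies a neighboring piece by more than $\eps$.

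The key structural observation that drives the case analysis is the following: because the four corner labels do not form a Sperner solution, at least one coordinate of the label (either the ``first'' or the ``second'') must take the same value at all four corners. Combined with the hypothesis that all four agents agree on the corner labels, this yields that some fixed $c \in \{+1,-1\}$ is the value of either $\flabel_i$ or $\slabel_i$ at all four corners simultaneously for every agent $i$. The next (easy but essential) step is to verify that the triangular piecewise-linear interpolation of \cref{sec:algo-prepro} sends every point of the square to $c$ whenever all four corner values are $c$: each of the two triangular interpolation formulas is a convex combination of three corner values, hence of three copies of $c$. I therefore obtain, for every agent $i$, $\flabel_i(\ell,m) = c$ in the first-coordinate case, and $\slabel_i(\ell,m) = c$ in the second-coordinate case.

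In the first-label case, the identity $v_i(\ell,m) = v_i(0,\ell) + \gamma \cdot \flabel_i(\ell,m) = v_i(0,\ell) + \gamma c$ holds uniformly in $i$. If $c = +1$, every agent prefers piece $[\ell,m]$ to piece $[0,\ell]$ by exactly $\gamma$; since $\gamma > \eps$, whichever agent is allocated $[0,\ell]$ envies piece $[\ell,m]$ by more than $\eps$, contradicting $\eps$-envy-freeness. If $c=-1$ the roles of pieces $1$ and $2$ are swapped and the agent allocated $[\ell,m]$ envies $[0,\ell]$. In the second-label case, the analogous identity $v_i(m,r) = v_i(r,1) + \gamma \cdot \slabel_i(1-r,m)$ is indexed at the point $(1-r,m)$ rather than $(\ell,m)$. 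Here I invoke \cref{clm:close-label}, whose proof combines the $(2/\delta)$-Lipschitz-continuity of $\slabel_i$ with the bound $|1-r-\ell| \le 2\beta \le \delta/4$ from \cref{clm:close-baseline}, to conclude $|\slabel_i(1-r,m) - c| \le 1/2$. Hence $\slabel_i(1-r,m)$ retains the sign of $c$ and has magnitude at least $1/2$, so $v_i(m,r) - v_i(r,1)$ has sign $c$ and magnitude at least $\gamma/2 = 4\eps > \eps$ for every $i$. As before, depending on the sign of $c$, either the agent allocated $[r,1]$ envies $[m,r]$ or vice versa, contradicting $\eps$-envy-freeness.

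The only step that is not essentially syntactic is the verification that the particular triangular interpolation scheme really is constant on a cell whose four corner values coincide; this is where I expect an attentive reader to want to see the formula unpacked, but it reduces to noting that the coefficients in the two interpolation formulas from \cref{sec:algo-prepro} sum to one. The second-label cases are the most delicate to state, because one must be careful that $\slabel_i$ is being evaluated at $(1-r,m)$ rather than $(\ell,m)$; I would present \cref{clm:close-label} immediately before to make the transfer clean. Everything else is bookkeeping: four symmetric cases, each with a one-line conclusion ruling out $\eps$-envy-freeness via the chosen piece.
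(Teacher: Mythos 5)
Your proof is correct and essentially matches the paper's. The paper treats this claim as a special case of Claim~\ref{clm:all-agents-same-partial-label} and proves that stronger claim; your direct argument covers the same four cases using the same key ingredients (one coordinate of the label is constant across the square because the corners avoid a Sperner solution, the interpolation is a convex combination so the interior value is also that constant, and \cref{clm:close-label} transfers the second-label information from $(\ell,m)$ to $(1-r,m)$ with a $\gamma/2$ margin that exceeds $\eps$).
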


This claim is a special case of the following stronger claim.

\begin{claim}\label{clm:all-agents-same-partial-label}
If $(\ell,m)$ lies in a square of the Sperner embedding such that the first or second label remains constant across all four agents and at all four corners (i.e., all the labels lie in one of the four following sets: $\{(+1,+1), (+1,-1)\}$, $\{(+1,+1), (-1,+1)\}$, $\{(-1,-1), (-1,+1)\}$, or $\{(-1,-1), (+1,-1)\}$), then $(\ell,m,r)$ cannot be $\varepsilon$-envy-free.
\end{claim}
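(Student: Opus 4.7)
The plan is to split the proof into four cases, one per label set. In each case, the hypothesis tells me that either the first or second coordinate of the label is constant (equal to $+1$ or $-1$) at all four corners of the square containing $(\ell,m)$, for every agent $i\in\{1,2,3,4\}$. Since $\flabel_i$ and $\slabel_i$ are defined as piecewise linear interpolations of the corner labels, the interpolated value at the interior point $(\ell,m)$ also equals that same constant $\pm 1$, \emph{simultaneously for all four agents}.

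For the cases $\{(+1,+1),(+1,-1)\}$ and $\{(-1,-1),(-1,+1)\}$, the first label is constantly $+1$ or $-1$, so by the construction of the valuations I get $v_i(\ell,m)=v_i(0,\ell)\pm\gamma$ for every agent $i$. Whichever of the two leftmost pieces has the smaller value under this formula is then universally under-demanded: whichever agent is allocated it envies the agent allocated the other of the two leftmost pieces by exactly $\gamma$, and $\gamma>\eps$ breaks $\eps$-envy-freeness.

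For the cases $\{(+1,+1),(-1,+1)\}$ and $\{(-1,-1),(+1,-1)\}$, the same interpolation argument gives $\slabel_i(\ell,m)=\pm 1$ for all agents. The small extra step is that the valuation formula uses $\slabel_i(1-r,m)$, not $\slabel_i(\ell,m)$. Here I would invoke \cref{clm:close-label} directly: it yields $|\slabel_i(1-r,m)-\slabel_i(\ell,m)|\leq 1/2$, so $\slabel_i(1-r,m)$ has the same sign as the constant $\pm 1$ with absolute value at least $1/2$. Therefore $v_i(m,r)=v_i(r,1)\pm\gamma\cdot\slabel_i(1-r,m)$ differs from $v_i(r,1)$ by at least $\gamma/2$ in a direction that is the same across all four agents. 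Since $\eps=\gamma/8<\gamma/2$, whichever of the two rightmost pieces ends up under-demanded forces a strict envy exceeding $\eps$ on the agent who receives it.

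The only mildly delicate point is the second-label case: one has to use that even though the valuation sees $\slabel_i$ at a shifted point $(1-r,m)$ rather than at $(\ell,m)$, the bound provided by \cref{clm:close-label} is sharp enough (factor of $1/2$ loss) that the surviving margin $\gamma/2$ still beats $\eps=\gamma/8$; the choice of constants $\beta=\delta/8$, $\gamma=\beta/8$, $\eps=\gamma/8$ was made precisely to accommodate this. The first-label cases have no such subtlety because the valuation formula for the first/second piece uses $\flabel_i$ at exactly $(\ell,m)$.
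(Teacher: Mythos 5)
Your proof is correct and follows essentially the same route as the paper's: split into the four label-set cases, observe that a constant coordinate at all four corners is preserved by piecewise-linear interpolation at $(\ell,m)$, and in the second-label cases invoke \cref{clm:close-label} to transfer the sign from $\slabel_i(\ell,m)$ to $\slabel_i(1-r,m)$ at the cost of a factor $1/2$, which $\gamma/2>\eps$ accommodates. The only trivial divergence is wording ("under-demanded piece" versus the paper's "no agent is happy with this piece"), and both correctly conclude that whoever receives the disadvantaged piece envies by more than $\eps$.
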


\begin{proof}
Let us consider the case where for all agents the labels lie in $\{(+1,+1), (+1,-1)\}$ at all four corners of the square containing $(\ell,m)$. Then, it must be that $\flabel_i(\ell,m) = +1$ for all agents $i$, and, as a result, $v_i(\ell,m) = v_i(0,\ell) + \gamma > v_i(0,\ell) + \varepsilon$, since $\gamma > \eps$. But this means that no agent is happy with the leftmost piece and thus the division cannot be $\varepsilon$-envy-free. The case $\{(-1,-1), (-1,+1)\}$ is handled very similarly.

Next, we consider the case where for all agents the labels lie in $\{(+1,+1), (-1,+1)\}$ at all four corners of the square containing $(\ell,m)$. Then, we necessarily have $\slabel_i(\ell,m) = +1$ for all agents $i$. By \cref{clm:close-label}, we have $\slabel_i(1-r,m) \geq 1/2$, and thus $v_i(m,r) = v_i(r,1) + \gamma/2 > v_i(r,1) + \varepsilon$, since $\gamma > 2\eps$. This means that no agent is happy with the rightmost piece and thus the division cannot be $\varepsilon$-envy-free. The case $\{(-1,-1), (+1,-1)\}$ is handled very similarly.
\end{proof}

\begin{claim}\label{clm:3-agents-identical}
If $(\ell,m)$ lies in a square of the Sperner embedding such that for at least three agents all four corners of the square have the same identical label, then $(\ell,m,r)$ cannot be $\varepsilon$-envy-free.
\end{claim}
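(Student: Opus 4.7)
The plan is a straightforward case analysis over the four possible ``identical'' labels that the three agents could share at all four corners of the square containing $(\ell, m)$, combined with a pigeonhole argument. Throughout, I will use that the valuation functions are the piecewise-linear interpolations of the expressions $v_i(\ell,m) = v_i(0,\ell) + \gamma \cdot \flabel_i(\ell,m)$ and $v_i(m,r) = v_i(r,1) + \gamma \cdot \slabel_i(1-r,m)$, so that if the three agents see the same label at all four corners of the square, then the interpolated values $\flabel_i(\ell,m)$ and $\slabel_i(\ell,m)$ are exactly the corresponding single-coordinate labels for each of those three agents.

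The key translation step in each case is the following. If the three agents see the first label $+1$ (resp.\ $-1$) identically at every corner, then $v_i(\ell,m) = v_i(0,\ell) + \gamma$ (resp.\ $v_i(0,\ell) - \gamma$), so they strictly prefer piece 2 to piece 1 (resp.\ piece 1 to piece 2) by $\gamma > \eps$. For the right half, the analogous conclusion for pieces 3 and 4 requires passing from $\slabel_i(\ell,m)$ to $\slabel_i(1-r,m)$, and here I invoke \cref{clm:close-label}: if $\slabel_i(\ell,m) = +1$ then $\slabel_i(1-r,m) \geq 1/2$, so $v_i(m,r) \geq v_i(r,1) + \gamma/2 > v_i(r,1) + \eps$, and symmetrically for the $-1$ case.

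Putting these together for each of the four identical labels $(+1,+1), (+1,-1), (-1,+1), (-1,-1)$, I get that the three agents in question all strictly prefer (by more than $\eps$) the pieces in some fixed subset of size two, namely $\{2,3\}$, $\{2,4\}$, $\{1,3\}$, $\{1,4\}$ respectively, over the pieces in the complementary pair. Since only two pieces lie in the preferred pair but three agents are competing for them, by pigeonhole at least one of these three agents must be allocated a piece from the non-preferred pair, and therefore envies by strictly more than $\eps$ any agent holding a piece from their preferred pair. This contradicts $\eps$-envy-freeness.

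The only step requiring any care is the shift from $\slabel_i(\ell,m)$ to $\slabel_i(1-r,m)$, which is handled cleanly by \cref{clm:close-label} together with the fact that $\gamma/2 > \eps$ (since $\eps = \gamma/8$). The rest is bookkeeping over the four label cases, and I do not expect any real obstacle.
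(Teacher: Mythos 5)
Your proof is correct and takes essentially the same route as the paper's: for each of the four possible common corner labels, the three agreeing agents strictly prefer (by more than $\eps$) piece $2$ over $1$ or vice versa, and piece $3$ over $4$ or vice versa, using \cref{clm:close-label} to transfer $\slabel_i(\ell,m)$ to $\slabel_i(1-r,m)$, and pigeonhole over the two ``good'' pieces then forces one of the three agents into a piece it envies out of. One minor phrasing nit: you only need, and only actually establish, the within-pair comparisons (piece $1$ vs.\ $2$ and piece $3$ vs.\ $4$) rather than the stronger claim that every preferred piece dominates every non-preferred piece, but the pigeonhole step already works with the weaker within-pair form, so the proof is sound as written.
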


\begin{proof}
It is easy to see that, in this case, there are two pieces that the three agents do not like. For example, if agents 1, 2, and 3 see identical label $(+1,-1)$ at all four corners, then for all $i \in \{1,2,3\}$ we have $\flabel_i(\ell,m) = +1$, and $\slabel_i(1-r,m) \leq -1/2$ (since $\slabel_i(\ell,m) = -1$ and by using \cref{clm:close-label}). This implies that for all three agents $i \in \{1,2,3\}$, $v_i(\ell,m) = v_i(0,\ell) + \gamma > v_i(0,\ell) + \eps$, and $v_i(m,r) \leq v_i(r,1) -\gamma/2 < v_i(r,1) - \eps$, since $\gamma > 2\eps$. In other words, none of the three agents is happy with piece $[0,\ell]$ or piece $[m,r]$, and thus the division cannot be $\varepsilon$-envy-free. The arguments are very similar for the other three possible labels.
\end{proof}

\begin{claim}\label{clm:2-agents-environment-no-boost}
If $(\ell,m)$ lies in a square of the Sperner embedding such that for at least two agents all four corners of the square have the environment label $(+1,-1)$, and if the boost function is zero at all four corners for all four agents, then $(\ell,m,r)$ cannot be $\varepsilon$-envy-free.
\end{claim}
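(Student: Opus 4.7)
The plan is to show that under the claim's hypotheses both ``environment-label'' agents strictly prefer piece 2 (i.e., $[\ell,m]$) to every other piece by more than $\eps$, so that a consistent $\eps$-envy-free assignment is impossible since only one of them can actually receive piece 2. Let $i_1,i_2$ denote the two agents whose Sperner labels are $(+1,-1)$ at all four corners of the square containing $(\ell,m)$.

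First I would unpack what the hypotheses give us about the valuations. Because the boost is zero at all four corners for every agent, its piecewise-linear interpolation (which depends only on the $\ell$-axis) vanishes across the whole $\ell$-range spanned by the square; hence $\boost_i(\ell)=0$ for every $i$, and so $v_i(0,\ell)=\ell$ and $v_i(r,1)=1-r$ uniformly. This activates the ``furthermore'' clause of \cref{clm:close-baseline}, yielding the tighter bound $\ell = 1-r \pm \eps$. For $j\in\{1,2\}$, since all four corners of the square carry $\flabel_{i_j}=+1$ and $\slabel_{i_j}=-1$, the piecewise-linear interpolation at $(\ell,m)$ gives $\flabel_{i_j}(\ell,m)=+1$ and $\slabel_{i_j}(\ell,m)=-1$; applying \cref{clm:close-label} then forces $\slabel_{i_j}(1-r,m)\le -1/2$.

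Plugging these into the valuation formulas, each $i_j$ assigns $v_{i_j}(0,\ell)=\ell$, $v_{i_j}(\ell,m)=\ell+\gamma$, $v_{i_j}(m,r)\le (1-r)-\gamma/2$, and $v_{i_j}(r,1)=1-r$. Comparing piece 2 to the others (using $\ell \ge 1-r-\eps$ and $\gamma = 8\eps$):
\[
v_{i_j}(\ell,m) - v_{i_j}(0,\ell) = \gamma, \quad v_{i_j}(\ell,m) - v_{i_j}(m,r) \ge \tfrac{3\gamma}{2} - \eps, \quad v_{i_j}(\ell,m) - v_{i_j}(r,1) \ge \gamma - \eps,
\]
each of which strictly exceeds $\eps$. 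Hence in any $\eps$-envy-free partition, both $i_1$ and $i_2$ must be assigned piece 2, which is impossible.

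There is no genuine obstacle here: the only care required is in tracking the slack between $\gamma$, $\beta$, and $\eps$ (the reduction was tuned precisely so that $\gamma = 8\eps$ and zero-boost yields $\ell \approx 1-r$ within $\eps$ rather than $2\beta$, which is exactly what is needed to beat the $\eps$ envy tolerance in all three comparisons above).
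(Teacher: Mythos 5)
Your proof is correct and relies on the same underlying inequalities and lemmas as the paper's (\cref{clm:close-baseline}, \cref{clm:close-label}, the valuation formulas, and the tuning $\gamma = 8\eps$). The only difference is in how the contradiction is packaged: you show piece $[\ell,m]$ strictly dominates \emph{all three} other pieces by more than $\eps$ for both environment-label agents, so both would have to receive it, while the paper first rules out pieces $[0,\ell]$ and $[m,r]$, concludes one of the two must receive $[r,1]$, and then derives $1-r > \ell+\eps$, contradicting $\ell = 1-r\pm\eps$; the arithmetic is essentially identical and your framing is arguably a touch more direct.
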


\begin{proof}
Without loss of generality, let agents 1 and 2 see the environment label $(+1,-1)$ at all four corners. In that case, we must have that for $i \in \{1,2\}$, $\flabel_i(\ell,m) = +1$ and $\slabel_i(1-r,m) \leq -1/2$ (since $\slabel_i(\ell,m) = -1$ and by using \cref{clm:close-label}). This implies that $v_i(\ell,m) = v_i(0,\ell) + \gamma > v_i(0,\ell) + \eps$, and $v_i(m,r) \leq v_i(r,1) -\gamma/2 < v_i(r,1) - \eps$, since $\gamma > 2\eps$. In particular, the two agents strictly prefer piece $[\ell,m]$ over piece $[0,\ell]$, and also strictly prefer piece $[r,1]$ over piece $[m,r]$. As a result, one of these two agents, say agent 1, must be allocated piece $[r,1]$, which implies that, for the division to be $\eps$-envy-free, we must have
$$1-r = v_1(r,1) \geq v_1(\ell,m) - \eps = v_1(0,\ell) + \gamma - \eps.$$
Now, given that agent 1 has boost function zero everywhere in the square, it follows that $v_1(0,\ell) = \ell$, and thus $1-r \geq \ell + \gamma - \eps > \ell + \eps$, since $\gamma > 2\eps$. But since all four agents have boost function equal to zero in the square, \cref{clm:close-baseline} yields that $\ell = 1-r \pm \eps$, a contradiction.
\end{proof}

\begin{claim}\label{clm:3-agents-positive-no-boost}
If $(\ell,m)$ lies in a square of the Sperner embedding such that three agents only see labels from $\{(-1,+1), (+1,+1)\}$ or only from $\{(+1,-1), (+1,+1)\}$, and if the boost function is zero at all four corners for all four agents, then $(\ell,m,r)$ cannot be $\varepsilon$-envy-free.
\end{claim}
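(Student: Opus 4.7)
The proof plan is to split according to which single label is forced to be $+1$ across all corners seen by the three agents, and in each case show that the resulting majority preference forces the fourth agent onto an extreme piece ($[r,1]$ or $[0,\ell]$). An agent among the three then ends up on the opposite extreme, and the tight baseline $\ell \approx 1-r$ available when the boost vanishes produces a contradiction. I do not anticipate a real obstacle; everything reduces to chasing inequalities, and the key calibration is the choice $\gamma = 8\eps$, which leaves enough slack.

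In the first case, suppose three agents, say $1,2,3$, see only labels from $\{(-1,+1),(+1,+1)\}$. Then the second label is $+1$ at all four corners for each $i \in \{1,2,3\}$, so by the piecewise-linear interpolation $\slabel_i(\ell,m) = +1$. Applying \cref{clm:close-label} gives $\slabel_i(1-r,m) \geq 1/2$, hence
\[
v_i(m,r) = v_i(r,1) + \gamma \cdot \slabel_i(1-r,m) \geq v_i(r,1) + \gamma/2 > v_i(r,1) + \eps,
\]
using $\gamma = 8\eps$. Therefore none of agents $1,2,3$ can be allocated $[r,1]$ in an $\eps$-envy-free division, so agent $4$ must receive $[r,1]$. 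Consequently one of $1,2,3$, call it $j$, receives $[0,\ell]$. The $\eps$-envy-freeness of $j$ against $[m,r]$, combined with $v_j(0,\ell) = \ell$ (boost is zero) and the bound above, gives $\ell \geq (1-r) + \gamma/2 - \eps$. But \cref{clm:close-baseline}, in its stronger boost-free form, gives $\ell \leq (1-r) + \eps$. Putting these together yields $\gamma/2 \leq 2\eps$, contradicting $\gamma = 8\eps$.

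The second case, in which three agents see only labels from $\{(+1,-1),(+1,+1)\}$, is handled symmetrically by swapping the roles of $\flabel$ and $\slabel$ (and of the pieces $[0,\ell]$ and $[r,1]$). Now $\flabel_i(\ell,m) = +1$ for $i \in \{1,2,3\}$, so $v_i(\ell,m) = v_i(0,\ell) + \gamma > v_i(0,\ell) + \eps$, which forces agent $4$ to be allocated $[0,\ell]$. Then some $j \in \{1,2,3\}$ receives $[r,1]$, and $\eps$-envy-freeness against $[\ell,m]$ together with the vanishing boost gives $1-r \geq \ell + \gamma - \eps$. Combined with $1-r \leq \ell + \eps$ from \cref{clm:close-baseline}, this yields $\gamma \leq 2\eps$, again contradicting $\gamma = 8\eps$, and completes the proof.
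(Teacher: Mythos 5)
Your proof is correct and follows essentially the same route as the paper: force the three label-homogeneous agents away from an extreme piece, push agent $4$ onto that extreme piece, then play the opposite extreme's length against the strong $\ell = 1-r \pm \eps$ bound from \cref{clm:close-baseline} to derive a contradiction with $\gamma = 8\eps$. The only (welcome) difference is that you carry out the $\{(+1,-1),(+1,+1)\}$ case in full — and correctly note that it is in fact simpler, since $\flabel_i$ is evaluated directly at $(\ell,m)$ so \cref{clm:close-label} is not needed there — whereas the paper dismisses it as symmetric.
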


\begin{proof}
Without loss of generality, let agents 1, 2, and 3 only see labels from $\{(-1,+1), (+1,+1)\}$ at all four corners. Then, it must be that for all agents $i \in \{1,2,3\}$, $\slabel_i(\ell,m) = +1$, and thus $\slabel_i(1-r,m) \geq 1/2$ by \cref{clm:close-label}. It follows that $v_i(m,r) \geq v_i(r,1) + \gamma/2 > v_i(r,1) + \eps$, and thus all three agents strictly prefer piece $[m,r]$ to piece $(r,1)$. As a result, for the division to be $\eps$-envy-free, one of the three agents, say agent 1, has to be allocated piece $[0,\ell]$, and satisfy
$$v_1(0,\ell) \geq v_1(m,r) - \eps \geq v_1(r,1) + \gamma/2 - \eps = 1-r + \gamma/2 - \eps.$$
Given that agent 1 has boost function zero everywhere in the square, we have $v_1(0,\ell) = \ell$, and thus $\ell \geq 1-r + \gamma/2 - \eps > 1-r + \eps$, since $\gamma > 4\eps$. Since all four agents have boost function equal zero in the square, \cref{clm:close-baseline} yields $\ell = 1-r \pm \eps$, a contradiction. The case where the three agents only see labels from $\{(+1,-1), (+1,+1)\}$ at all four corners is handled using the same arguments.
\end{proof}

\begin{claim}\label{clm:3-agents-negative-boost}
If $(\ell,m)$ lies in a square of the Sperner embedding such that one of the following two cases holds
\begin{enumerate}
\item The boost function is $+\beta$ at all four corners for some agent, and the other three agents have boost zero at all four corners and only labels in $\{(+1,-1), (-1,-1)\}$.
\item The boost function is $-\beta$ at all four corners for some agent, and the other three agents have boost zero at all four corners and only labels in $\{(-1,+1), (-1,-1)\}$.
\end{enumerate}
then $(\ell,m,r)$ cannot be $\varepsilon$-envy-free.
\end{claim}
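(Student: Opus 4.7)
My plan is to handle each of the two cases with the same template: first, identify which one of the four pieces is forced to be allocated to the ``boost'' agent $i$, and then use $i$'s boost-perturbed value on the first piece to derive a contradiction with the envy-freeness of the three other agents. In both cases I expect the contradiction to take the form of incompatible lower and upper bounds on $\ell - (1-r)$, made possible by the hierarchy $\beta = 8\gamma$ and $\gamma = 8\eps$.

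For Case~1, I would first argue that every agent $j \neq i$ sees only labels with second coordinate $-1$ at the four corners of the square containing $(\ell,m)$, so linear interpolation pins $\slabel_j(\ell,m) = -1$. By \cref{clm:close-label} I then obtain $\slabel_j(1-r,m) \leq -1/2$, and hence $v_j(m,r) \leq v_j(r,1) - \gamma/2 < v_j(r,1) - \eps$; thus no $j \neq i$ can receive $[m,r]$ in any $\eps$-envy-free allocation, forcing $[m,r]$ to agent~$i$. The three remaining pieces $[0,\ell], [\ell,m], [r,1]$ go to the agents $j \neq i$, and the one receiving $[0,\ell]$ provides the lower bound $\ell = v_j(0,\ell) \geq v_j(r,1) - \eps = 1-r-\eps$. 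Simultaneously, agent~$i$'s non-envy of $[0,\ell]$ (while holding $[m,r]$) would give, via $v_i(0,\ell) = \ell + \beta$ and $v_i(m,r) \leq 1-r+\gamma$, the upper bound $\ell \leq 1-r + \gamma - \beta + \eps = 1-r - 7\gamma + \eps$. Combining the two yields $7\gamma \leq 2\eps$, contradicting $\gamma = 8\eps$.

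Case~2 I would handle by the mirror argument. Now every $j \neq i$ satisfies $\flabel_j(\ell,m) = -1$, so $v_j(\ell,m) = \ell - \gamma < \ell - \eps = v_j(0,\ell) - \eps$, which forces $[\ell,m]$ to agent~$i$. The three agents $j \neq i$ then share $[0,\ell], [m,r], [r,1]$, and the one receiving $[r,1]$ gives $\ell \leq 1-r+\eps$. Meanwhile $v_i(0,\ell) = \ell - \beta$ and $v_i(\ell,m) \leq \ell - \beta + \gamma$, so $i$ not envying $[r,1]$ requires $\ell - \beta + \gamma \geq 1-r-\eps$, i.e.\ $\ell \geq 1-r + 7\gamma - \eps$, incompatible with the previous bound.

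The hardest step, I expect, will be the first one in each case: correctly arguing that a particular piece is forced to the boost agent. This relies on combining the interpolation observation (all four corners agreeing on one coordinate of the label pins the interpolated label at $(\ell,m)$ to exactly $-1$ in that coordinate) with \cref{clm:close-label} to transfer this information from $(\ell,m)$ to the point $(1-r,m)$ used in the valuation of $[m,r]$. Once the forced assignment is identified, the remaining inequalities collapse cleanly because the $\pm\beta$ boost dominates both the $\gamma$-sized label fluctuations and the $\eps$-sized envy-freeness slack.
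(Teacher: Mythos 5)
Your proposal follows essentially the same argument as the paper: in each case you force the piece $[m,r]$ (resp.\ $[\ell,m]$) onto the boost agent by using \cref{clm:close-label} and the pinned label coordinate to show the other three agents strictly disprefer it, then combine the envy-freeness of a boost-zero agent on an external piece with the boost agent's envy-freeness to derive incompatible bounds on $\ell-(1-r)$. The arithmetic is sound under $\beta=8\gamma$, $\gamma=8\eps$, and matches the paper's use of $\beta>\gamma+2\eps$.
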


\begin{proof}
Without loss of generality, let agents 1, 2, and 3 have boost zero in the square, and only see labels in $\{(+1,-1), (-1,-1)\}$, while agent 4 has boost $+\beta$ in the square. For $i \in \{1,2,3\}$, we have $\slabel_i(\ell,m) = -1$, and thus $\slabel_i(1-r,m) \leq -1/2$ by \cref{clm:close-label}. This means that $v_i(m,r) \leq v_i(r,1) - \gamma/2 < v_i(r,1) - \eps$, since $\gamma > 2\eps$. As a result, all three agents strictly prefer piece $[r,1]$ to piece $[m,r]$. In particular, for the division to be $\eps$-envy-free, agent 4 must necessarily be allocated piece $[m,r]$, and satisfy $v_4(m,r) \geq v_4(0,\ell) - \eps$.

Furthermore, one of the three agents, say agent 1, must be allocated piece $[0,\ell]$ and thus satisfy $v_1(0,\ell) \geq v_1(r,1) - \eps = 1-r - \eps$. Since agent 1 has boost zero in the square, we have $v_1(0,\ell) = \ell$, and thus deduce that $\ell \geq 1-r - \eps$.

Note that by construction of the valuation function we have $v_4(m,r) \leq v_4(r,1) + \gamma = 1-r + \gamma$. On the other hand, given that agent 4 has boost $+\beta$ in the square, we also have $v_4(0,\ell) = \ell + \beta$. Together, we obtain that
$$v_4(m,r) \leq 1-r + \gamma \leq \ell + \eps + \gamma = v_4(0,\ell) - \beta + \eps + \gamma < v_4(0,\ell) - \eps$$
where we used $\beta > \gamma + 2\eps$. This means that agent 4 is not happy with its allocated piece, a contradiction.

For the case where the three agents have labels in $\{(-1,+1), (-1,-1)\}$ only, using similar arguments we can deduce that all three agents strictly prefer piece $[0,\ell]$ to piece $[\ell,m]$. In particular, agent 4 must be allocated piece $[\ell,m]$ and satisfy $v_4(\ell,m) \geq v_4(1,r) - \eps$. Since one of the three agents must be allocated piece $[r,1]$, and that agent has boost zero in the square, we can deduce that $1-r \geq \ell - \eps$. Using the fact that agent 4 has boost $-\beta$ in the square, we can now write
$$v_4(\ell,m) \leq v_4(0,\ell) + \gamma = \ell - \beta + \gamma \leq 1-r + \eps - \beta + \gamma = v_4(r,1) + \eps - \beta + \gamma < v_4(r,1) - \eps$$
where we used $\beta > \gamma + 2\eps$. This means that agent 4 is not happy with its allocated piece, a contradiction.
\end{proof}

\subsubsection{Analysis: No unwanted solutions inside the embedding region}\label{sec:analysis-inside}

In this last section of the proof, we use the toolbox created in the previous section, together with various properties of the construction of the embedding, to argue that no unwanted solutions occur inside the embedding region.

\paragraph*{\bf No solutions outside crossing and vertex regions.}
If $(\ell,m)$ lies in a square outside any crossing or vertex region, then, by \cref{clm:properties-outside}, one of the following two cases occurs:
\begin{enumerate}
    \item All four agents agree on the labels of the four corners of the square, and these labels do not correspond to a Sperner solution. By \cref{clm:all-agents-agree}, $(\ell,m,r)$ cannot be $\varepsilon$-envy-free.
    \item At least three agents see the environment label at all four corners of the square. By \cref{clm:3-agents-identical}, $(\ell,m,r)$ cannot be $\varepsilon$-envy-free.
\end{enumerate}
Thus, we deduce that if $(\ell,m,r)$ is $\varepsilon$-envy-free, $(\ell,m)$ must lie in a crossing region or in a vertex region.

\paragraph*{\bf No solutions in crossing regions.}
If $(\ell,m)$ lies in a square inside a crossing region, then, by \cref{clm:properties-crossing}, one of the following four cases occurs:
\begin{enumerate}
    \item All four agents agree on the labels of the four corners of the square, and these labels do not correspond to a Sperner solution. By \cref{clm:all-agents-agree}, $(\ell,m,r)$ cannot be $\varepsilon$-envy-free.
    \item At least three agents see the environment label at all four corners of the square. By \cref{clm:3-agents-identical}, $(\ell,m,r)$ cannot be $\varepsilon$-envy-free.
    \item Two agents see no path in the crossing region, and the other two agents see different single paths. In particular, by construction, the boost function is zero for all four agents in the crossing region, since no agent sees a crossing. As a result, we have two agents who see the environment label at all four corners of the square, and the boost function is zero at all four corners for all four agents. By \cref{clm:2-agents-environment-no-boost}, $(\ell,m,r)$ cannot be $\varepsilon$-envy-free.
    \item One agent, say agent $i$, sees a crossing, and the other three agents see the same single path (horizontal or vertical). In that case, by \cref{clm:properties-crossing-detail}, one of the following cases occurs:
    \begin{enumerate}
\item All four agents agree on the labels at the four corners, and there is no Sperner solution. By \cref{clm:all-agents-agree}, $(\ell,m,r)$ cannot be $\varepsilon$-envy-free.
\item At least three agents agree and they see the same identical label at all four corners. By \cref{clm:3-agents-identical}, $(\ell,m,r)$ cannot be $\varepsilon$-envy-free.
\item The square lies in column $c_1^+$ or $c_2^+$, and the three agents (other than $i$) only see labels from $\{(+1,-1), (-1,-1)\}$ in the square. By construction, agent $i$ has boost $+\beta$ at all four corners, while the other agents have boost $0$ at all corners. Thus, by \cref{clm:3-agents-negative-boost}, $(\ell,m,r)$ cannot be $\varepsilon$-envy-free.
\item The square lies in column $c_1^-$ or $c_2^-$, and the three agents (other than $i$) only see labels from $\{(-1,+1), (-1,-1)\}$ in the square. By construction, agent $i$ has boost $-\beta$ at all four corners, while the other agents have boost $0$ at all corners. Thus, by \cref{clm:3-agents-negative-boost}, $(\ell,m,r)$ cannot be $\varepsilon$-envy-free.
\item The square is not incident on any of the columns $c_1^+, c_1^-, c_2^+, c_2^-$, and three agents see labels only from $\{(-1,+1), (+1,+1)\}$ or only from $\{(+1,-1), (+1,+1)\}$ in the square. By construction, all four agents have boost $0$ at all corners. Thus, by \cref{clm:3-agents-positive-no-boost}, $(\ell,m,r)$ cannot be $\varepsilon$-envy-free.
\end{enumerate}
\end{enumerate}

\paragraph*{\bf No solutions in non-end-of-line vertex regions.}
If $(\ell,m)$ lies in a square inside a vertex region, then by \cref{clm:properties-vertex-details}, one of the following cases occurs:
\begin{enumerate}
\item All four agents agree on the labels at the four corners, and there is no Sperner solution. By \cref{clm:all-agents-agree}, $(\ell,m,r)$ cannot be $\varepsilon$-envy-free.
\item At least three agents agree and they see the same identical label at all four corners. By \cref{clm:3-agents-identical}, $(\ell,m,r)$ cannot be $\varepsilon$-envy-free.
\item At least three agents see labels only in $\{(-1,+1), (+1,+1)\}$, or only in $\{(+1,-1), (+1,+1)\}$, and the square touches the connector region. Then, all four agents have boost $0$ at all corners, since by construction the connector region does not overlap with any crossing region in terms of the $x$-coordinates. Thus, by \cref{clm:3-agents-positive-no-boost}, $(\ell,m,r)$ cannot be $\varepsilon$-envy-free.
\item All four agents see labels only in $\{(+1,-1), (-1,-1)\}$, or only in $\{(-1,+1), (-1,-1)\}$. By \cref{clm:all-agents-same-partial-label}, $(\ell,m,r)$ cannot be $\varepsilon$-envy-free.
\end{enumerate}

As a result, if $(\ell,m,r)$ is $\eps$-envy-free, then $(\ell,m)$ must lie in an end-of-line vertex region. This completes the proof.

\subsection{Obtaining \ppad/-hardness and a Query Lower Bound for Identical Agents}\label{sec:hardness-PPAD-query}

The same construction can be used to show \ppad/-hardness and a query lower bound for the problem with identical agents. To do this, we reduce from the standard \textsc{End-of-Line} problem which is known to be \ppad/-complete and to require $\poly(n)$ queries. Every agent sees the same \textsc{End-of-Line} instance and thus has the same valuation function. Note that most of the arguments become much simpler, because all the agents agree on the labeling everywhere. Furthermore, there is no need to ever overwrite the connector region with label $(-1,-1)$, since all nodes have in-degree and out-degree at most one. It is easy to check that the construction can be implemented in a query-efficient way and in polynomial time, and thus yields a valid reduction in the query complexity and white-box settings.

\appendix

\section{Envy-free cake-cutting: Reduction from $n$ to $n+1$ agents}\label{app:sec:reduction-to-more-agents}

Below, we present a reduction from $n$ to $n+1$ agents for the envy-free cake-cutting problem. The reduction applies to the setting with general valuations, and also to the setting with monotone valuations. Whether there is a way to also achieve this for additive valuations is an interesting open question.

\begin{lemma}
The connected $\eps$-envy-free cake-cutting problem with $n$ agents with general valuations reduces to the same problem with $n+1$ agents. This reduction is efficient in all three models (query complexity, communication complexity, and computational complexity). Furthermore, the reduction maps an instance with $n$ monotone agents to an instance with $n+1$ monotone agents. Finally, for the query and computational complexity models, there is also a version of the reduction that reduces instances with $n$ identical agents to $n+1$ identical agents.
\end{lemma}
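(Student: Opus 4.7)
The plan is to embed the $n$-agent instance on $[0,1]$ into an $(n+1)$-agent instance on an extended cake $[0, 1+\delta]$, where $\delta > 0$ is a small parameter to be chosen as a function of $\eps$ and $n$. For the original agents $i \in [n]$, we define $v_i'(a, b) := v_i(\min(a, 1), \min(b, 1))$, so that they have their original valuation on $[0,1]$ and zero value on $[1, 1+\delta]$. This preserves monotonicity and $1$-Lipschitzness. The $(n+1)$th agent is given a valuation $v_{n+1}'$ that is concentrated on the extra region $[1, 1+\delta]$: in the general (possibly non-monotone) case, we use a valuation combining a positive contribution on $[1, 1+\delta]$ with a penalty term that disincentivizes extending the piece into $[0,1]$; in the monotone case, $v_{n+1}'$ is monotone and concentrated on $[1, 1+\delta]$ (so that any piece not containing a significant portion of $[1,1+\delta]$ has small $v_{n+1}'$-value).

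Given an $\eps'$-envy-free $(n+1)$-agent allocation with cuts $c_1 < \dots < c_n$, the extraction takes the first $n-1$ cuts and uses $P'_j := [c_{j-1}, c_j]$ for $j < n$ and $P'_n := [c_{n-1}, 1]$ as the $n$-agent allocation on $[0,1]$, with each original agent keeping its piece-index from the $(n+1)$-agent allocation. The correctness analysis proceeds in three steps. First, we show that the $(n+1)$th agent must be assigned the rightmost piece $P_{n+1} = [c_n, 1+\delta]$: if an original agent received $P_{n+1}$, then whenever $P_{n+1}$ lies mostly in $[1, 1+\delta]$ it has essentially zero value for them, while some other piece must have value at least $1/(n+1)$ by an averaging argument, producing envy exceeding $\eps'$. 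Second, we show that $c_n$ is close to $1$: the construction of $v_{n+1}'$ ensures that in any $\eps'$-envy-free allocation, the $(n+1)$th agent prefers to have $c_n \geq 1$ (or within $O(\delta + \eps')$ of $1$); combined with a Lipschitz bound on how much of $[0,1]$ the $(n+1)$th agent's piece can ``steal'' from the originals without causing envy, we obtain $1 - c_n = O(1/n + \eps')$. Third, the envy-freeness of the extracted allocation follows because for all $j < n$ we have $v_i(P'_j) = v_i'(P_j)$, and for $j = n$ we have $v_i(P'_n) = v_i'(P_n) + v_i(\min(c_n,1), 1)$; the extra term $v_i(\min(c_n,1),1)$ is bounded by $1 - c_n$ via $1$-Lipschitzness, and so the envy in the $n$-agent allocation is at most $\eps' + O(1/n)$.

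Choosing $\eps' := \eps / \poly(n)$ and $\delta$ appropriately, we conclude that an $\eps'$-envy-free solution of the $(n+1)$-agent instance yields an $\eps$-envy-free solution of the original $n$-agent instance. The reduction is efficient in each model: each value query to $v_i'$ (or $v_{n+1}'$) is simulated by $O(1)$ queries to $v_i$; in the communication model each agent computes its own $v_i'$ locally; and in the white-box model the new valuation functions have polynomial-size explicit descriptions in terms of the originals. The loss $\eps \mapsto \eps/\poly(n)$ translates into an $O(\log n)$ additive overhead in $\log(1/\eps)$, which is absorbed into the ``efficient'' regime $\poly(\log(1/\eps))$. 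For the identical-agents variant (query and white-box only), we use a single valuation shared by all $n+1$ agents obtained by combining the original $v$ with the extra-region gadget in a way that keeps the valuations identical; the same extraction argument applies. The reason the communication-complexity version of this variant is not claimed is precisely that the gadget construction must be applied to all agents simultaneously and the communication cost of sharing it is not compatible with the identical-agent communication model.

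\paragraph{Main obstacle.} The delicate step is controlling $c_n$: the naive construction only forces $c_n \geq 1 - O(1/n)$, because the $(n+1)$th agent's piece is allowed (by envy-freeness) to ``absorb'' up to $1/(n+1)$ of each original agent's valuation before any original agent starts to envy. This loss is harmless in terms of efficiency (it costs only an $O(\log n)$ factor in $\log(1/\eps)$), but must be tracked carefully. Avoiding it altogether would require a construction that strictly forces $c_n \geq 1$; for non-monotone valuations this can be achieved by making $v_{n+1}'$ actually decrease when the piece extends into $[0,1]$, while in the monotone case the $O(1/n)$ slack seems unavoidable without a fundamentally different idea, which is also why the analogous reduction for \emph{additive} valuations is left open.
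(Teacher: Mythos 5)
Your approach differs from the paper's: you extend the cake to $[0,1+\delta]$ and give agent $n+1$ value only on the new region, keeping the original agents' valuations unchanged on $[0,1]$ and zero on $[1,1+\delta]$. The paper instead rescales $[0,1]$ into the left half of a fresh unit cake and adds to \emph{every} agent's valuation (not just agent $n+1$'s) a shared component $\phi$ supported near the right end. That difference is not cosmetic; it is exactly what makes the reduction work for all $\eps$ and where your argument has a genuine gap.

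The problem is the $\Theta(1/n)$ slack you identify. Because the original agents have \emph{zero} value on $[1,1+\delta]$, agent $n+1$'s piece can absorb up to roughly $1/n$ of value from $[0,1]$ before any original agent starts to envy, and your own analysis shows the extracted allocation is only $(\eps'+\Theta(1/n))$-envy-free. This is an \emph{additive} loss, not a multiplicative one: for a target $\eps < c/n$ there is no choice of $\eps'\geq 0$ that makes $\eps'+c/n \leq \eps$. So the reduction fails precisely in the regime $\eps \ll 1/n$, which is the regime the statement is about. The remark that the loss ``costs only an $O(\log n)$ factor in $\log(1/\eps)$'' conflates $\eps \mapsto \eps/\poly(n)$ (harmless, absorbed into the exponent) with $\eps \mapsto \eps + \Theta(1/n)$ (fatal, bounded away from $0$). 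The construction of $v_{n+1}'$ is also left underspecified; in particular the penalty idea for the non-monotone case is not compatible with the monotone case, where the slack cannot be removed by penalizing agent $n+1$ for extending into $[0,1]$.

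The paper avoids all of this by embedding the original cake into $[0,1/2]$ (scaled down by a factor $1/3$) and adding to each agent $i\in[n+1]$ the common term $\frac{2}{3}\phi(|[a,b]\cap[1/2,1]|)$, where $\phi$ is $0$ below length $1/3$ and ramps up to $1$ at length $1/2$. Since the original agents \emph{also} value this region, agent $n+1$ is forced, on pain of making some original agent envious, to take essentially exactly the rightmost piece $\supset$ ``more than $1/3$ of $[1/2,1]$'' and nothing from $[0,1/2]$; the threshold $1/3$ ensures at most one piece is $\phi$-positive, and a small case analysis (with one $2\eps$ swap if agent $n+1$ didn't initially hold the rightmost piece) yields a $6\eps$-envy-free allocation of the scaled cake --- a constant-factor loss valid for every $\eps$. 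If you want to rescue your formulation, you would have to similarly give the original agents nonzero value on $[1,1+\delta]$ so that agent $n+1$ cannot take any of $[0,1]$ for free; but at that point you have essentially rediscovered the paper's construction.
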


\begin{proof}
Let $v_1, \dots, v_n$ denote the $1$-Lipschitz continuous valuation functions of the $n$ agents. For $i \in [n]$ and $a,b \in [0,1]$, we let
$$v_i'(a,b) := \frac{1}{3} v_i\Big([2a,2b] \cap [0,1]\Big) + \frac{2}{3} \phi\Big(\big|[a,b] \cap [1/2,1]\big|\Big)$$
and
$$v_{n+1}'(a,b) := \frac{2}{3} \phi\Big(\big|[a,b] \cap [1/2,1]\big|\Big)$$
where $\phi: [0,1/2] \to [0,1]$ is defined as
\begin{equation*}
\phi(t) = \left\{\begin{tabular}{lc}
    $0$ & if $t \leq 1/3$ \\
    $6(t-1/3)$ & if $t \geq 1/3$
\end{tabular} \right.
\end{equation*}
Note that $v_1', \dots, v_{n+1}'$ are $5$-Lipschitz-continuous valuation functions (which can easily be normalized to be $1$-Lipschitz continuous). Furthermore, if $v_1, \dots, v_n$ are monotone, then so are $v_1', \dots, v_{n+1}'$. Finally, the new valuations can easily be constructed/simulated from the original ones in all three models of computation.

Consider any $\eps$-envy-free allocation $A_1', \dots, A_n', A_{n+1}'$ with respect to $v_1', \dots, v_{n+1}'$. We claim that, without loss of generality (and possibly by switching to a $2\eps$-envy-free allocation), the piece $A_{n+1}'$ obtained by agent $n+1$ is the rightmost piece, i.e., $A_{n+1}' = [a,1]$ for some $a \in [0,1]$. Indeed, if $A_{n+1}'$ is empty, then this trivially holds. Otherwise, $A_{n+1}' = [a,b]$ with $a < b < 1$. Let $i$ be an agent who gets the rightmost piece, i.e., $A_i = [c,1]$ for some $c \geq b$. We distinguish between the following two cases:
\begin{itemize}
    \item $\phi(|[c,1] \cap [1/2,1]|) > 3\eps/2$: In that case, it must be that $c < 2/3$, and thus $b < 2/3$, which implies $v_{n+1}'(a,b) = (2/3) \cdot \phi(|[a,b] \cap [1/2,1]|) = 0$. Since $v_{n+1}'(c,1) = (2/3) \cdot \phi(|[c,1] \cap [1/2,1]|) > \eps$, this contradicts the $\eps$-envy-freeness of the division.
    \item $\phi(|[c,1] \cap [1/2,1]|) \leq 3\eps/2$: In that case, $c \geq 1/2$ and thus $v_i'(c,1) = v_{n+1}'(c,1) = (2/3) \cdot \phi(|[c,1] \cap [1/2,1]|) \leq \eps$. By construction of $\phi$, at most one of $v_{n+1}'(a,b)$ and $v_{n+1}'(c,1)$ can be strictly positive. Thus, if $v_{n+1}'(a,b) > \eps$, then $v_{n+1}'(c,1) = 0$. It follows that $v_i'(c,1) = 0$ and $v_i'(a,b) > \eps$, a contradiction to $\eps$-envy-freeness. As a result, we must also have $v_{n+1}'(a,b) \leq \eps$. By $\eps$-envy-freeness we obtain that $v_{n+1}'(A_j) \leq 2\eps$ and $v_i'(A_j) \leq 2\eps$ for all $j \in [n+1]$. As a result, assigning $A_i$ to agent $n+1$, and $A_{n+1}$ to agent $i$ yields a $2\eps$-envy-free allocation in which agent $n+1$ has the rightmost piece.
\end{itemize}

We can thus assume without loss of generality that we have an $\eps$-envy-free allocation $A_1', \dots, A_n',$ $A_{n+1}'$ with respect to $v_1', \dots, v_{n+1}'$, where $A_{n+1}'$ is the rightmost piece. We define the allocation $A_1, \dots, A_n$ by letting $A_i := 2 \cdot (A_i' \cap [0,1/2])$. Note that this is indeed a partition of $[0,1]$, because $A_{n+1}' \subset [1/2,1]$. Otherwise, the rightmost piece $A_{n+1}'$ would have value at least $2/3$ to all agents, while any other piece would have value at most $1/3$.

Finally, we argue that the allocation $A_1, \dots, A_n$ must be $6\eps$-envy-free with respect to $v_1, \dots, v_n$. First of all, note that $v_{n+1}'(A_j') \leq \eps$ for all $j \in [n]$. Otherwise, if $v_{n+1}'(A_j') > \eps$ for some $j$, then $v_{n+1}'(A_{n+1}') = 0$, and agent $n+1$ would be envious. Thus, it follows that $v_i'(A_j') \in [(1/3) \cdot v_i(A_j), (1/3) \cdot v_i(A_j) + \eps]$ for all $i,j \in [n]$. As a result, for all $i,j \in [n]$ we have
$$v_i(A_i) \geq 3(v_i'(A_i') - \eps) \geq 3(v_i'(A_j') - 2\eps) \geq v_i(A_j) - 6\eps$$
as desired.

For the setting of identical valuations we proceed similarly. Given a valuation $v$ shared by $n$ agents, we define
$$v'(a,b) := \frac{1}{3} v\Big([2a,2b] \cap [0,1]\Big) + \frac{2}{3} \phi\Big(\big|[a,b] \cap [1/2,1]\big|\Big)$$
to be shared by $n+1$ agents. Consider an $\eps$-envy-free allocation $A_1', \dots, A_n', A_{n+1}'$, and rename the agents if needed such that $A_{n+1}'$ is the rightmost piece. As above, we can argue that $A_{n+1}' \subset [1/2,1]$, and define an allocation $A_1, \dots, A_n$ of $[0,1]$. Note that $v'(A_{n+1}') = (2/3) \cdot \phi(|A_{n+1}' \cap [1/2,1]|)$. If $v'(A_{n+1}') > 0$, then $v'(A_j') = (1/3) \cdot v(A_j)$ for all $j \in [n]$, and we argue as above. If $v'(A_{n+1}') = 0$, then $v'(A_j') \leq \eps$ for all $j \in [n]$, and thus $v(A_j) \leq 3\eps$, and the division is $3\eps$-envy-free.
\end{proof}

\section{Communication Protocols}\label{app:communication}

Brânzei and Nisan~\cite{BranzeiN19-cake-communication} proved that for three agents with additive valuations an $\eps$-envy-free allocation can be found using $O(\log(1/\eps))$ communication. This is achieved by a reduction to the \emph{monotone-crossing} problem, which they introduce and for which they prove a $O(\log(1/\eps))$ communication bound. Their approach can easily be extended to agents with monotone valuations.

\begin{lemma}
For three agents with monotone valuations, there exists a $O(\log(1/\eps))$ communication protocol that finds an $\eps$-envy-free connected allocation.
\end{lemma}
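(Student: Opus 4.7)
The plan is to follow the template of Brânzei and Nisan~\cite{BranzeiN22-cake-query} for three-agent additive cake-cutting and extend it to monotone valuations by combining the preprocessing of \cref{sec:algo-prepro} with a three-agent analog of Condition A from \cref{sec:proof-existence}. Each agent first applies the preprocessing locally to her own valuation (no communication), obtaining an $\eps$-strongly-hungry, piecewise-linear-on-the-$\eps$-grid version that allows her to answer exact cut queries against her own valuation using $O(\log(1/\eps))$ work to locate the relevant grid cell and $O(1)$ per query thereafter.

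Second, I would introduce the three-agent analog of Condition A: a partition into three pieces satisfies the invariant if Agent 1 is indifferent between her two favorite pieces and the remaining piece $k$ is weakly preferred by at least one of Agents 2 and 3. A three-agent specialization of \cref{lem:continuous-path}---substantially simpler than the four-agent case since Condition B never arises---shows that from Agent 1's three-equipartition one can continuously traverse a trail $\gamma_k^A$, parameterized by Agent 1's common value $\alpha$ for her two favorite pieces, until an envy-free partition is reached. There are three possible choices of the remaining piece $k$, yielding three trails; on each, the cuts $(\ell(\alpha), m(\alpha))$ are uniquely determined by $\alpha$ and Agent 1's valuation alone.

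Third, I would reduce the task of locating the envy-free point on a trail to a monotone-crossing instance. Define, for $i \in \{2, 3\}$, the quantity $h_i(\alpha) := v_i(\text{piece } k) - \max_{t \neq k} v_i(\text{piece } t)$ evaluated at $\gamma_k^A(\alpha)$. Because piece $k$ strictly shrinks and the other two strictly grow as $\alpha$ increases, monotonicity of $v_i$ makes each $h_i$ strictly decreasing in $\alpha$. The envy-free terminal point of the trail is exactly the largest $\alpha$ at which $\max(h_2(\alpha), h_3(\alpha)) \geq 0$, equivalently the zero-crossing of some $h_i$ on the correct trail.

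The main obstacle is realizing the binary search on $\alpha$ in $O(\log(1/\eps))$ total communication rather than the na\"ive $O(\log^2(1/\eps))$. A straightforward protocol would have Agent 1 broadcast $(\ell(\alpha), m(\alpha))$ at $\Theta(\log(1/\eps))$ bits per binary-search step, giving $O(\log^2(1/\eps))$ bits in total. The fix is to invoke the monotone-crossing protocol of~\cite{BranzeiN22-cake-query}: rather than sending full cut positions, the three parties jointly refine $\alpha$ bit-by-bit, interleaving Agent~1's one-bit refinements of her cut positions with one-bit sign responses from Agents 2 and 3 about their own $h_i$. Since the $h_i$ are monotone and Lipschitz after preprocessing, this converges in $O(\log(1/\eps))$ rounds with $O(1)$ bits per round; iterating over the (at most three) relevant trails, with a constant-bit negotiation to identify the correct one, preserves the $O(\log(1/\eps))$ bound.
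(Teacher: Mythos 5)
Your high-level template (preprocess, follow a trail parameterized by Agent 1's common value $\alpha$, locate the terminal point via a monotone-crossing subroutine) is in the right spirit, and the preprocessing is sound. However, the step that makes the whole protocol work---the claim that each $h_i$ is strictly decreasing in $\alpha$ because ``piece $k$ strictly shrinks and the other two strictly grow''---is false on two of your three trails. For the trail with $k$ the \emph{left} piece, the cuts on $\gamma_k^A(\alpha)$ satisfy $v_1(\ell,m)=v_1(m,1)=\alpha$, and as $\alpha$ increases both $\ell$ and $m$ decrease: the middle piece $[\ell,m]$ gains cake on its left while losing on its right, so it is \emph{not} inclusion-monotone, and $v_i(\ell,m)$ need not move monotonically for a general monotone (or even additive) $v_i$. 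You are implicitly conflating ``Agent 1's value for the piece increases'' with ``the piece grows as a set''; these coincide only for pieces anchored at an endpoint of the cake. Consequently $\max_{t\neq k}v_i(P_t)$, and hence $h_i$, can be non-monotone in $\alpha$ on the edge trails, the monotone-crossing reduction does not apply there, and your binary search has no termination guarantee. Since the contested piece at Agent 1's equipartition can perfectly well be an end piece, you are forced onto exactly the trails where your argument breaks; a ``constant-bit negotiation'' to pick the trail does not fix this.

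The paper avoids the issue entirely: it does not start from Agent 1's equipartition. In a preliminary step costing $O(\log(1/\eps))$ bits, every agent broadcasts its own three-way equipartition, and a short case analysis on the relative order of these cuts produces a pair $(a,b)$ at which some (renamed) agent is indifferent between all three pieces while another agent weakly prefers the \emph{middle} piece $[a,b]$ (or a direct envy-free allocation falls out). This guarantees that the protocol always runs on a middle trail, where both outer pieces are anchored and genuinely grow. It then parameterizes by the left cut $\ell$ and exhibits an explicit one-variable monotone crossing: Agent 1 holds a decreasing function $r(\ell)$ (defined by $v_1(0,\ell)=v_1(r(\ell),1)$) and Agent 2 holds an increasing function $r'(\ell)$ (defined by $v_2(\ell,r')=\max\{v_2(0,\ell),v_2(r',1)\}$), and invokes the Brânzei--Nisan monotone-crossing protocol on this pair. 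Your sketch of ``interleaving Agent 1's one-bit refinements with one-bit sign responses about $h_i$'' does not identify which pair of one-variable monotone functions is being crossed or who holds them; that identification is precisely where the $O(\log(1/\eps))$ bound comes from, and without the reduction to a middle trail it is not available.
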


For completeness, we provide a proof sketch.

\begin{proof}[Proof sketch]
In the first step of the protocol, our goal is to find positions $a < b$ such that one agent is indifferent between the three pieces resulting from cutting at $a$ and $b$, while at least one of the other two agents prefers the middle piece $[a,b]$. In the second step, given such $a$ and $b$, we then use the monotone-crossing problem to obtain a solution.

For the first step, every agent $i$ communicates positions $\ell_i$ and $r_i$ such that $v_i(0,\ell_i) = v_i(\ell_i,r_i) = v_i(r_i,1)$. Using some simple preprocessing we can make sure that those positions are unique and that they can be described using $O(\log(1/\eps))$ bits. Now one of the following two cases must occur:
\begin{itemize}
    \item There exist two agents, say agents 1 and 2, such that $\ell_1 < \ell_2$, but $r_2 < r_1$. In that case, we let $a := \ell_1$ and $b := r_1$. Note that agent 1 is indifferent between the three pieces when we cut at $a$ and $b$. Furthermore, by monotonicy of $v_2$, agent $2$ prefers the middle piece $[a,b]$. Thus, $a$ and $b$ satisfy the desiderata and we can move to the second step of the protocol.
    \item By renaming the agents if needed, we have $\ell_1 \leq \ell_2 \leq \ell_3$ and $r_1 \leq r_2 \leq r_3$. In that case, we let $a := \ell_2$ and $b := r_2$. Clearly, agent 2 is indifferent between the three pieces. Note that by monotonicity of $v_1$, agent 1 likes the leftmost piece at least as much as the rightmost piece. Similarly, agent 3 likes the rightmost piece at least as much as the leftmost piece. If one of them prefers the middle piece, $a$ and $b$ satisfy the desiderata and we proceed with the second step. Otherwise, giving the leftmost piece to agent 1, the rightmost piece to agent 3, and the middle piece to agent 2 yields an envy-free allocation.
\end{itemize}

Given $a$ and $b$ satisfying the desiderata, the second step of the protocol proceeds as follows. After renaming the agents if needed, we have that agent 1 is indifferent between the three pieces, while agents 2 prefers the middle piece. Let $c_1$ denote the midpoint of the valuation of agent 1, i.e., $v_1(0,c) = v_1(c,1)$, and $c_2$ denote the midpoint of the valuation of agent 1, i.e., $v_2(0,c) = v_2(c,1)$. Note that $c_1, c_2 \in [a,b]$.

For any $\ell \in [a,c_1]$, let $r(\ell)$ denote the cut satisfying $v_1(0,\ell) = v_1(r(\ell),1)$. Note that $r(a) = b$, $r(c_1) = c_1$, and $r(\cdot)$ decreases monotonically. Similarly, for any $\ell \in [a,c_2]$, let $r'(\ell)$ denote the cut satisfying $v_2(\ell,r'(\ell)) = \max\{v_2(0,\ell), v_2(r'(\ell),1)\}$. It can be shown that $r'(a) \leq b$, $r'(c_2) = 1$, and $r'(\cdot)$ increases monotonically. If $c_2 \leq c_1$, then $r(\cdot)$ and $r'(\cdot)$ must cross for some $\ell \in [a,c_2]$. If $c_1 \leq c_2$, then $r(\cdot)$ and $r'(\cdot)$ must cross for some $\ell \in [a,c_1]$. In either case, we can find such a crossing point (approximately) using $O(\log(1/\eps))$ communication by reducing to the monotone-crossing problem. We refer to \cite{BranzeiN19-cake-communication} for the details. It is then easy to see that if $r(\ell) \approx r'(\ell)$ the division $(\ell, r(\ell))$ must be $\eps$-envy-free.
\end{proof}

Our algorithm for four monotone agents from \cref{sec:algo}, together with the monotone-crossing problem of \cite{BranzeiN19-cake-communication}, yield the following result.

\begin{lemma}
For four agents with monotone valuations, there exists a $O(\log^2(1/\eps))$ communication protocol that finds an $\eps$-envy-free connected allocation.
\end{lemma}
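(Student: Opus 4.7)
The plan is to simulate the algorithm of \cref{sec:algo} in the communication model, using the monotone-crossing protocol of \cite{BranzeiN19-cake-communication} as a subroutine to avoid the outer binary searches that yield the third $\log(1/\eps)$ factor in the query bound. The outer structure is unchanged: we perform binary search on the value $\val$ attained by Agent 1's favorite pieces, maintaining $\vallow$ (where the invariant holds) and $\valhigh$ (where it does not), until $\valhigh - \vallow \leq \eps^{\Theta(1)}$. This outer loop contributes only $O(\log(1/\eps))$ iterations; in each iteration we need a subroutine that, given $\val$, decides whether Condition A or B holds at value $\val$ using only $O(\log(1/\eps))$ bits of communication.

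To check Condition A for a given piece $k$, we need to determine the unique division in which Agent~1 is indifferent between the three pieces not equal to $k$. This can be done sequentially with three ``$\eps$-approximate'' cut queries made by Agent~1, each of which she can communicate to all other parties using $O(\log(1/\eps))$ bits (by the $1$-Lipschitz assumption and the preprocessing of \cref{sec:algo-prepro}); then each agent announces whether it weakly prefers piece $k$, using $O(1)$ additional bits. Condition B in the adjacent-pieces case (Case~1 in the proof of \cref{lem:algo-queries}) is analogous: two cut queries fix $\ell$ and $r$, and then the agent $i \in \{2,3,4\}$ internally computes the $m$ that equalizes her values for $[\ell,m]$ and $[m,r]$ and broadcasts it.

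The main obstacle, and the only place where a naive simulation would cost $O(\log^2(1/\eps))$ per check, is in Cases~2 and~3 of Condition~B: there we must locate a cut, say $\ell$, such that some function $m(\ell)$ (and possibly $r(\ell)$) of Agent~1 combines with $v_i(0,\ell)$ to satisfy an equality involving Agent~$i$. The key point is that, as observed in the proof of \cref{lem:algo-queries}, on the $\eps$-grid the map $\ell \mapsto v_i(0,\ell)$ is monotonically increasing while $\ell \mapsto v_i(m(\ell),r)$ (or the analogous expression in Case~3) is monotonically decreasing in $\ell$, by monotonicity of the valuations and of Agent~1's implicit cuts. Thus the equation $v_i(0,\ell) = v_i(m(\ell),r)$ (respectively, its Case~3 analogue) is exactly an instance of the monotone-crossing problem between two functions, where one function is computable locally by Agent~1 (she publishes the induced value with $O(\log(1/\eps))$ bits on demand) and the other is computable locally by Agent~$i$. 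Applying the $O(\log(1/\eps))$-communication protocol for monotone-crossing from \cite{BranzeiN19-cake-communication} then locates the desired $\ell$ (and hence the full division) with $O(\log(1/\eps))$ total communication.

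Putting the pieces together, each of the $O(1)$ cases of Conditions A and B can be tested in $O(\log(1/\eps))$ communication, so one outer iteration costs $O(\log(1/\eps))$ bits, and the whole binary search costs $O(\log^2(1/\eps))$ bits. The correctness analysis is identical to that of \cref{sec:algo}: at termination the returned division satisfies the invariant at a value $\vallow$ within $\eps^{\Theta(1)}$ of the value at which an envy-free division lies on the continuous path of \cref{lem:continuous-path}, and the Lipschitz bound on that path together with $1$-Lipschitzness of the valuations gives $\eps$-envy-freeness.
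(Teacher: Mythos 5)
Your proof sketch is correct and follows essentially the same approach as the paper's: you simulate the algorithm of \cref{sec:algo} with Agent~1 driving the binary search over $\val$, you let Agent~1 broadcast cuts for the cheap cases, and you invoke the monotone-crossing protocol of \cite{BranzeiN19-cake-communication} to replace the inner binary searches in Cases~2 and~3 of Condition~B, obtaining $O(\log(1/\eps))$ communication per outer iteration and $O(\log^2(1/\eps))$ total.
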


\begin{proof}[Proof sketch]
We show that the algorithm presented in \cref{sec:algo} can be simulated by a communication protocol with total cost only $O(\log^2(1/\eps))$. We think of Agent 1 as running the algorithm and communicating with the other agents whenever needed.

For the first step of the algorithm, it suffices for Agent 1 to find the equipartition of the cake into four equal parts according to its own valuation, and no communication is needed for that.

For the second step, Agent 1 performs $O(\log(1/\eps))$ steps of binary search. Every step of binary search consists in checking whether Condition A or B holds at some value $\alpha$. We will argue below that this only requires $O(\log(1/\eps))$ communication. Thus, the total cost of the second step is $O(\log^2(1/\eps))$.

Finally, in the third step, Agent 1 just needs to output a division that satisfies Condition A or B at some given value $\alpha$. The arguments below will show that this only requires $O(\log(1/\eps))$ communication.

It remains to argue that given some value $\alpha$, we can, using only $O(\log(1/\eps))$ communication, check whether Condition A or B holds at value $\alpha$, and, if so, output a division that satisfies it. This can be shown by considering all the possible cases as in the proof of \cref{lem:algo-queries}, and checking that each of them can be handled using only $O(\log(1/\eps))$ communication.

For Condition A, Agent 1 can determine the positions of all the cuts by itself, then use $O(\log(1/\eps))$ communication to share these positions with the other agents, who can then provide all the necessary information to decide whether the condition holds using $O(1)$ communication.

For Condition B, the challenge is to determine the positions of the cuts using only $O(\log(1/\eps))$ communication. Once these positions are known, $O(1)$ communication is again sufficient to check the condition. Below, we use the same notation as in the proof of \cref{lem:algo-queries} to argue that these cuts can be found using only $O(\log(1/\eps))$ communication.

If the pieces $k$ and $k'$ are adjacent, then Agent 1 can figure out the positions of two cuts, and inform Agent $i$. Agent $i$ can then figure out the position of the last cut and inform everyone. This uses $O(\log(1/\eps))$ communication.

If there is one piece between pieces $k$ and $k'$, then Agent 1 can figure out the position of one cut, and inform Agent $i$. The positions of the other two cuts can then be found using $O(\log(1/\eps))$ communication by reducing to the monotone-crossing problem~\cite{BranzeiN19-cake-communication}. We omit the details.

Finally, if $k$ is the leftmost piece and $k'$ is the rightmost piece, then Agent 1 and Agent $i$ can find the positions of the leftmost and rightmost cut using $O(\log(1/\eps))$ communication again by reducing to the monotone-crossing problem~\cite{BranzeiN19-cake-communication}. We omit the details. The middle cut can then be found by Agent 1.
\end{proof}

\bigskip
\subsubsection*{Acknowledgments}
We would like to thank the reviewers for useful comments and suggestions, and in particular one reviewer for suggesting a simplification of the proof in Section 4. We are also grateful to Simina Brânzei for helpful discussions. The first author was supported by the Swiss State Secretariat for Education, Research and Innovation (SERI) under contract number MB22.00026. The second author was supported by NSF CCF-2112824, and a David and Lucile Packard Fellowship.

\bibliographystyle{alphaurl}
\bibliography{references}

\newcommand{\etalchar}[1]{$^{#1}$}
\begin{thebibliography}{ABKR19}

\bibitem[ABKR19]{Arunachaleswaran19}
Eshwar~Ram Arunachaleswaran, Siddharth Barman, Rachitesh Kumar, and Nidhi
  Rathi.
\newblock Fair and efficient cake division with connected pieces.
\newblock In {\em Proceedings of the 15th International Conference on Web and
  Internet Economics (WINE)}, pages 57--70, 2019.
\newblock \href {https://doi.org/10.1007/978-3-030-35389-6\_5}
  {\path{doi:10.1007/978-3-030-35389-6\_5}}.

\bibitem[ADH13]{AumannDH13}
Yonatan Aumann, Yair Dombb, and Avinatan Hassidim.
\newblock Computing socially-efficient cake divisions.
\newblock In {\em Proceedings of the 12th International conference on
  Autonomous Agents and Multi-Agent Systems (AAMAS)}, pages 343--350, 2013.
\newblock URL: \url{https://dl.acm.org/doi/10.5555/2484920.2484976}.

\bibitem[ADH14]{AumannDH14}
Yonatan Aumann, Yair Dombb, and Avinatan Hassidim.
\newblock Auctioning a cake: truthful auctions of heterogeneous divisible
  goods.
\newblock In {\em Proceedings of the 13th International conference on
  Autonomous Agents and Multi-Agent Systems (AAMAS)}, pages 1045--1052, 2014.
\newblock URL: \url{https://dl.acm.org/doi/10.5555/2615731.2617412}.

\bibitem[AM20]{AzizM20}
Haris Aziz and Simon Mackenzie.
\newblock A bounded and envy-free cake cutting algorithm.
\newblock {\em Commun. {ACM}}, 63(4):119--126, 2020.
\newblock \href {https://doi.org/10.1145/3382129} {\path{doi:10.1145/3382129}}.

\bibitem[BB04]{barbanel2004cake}
Julius~B. Barbanel and Steven~J. Brams.
\newblock Cake division with minimal cuts: envy-free procedures for three
  persons, four persons, and beyond.
\newblock {\em Mathematical Social Sciences}, 48(3):251--269, 2004.
\newblock \href {https://doi.org/10.1016/j.mathsocsci.2004.03.006}
  {\path{doi:10.1016/j.mathsocsci.2004.03.006}}.

\bibitem[BCH{\etalchar{+}}17]{BeiCHTW17}
Xiaohui Bei, Ning Chen, Guangda Huzhang, Biaoshuai Tao, and Jiajun Wu.
\newblock Cake cutting: Envy and truth.
\newblock In {\em Proceedings of the 26th International Joint Conference on
  Artificial Intelligence (IJCAI)}, pages 3625--3631, 2017.
\newblock \href {https://doi.org/10.24963/ijcai.2017/507}
  {\path{doi:10.24963/ijcai.2017/507}}.

\bibitem[BK23]{BarmanK22-approx-EF-cake}
Siddharth Barman and Pooja Kulkarni.
\newblock Approximation algorithms for envy-free cake division with connected
  pieces.
\newblock In {\em Proceedings of the 50th International Colloquium on Automata,
  Languages, and Programming (ICALP)}, pages 16:1--16:19, 2023.
\newblock \href {https://doi.org/10.4230/LIPIcs.ICALP.2023.16}
  {\path{doi:10.4230/LIPIcs.ICALP.2023.16}}.

\bibitem[BN19]{BranzeiN19-cake-communication}
Simina Br\^{a}nzei and Noam Nisan.
\newblock Communication complexity of cake cutting.
\newblock In {\em Proceedings of the 2019 ACM Conference on Economics and
  Computation (EC)}, page 525, 2019.
\newblock \href {https://doi.org/10.1145/3328526.3329644}
  {\path{doi:10.1145/3328526.3329644}}.

\bibitem[BN22]{BranzeiN22-cake-query}
Simina Br\^{a}nzei and Noam Nisan.
\newblock The query complexity of cake cutting.
\newblock In {\em Proceedings of the 36th Conference on Neural Information
  Processing Systems (NeurIPS)}, pages 37905--37919, 2022.
\newblock URL:
  \url{https://proceedings.neurips.cc/paper_files/paper/2022/hash/f7a7bb369e48f10e85fce85b67d8c516-Abstract-Conference.html}.

\bibitem[BR20]{BR20}
Yakov Babichenko and Aviad Rubinstein.
\newblock Communication complexity of {N}ash equilibrium in potential games
  (extended abstract).
\newblock In {\em Proceedings of the 61st IEEE Symposium on Foundations of
  Computer Science (FOCS)}, pages 1439--1445, 2020.
\newblock \href {https://doi.org/10.1109/FOCS46700.2020.00137}
  {\path{doi:10.1109/FOCS46700.2020.00137}}.

\bibitem[BR22]{BarmanR22}
Siddharth Barman and Nidhi Rathi.
\newblock Fair cake division under monotone likelihood ratios.
\newblock {\em Math. Oper. Res.}, 47(3):1875--1903, 2022.
\newblock \href {https://doi.org/10.1287/moor.2021.1192}
  {\path{doi:10.1287/moor.2021.1192}}.

\bibitem[Br{\^{a}}15]{Branzei15}
Simina Br{\^{a}}nzei.
\newblock A note on envy-free cake cutting with polynomial valuations.
\newblock {\em Inf. Process. Lett.}, 115(2):93--95, 2015.
\newblock \href {https://doi.org/10.1016/j.ipl.2014.07.005}
  {\path{doi:10.1016/j.ipl.2014.07.005}}.

\bibitem[BT96]{brams1996fair}
Steven~J. Brams and Alan~D. Taylor.
\newblock {\em Fair Division: From cake-cutting to dispute resolution}.
\newblock Cambridge University Press, 1996.

\bibitem[BTZ97]{brams1997moving}
Steven~J. Brams, Alan~D. Taylor, and William~S. Zwicker.
\newblock A moving-knife solution to the four-person envy-free cake-division
  problem.
\newblock {\em Proceedings of the American Mathematical Society},
  125(2):547--554, 1997.
\newblock \href {https://doi.org/10.1090/S0002-9939-97-03614-9}
  {\path{doi:10.1090/S0002-9939-97-03614-9}}.

\bibitem[CD09]{CD09}
Xi~Chen and Xiaotie Deng.
\newblock On the complexity of 2d discrete fixed point problem.
\newblock {\em Theor. Comput. Sci.}, 410(44):4448--4456, 2009.
\newblock \href {https://doi.org/10.1016/j.tcs.2009.07.052}
  {\path{doi:10.1016/j.tcs.2009.07.052}}.

\bibitem[CLPP13]{ChenLPP13}
Yiling Chen, John~K. Lai, David~C. Parkes, and Ariel~D. Procaccia.
\newblock Truth, justice, and cake cutting.
\newblock {\em Games Econ. Behav.}, 77(1):284--297, 2013.
\newblock \href {https://doi.org/10.1016/j.geb.2012.10.009}
  {\path{doi:10.1016/j.geb.2012.10.009}}.

\bibitem[DQS12]{DQS12}
Xiaotie Deng, Qi~Qi, and Amin Saberi.
\newblock Algorithmic solutions for envy-free cake cutting.
\newblock {\em Oper. Res.}, 60(6):1461--1476, 2012.
\newblock \href {https://doi.org/10.1287/opre.1120.1116}
  {\path{doi:10.1287/opre.1120.1116}}.

\bibitem[FGHS22]{FearnleyGHS22-gradient}
John Fearnley, Paul Goldberg, Alexandros Hollender, and Rahul Savani.
\newblock The complexity of gradient descent: {CLS} = {PPAD} $\cap$ {PLS}.
\newblock {\em Journal of the ACM}, 70(1):1--74, 2022.
\newblock \href {https://doi.org/10.1145/3568163} {\path{doi:10.1145/3568163}}.

\bibitem[GH21]{GoldbergH21-hairy-ball}
Paul~W. Goldberg and Alexandros Hollender.
\newblock The {H}airy {B}all problem is {PPAD}-complete.
\newblock {\em Journal of Computer and System Sciences}, 122:34--62, 2021.
\newblock \href {https://doi.org/10.1016/j.jcss.2021.05.004}
  {\path{doi:10.1016/j.jcss.2021.05.004}}.

\bibitem[GHS20]{GoldbergHS20-cake}
Paul~W. Goldberg, Alexandros Hollender, and Warut Suksompong.
\newblock Contiguous cake cutting: Hardness results and approximation
  algorithms.
\newblock {\em Journal of Artificial Intelligence Research}, 69:109--141, 2020.
\newblock \href {https://doi.org/10.1613/jair.1.12222}
  {\path{doi:10.1613/jair.1.12222}}.

\bibitem[GKP21]{GSP21}
Anat Ganor, {Karthik {C. S.}}, and D{\"{o}}m{\"{o}}t{\"{o}}r
  P{\'{a}}lv{\"{o}}lgyi.
\newblock On communication complexity of fixed point computation.
\newblock {\em {ACM} Trans. Economics and Comput.}, 9(4):25:1--25:27, 2021.
\newblock \href {https://doi.org/10.1145/3485004} {\path{doi:10.1145/3485004}}.

\bibitem[GP18]{GP18}
Mika G{\"{o}}{\"{o}}s and Toniann Pitassi.
\newblock Communication lower bounds via critical block sensitivity.
\newblock {\em {SIAM} J. Comput.}, 47(5):1778--1806, 2018.
\newblock \href {https://doi.org/10.1137/16M1082007}
  {\path{doi:10.1137/16M1082007}}.

\bibitem[GR18]{GR18}
Mika G{\"{o}}{\"{o}}s and Aviad Rubinstein.
\newblock Near-optimal communication lower bounds for approximate {N}ash
  equilibria.
\newblock In {\em Proceedings of the 59th IEEE Symposium on Foundations of
  Computer Science (FOCS)}, pages 397--403, 2018.
\newblock \href {https://doi.org/10.1109/FOCS.2018.00045}
  {\path{doi:10.1109/FOCS.2018.00045}}.

\bibitem[GS58]{gamow1958puzzle}
George Gamow and Marvin Stern.
\newblock {\em Puzzle-math}.
\newblock Viking Press, 1958.

\bibitem[HPV89]{HPV89}
Michael~D. Hirsch, Christos~H. Papadimitriou, and Stephen~A. Vavasis.
\newblock Exponential lower bounds for finding brouwer fix points.
\newblock {\em J. Complex.}, 5(4):379--416, 1989.
\newblock \href {https://doi.org/10.1016/0885-064X(89)90017-4}
  {\path{doi:10.1016/0885-064X(89)90017-4}}.

\bibitem[MN12]{MayaN12}
Avishay Maya and Noam Nisan.
\newblock Incentive compatible two player cake cutting.
\newblock In {\em Proceedings of the 8th International Workshop on Internet and
  Network Economics (WINE)}, pages 170--183, 2012.
\newblock \href {https://doi.org/10.1007/978-3-642-35311-6\_13}
  {\path{doi:10.1007/978-3-642-35311-6\_13}}.

\bibitem[MT10]{MosselT10}
Elchanan Mossel and Omer Tamuz.
\newblock Truthful fair division.
\newblock In {\em Proceedings of the 3rd International Symposium on Algorithmic
  Game Theory (SAGT)}, pages 288--299, 2010.
\newblock \href {https://doi.org/10.1007/978-3-642-16170-4\_25}
  {\path{doi:10.1007/978-3-642-16170-4\_25}}.

\bibitem[OS22]{OrtegaS22}
Josu{\'{e}} Ortega and Erel Segal{-}Halevi.
\newblock Obvious manipulations in cake-cutting.
\newblock {\em Soc. Choice Welf.}, 59(4):969--988, 2022.
\newblock \href {https://doi.org/10.1007/s00355-022-01416-4}
  {\path{doi:10.1007/s00355-022-01416-4}}.

\bibitem[Pap94]{Papadimitriou94}
Christos~H. Papadimitriou.
\newblock On the complexity of the parity argument and other inefficient proofs
  of existence.
\newblock {\em J. Comput. Syst. Sci.}, 48(3):498--532, 1994.
\newblock \href {https://doi.org/10.1016/S0022-0000(05)80063-7}
  {\path{doi:10.1016/S0022-0000(05)80063-7}}.

\bibitem[Pro13]{procaccia2013cake}
Ariel~D. Procaccia.
\newblock Cake cutting: Not just child's play.
\newblock {\em Communications of the ACM}, 56(7):78--87, 2013.
\newblock \href {https://doi.org/10.1145/2483852.2483870}
  {\path{doi:10.1145/2483852.2483870}}.

\bibitem[PW17]{ProcacciaW17}
Ariel~D. Procaccia and Junxing Wang.
\newblock A lower bound for equitable cake cutting.
\newblock In {\em Proceedings of the 18th Conference on Economics and
  Computation (EC)}, pages 479--495, 2017.
\newblock \href {https://doi.org/10.1145/3033274.3085107}
  {\path{doi:10.1145/3033274.3085107}}.

\bibitem[RW98]{robertson1998cake}
Jack Robertson and William Webb.
\newblock {\em Cake-cutting algorithms: Be fair if you can}.
\newblock CRC Press, 1998.

\bibitem[RW16]{RW16}
Tim Roughgarden and Omri Weinstein.
\newblock On the communication complexity of approximate fixed points.
\newblock In {\em Proceedings of the 57th IEEE Symposium on Foundations of
  Computer Science (FOCS)}, pages 229--238, 2016.
\newblock \href {https://doi.org/10.1109/FOCS.2016.32}
  {\path{doi:10.1109/FOCS.2016.32}}.

\bibitem[She14]{Sherstov14}
Alexander~A. Sherstov.
\newblock Communication lower bounds using directional derivatives.
\newblock {\em J. {ACM}}, 61(6):34:1--34:71, 2014.
\newblock \href {https://doi.org/10.1145/2629334} {\path{doi:10.1145/2629334}}.

\bibitem[SN19]{Segal-HaleviN19}
Erel Segal{-}Halevi and Shmuel Nitzan.
\newblock Fair cake-cutting among families.
\newblock {\em Soc. Choice Welf.}, 53(4):709--740, 2019.
\newblock \href {https://doi.org/10.1007/s00355-019-01210-9}
  {\path{doi:10.1007/s00355-019-01210-9}}.

\bibitem[SS21]{Segal-HaleviS21}
Erel Segal{-}Halevi and Warut Suksompong.
\newblock How to cut a cake fairly: {A} generalization to groups.
\newblock {\em Am. Math. Mon.}, 128(1):79--83, 2021.
\newblock \href {https://doi.org/10.1080/00029890.2021.1835338}
  {\path{doi:10.1080/00029890.2021.1835338}}.

\bibitem[SS23]{SegalHaleviS23}
Erel Segal{-}Halevi and Warut Suksompong.
\newblock Cutting a cake fairly for groups revisited.
\newblock {\em Am. Math. Mon.}, 130(3):203--213, 2023.
\newblock \href {https://doi.org/10.1080/00029890.2022.2153566}
  {\path{doi:10.1080/00029890.2022.2153566}}.

\bibitem[Ste48]{Steinhaus1948-fair-division}
Hugo Steinhaus.
\newblock The problem of fair division.
\newblock {\em Econometrica}, 16(1):101--104, 1948.
\newblock URL: \url{https://www.jstor.org/stable/1914289}.

\bibitem[Str80]{Stromquist80-existence}
Walter Stromquist.
\newblock How to cut a cake fairly.
\newblock {\em The American Mathematical Monthly}, 87(8):640--644, 1980.
\newblock \href {https://doi.org/10.1080/00029890.1980.11995109}
  {\path{doi:10.1080/00029890.1980.11995109}}.

\bibitem[Str08]{stromquist2008envy}
Walter Stromquist.
\newblock Envy-free cake divisions cannot be found by finite protocols.
\newblock {\em The Electronic Journal of Combinatorics}, 15(1):R11, 2008.
\newblock \href {https://doi.org/10.37236/735} {\path{doi:10.37236/735}}.

\bibitem[Su99]{Su99-rental-harmony}
Francis~Edward Su.
\newblock Rental harmony: {S}perner's lemma in fair division.
\newblock {\em The American Mathematical Monthly}, 106(10):930--942, 1999.
\newblock \href {https://doi.org/10.1080/00029890.1999.12005142}
  {\path{doi:10.1080/00029890.1999.12005142}}.

\bibitem[Tao22]{Tao22}
Biaoshuai Tao.
\newblock On existence of truthful fair cake cutting mechanisms.
\newblock In {\em Proceedings of the 23rd ACM Conference on Economics and
  Computation (EC)}, pages 404--434, 2022.
\newblock \href {https://doi.org/10.1145/3490486.3538321}
  {\path{doi:10.1145/3490486.3538321}}.

\bibitem[Woo80]{Woodall80-existence}
Douglas~R. Woodall.
\newblock Dividing a cake fairly.
\newblock {\em Journal of Mathematical Analysis and Applications},
  78(1):233--247, 1980.
\newblock \href {https://doi.org/10.1016/0022-247x(80)90225-5}
  {\path{doi:10.1016/0022-247x(80)90225-5}}.

\bibitem[WS07]{WoegingerS07-cake}
Gerhard~J. Woeginger and Jiří Sgall.
\newblock On the complexity of cake cutting.
\newblock {\em Discrete Optimization}, 4(2):213--220, 2007.
\newblock \href {https://doi.org/10.1016/j.disopt.2006.07.003}
  {\path{doi:10.1016/j.disopt.2006.07.003}}.

\end{thebibliography}

\end{document}